%% file: main.tex
\documentclass[manuscript]{acmart}
\usepackage{multirow}
\usepackage{tikz}
\theoremstyle{definition}

\newtheorem{prop}{Proposition}[section]

\newtheorem{fact}[prop]{Fact}
\newtheorem{remark}[prop]{Remark}

\input{macros}

\DeclareMathOperator{\Pred}{Pred}

\AtBeginDocument{%
  }

\setcopyright{cc}
\setcctype{by}
\acmJournal{TEAC}
\acmYear{2026} \acmVolume{1} \acmNumber{1} \acmArticle{}
\acmMonth{1} \acmDOI{10.1145/3839240}
\acmISBN{978-1-4503-XXXX-X/2018/06}

\title[Bridging the Gap Between Stable Marriage and Stable Roommates]{Bridging the Gap Between Stable Marriage and Stable Roommates: A Parameterized Algorithm for Optimal Stable Matchings}

\author{Christine T. Cheng}
\email{ccheng@uwm.edu}
\orcid{0000-0003-0524-5368}
\affiliation{%
  \institution{University of Wisconsin, Milwaukee}
  \city{Milwaukee}
  \state{WI}
  \country{USA}
}

\author{Will Rosenbaum}
\email{W.Rosenbaum@liverpool.ac.uk}
\orcid{0000-0002-7723-9090}
\affiliation{%
  \institution{University of Liverpool}
  \city{Liverpool}
  \country{UK}
}

\ccsdesc[500]{Theory of computation~Algorithmic game theory and mechanism design}
\ccsdesc[500]{Theory of computation~Discrete optimization}
\ccsdesc[500]{Theory of computation~Fixed parameter tractability}
\ccsdesc[500]{Mathematics of computing~Combinatorial optimization}
\ccsdesc[300]{Theory of computation~Problems, reductions and completeness}

\keywords{Stable Roommates, Stable Matchings, Optimal Stable Matching, Parameterized Algorithms, Fixed-parameter Tractable, Mirror Poset, Minimum Crossing Distance}

\received{20 February 2007}
\received[revised]{12 March 2009}
\received[accepted]{5 June 2009}

\begin{document}

\begin{abstract}
	In the Stable Roommates Problem (SR), a set of $2n$ agents rank one another in a linear order. The goal is to find a matching that is stable: one that has no pair of agents who mutually prefer each other over their assigned partners. We consider the problem of finding an {\it optimal} stable matching. Agents associate weights with each of their potential partners, and the goal is to find a stable matching that minimizes the sum of the associated weights. Efficient algorithms exist for finding an optimal stable matching in the Stable Marriage Problem (SM), but the problem is NP-hard for general SR instances.

	In this paper, we define a notion of structural distance between SR instances and SM instances, which we call the \emph{minimum crossing distance}. When an SR instance has  minimum crossing distance $0$, the instance is structurally equivalent to an SM instance, and this structure can be exploited to find an optimal stable matching efficiently. More generally, we show that when an SR instance has minimum crossing distance $k$, an optimal stable matching can be computed in time $2^{O(k)} n^{O(1)}$. Thus, the optimal stable matching problem is fixed-parameter tractable (FPT) with respect to minimum crossing distance.
\end{abstract}

\maketitle

\input{intro}

\input{SM}

\input{SR}
\input{MedianSemilattices}
\input{MaximalElements}
\input{min-crossings}
\input{conclusions}

\urlstyle{same}
\bibliographystyle{plainnat}
\bibliography{sources}

\appendix

\input{Appendix}

\input{mvc-hardness}

\end{document}

%% file: macros.tex
\newcommand{\dft}[1]{\textit{#1}}

\newcommand{\abs}[1]{\left|#1\right|}

\newcommand{\set}[1]{\left\{#1\right\}}

\newcommand{\sucht}{\,\middle|\,}

\newcommand{\calP}{\mathcal{P}}

\DeclareMathOperator{\mco}{MCO}

\renewcommand{\descriptionlabel}[1]{{\hspace\labelsep\normalfont\itshape#1}}

\let\mc\mathcal
\let\ol\overline
\let\sq\sqsubseteq


%% file: intro.tex
\section{Introduction}\label{sec:introduction}

In their seminal work, Gale and Shapley~\cite{Gale1962College} introduced the Stable Marriage Problem (SM), as well as its non-bipartite version,
the Stable Roommates Problem (SR). In SM, a set of agents is partitioned into two groups, traditionally referred to as men and women, where each agent ranks members of the opposite group in a strict linear order. The goal is to find a matching that is {\it stable}, one that has
no pair of agents who
mutually prefer each  other over their assigned partners.
The main result in~\cite{Gale1962College} provides an efficient algorithm for finding such a matching, thus establishing that {\it every} SM instance has a stable matching.

In SR, there is a single set of agents, each of whom ranks all other agents in a strict linear order. The notion of stability extends naturally to this setting. Unlike SM, however, Gale and Shapley showed that SR instances need not have a stable matching.    Irving~\cite{Irving1985efficient} later devised an efficient algorithm for SR solvability, which was recently shown to be optimal~\cite{Rosenbaum2025quadratic}.



Conway~\cite{Knuth1976MariagesStables, Knuth1997Stable} was the first to observe that when $I$ is an SM instance, its set of stable matchings forms a {\it distributive lattice} $\mc{D}(I)$. The minimum element of $\mc{D}(I)$  is the  {\it man-optimal stable matching} while its maximum element is the {\it women-optimal} stable matching. All other stable matchings of $I$ lie between these two extreme matchings,  where stable matchings that are better for one group are  worse for the other.

Related to $\mc{D}(I)$ is the {\it rotation poset} $\mc{R}(I)$.  Its elements are called {\it rotations} and represent the minimal differences between the stable matchings of $I$.  Irving and Leather \cite{Irving1986Complexity} proved that the stable matchings of $I$ are in one-to-one correspondence with the {\it closed subsets} of $\mc{R}(I)$.  Furthermore,
$\mc{R}(I)$ has the advantage over  $\mc{D}(I)$ in that it can be derived directly from the preference lists of $I$ in linear time.  For this reason, rotation posets play a central role in SM.

Rotation posets have been employed to obtain hardness results for many SM problems such as counting stable matchings~\cite{Irving1986Complexity}, and computing median stable matchings~\cite{Cheng2008Generalized,Cheng2010Understanding}, balanced stable matchings~\cite{Feder1995Stable}, and sex-equal stable matchings~\cite{Kato1993Complexity,McDermid2014SexEqual}.
On the other hand, they have also been used to devise efficient algorithms that compute fair or nice stable matchings like egalitarian stable matchings~\cite{Irving1987efficient,Feder1989new,Gusfield1989Stable} and the more general optimal stable matchings~\cite{Irving1987efficient, Gusfield1989Stable,Gonczarowski2019Stable}, center stable matchings~\cite{ChengMS11, ChengMS16} and robust stable matchings~\cite{ChenSS19}.

SR can be viewed as a generalization of SM because an SM instance can be transformed into an 
SR instance by appending all remaining agents to each agent's preference lists arbitrarily. It is straightforward to show that this transformation will result in an SR instance with precisely the same stable matchings as the original SM instance.

Like SM, a solvable SR instance $I$ also has a rotation poset $\mc{R}(I)$.   Gusfield~\cite{Gusfield1988Structure} and Irving~\cite{Irving1986Stable} (cf.~\cite{Gusfield1989Stable}) showed that $\mc{R}(I)$ consists of singular and non-singular rotations, and the subposet formed by the non-singular rotations $\mc{R}'(I)$ is the significant one.  There is a one-to-one correspondence between the stable matchings of $I$ and the {\it complete closed subsets} of $\mc{R}'(I)$.  Moreover, $\mc{R}(I)$ (and therefore  $\mc{R}'(I)$) can be derived from the preference lists of $I$ in polynomial time.  Much later, Cheng and Lin \cite{Cheng2011Stable} generalized the structure of $\mc{R}'(I)$ to {\it mirror posets} to show that the stable matchings of $I$ form a {\it median graph} $G(I)$, a well-studied class of graphs that includes trees, hypercubes, grids and the cover graphs of distributive lattices (e.g., see \cite{Klavzar1999Median, Imrich2000Product, Bandelt2008Metric}).  For the rest of the section, we shall assume $I$ is a solvable SR instance and refer to $\mc{R}'(I)$ as the {\it mirror poset of $I$}.  We summarize the correspondence between analogous concepts in SM and SR in Figure~\ref{fig:sm-sr-table}.

It is notable that  median graphs---the meta-structures of SR stable matchings---are not ordered while distributive lattices--- the meta-structures of SM stable matchings---are ordered.  In SM, the agents are divided into two groups and this bipartition defines the two ``sides" of the distributive lattice.  In SR, no such partition exists for the agents so no natural ordering can be used on the stable matchings.  Nonetheless, it is possible to transform the median graph $G(I)$ into an ordered structure by designating one of the stable matchings of $I$, $\eta$, as the {\it root} and directing all edges away from $\eta$.
The result is the Hasse diagram of the {\it median semilattice} $\mc{L}(I, \eta)$ \cite{Cheng2011Stable}, a more general structure than a distributive lattice.  The semilattice $\mc{L}(I, \eta)$ has $\eta$ as its minimum element, but it can have one or more maximal elements.
Moreover, different roots will  result in different median semilattices, some of which may prove more algorithmically advantageous than others.



In spite of the similarities between the structures formed by the SM and SR stable matchings,
we do not know of any work that utilizes the mirror posets of SR instances to address computational problems in SR.\footnote{In Section 4.4.3 of \cite{Gusfield1989Stable},  Gusfield and Irving describe a procedure for finding a  {\it minimum regret stable matching} of an SR instance $I$. It involves picking the ``right" rotations to eliminate in the second phase of Irving's algorithm.  While the procedure makes use of rotations, it does so without being aware of the poset formed by the rotations.}
Part of the reason is that SR instances contain SM instances. Thus, when a problem is computationally hard in SM, the analogous problem is also  hard in SR. Yet there are problems such as finding an egalitarian, a rank-maximal,  or a generous stable matching  and their generalization, the optimal stable matching, that can be solved efficiently in SM but are NP-hard in SR~\cite{Irving1987efficient, Feder1989new, Feder1992journal, CooperM20}.
Intuitively,  the more structurally similar an SR instance is to
an SM instance, the easier it should be to find an optimal stable matching.  This intuition is our motivation for developing a fixed-parameter tractable (FPT) algorithm for optimal SR stable matchings.  The parameter we employ, {\it minimum crossing distance}, captures how close---i.e., structurally similar---a given SR instance is to an SM instance.

\begin{figure}
	\begin{center}
		\begin{tabular}{c|c|c}
			Concept                                        & SM Instances                & SR Instances                              \\[2ex]
			\hline
			structure of stable matchings                  & distributive lattice        & median graph                              \\[1ex]
			structure of (non-singular) rotations          & arbitrary poset $\calP$     & mirror poset $\calP$                      \\[1ex]
			\multirow{2}{*}{ } relationship between stable & stable marriages  $\simeq$  & stable matchings $\simeq$ \emph{complete} \\
			matchings and rotation sets                    & closed subsets of rotations & closed subsets of rotations               \\[1ex]
			\multirow{2}{*}{ } complexity of optimal       & polynomial time             & NP-Hard                                   \\
			stable matching
		\end{tabular}
	\end{center}
	\caption{Relationship between structure and terminology for SM and SR instances.\label{fig:sm-sr-table}}
	\Description{A three-column table compares stable marriage and stable roommates instances. It identifies their stable-matching structures as distributive lattices and median graphs, respectively; their rotation structures as arbitrary and mirror posets; their rotation-set correspondence; and the polynomial-time versus NP-hard complexity of optimal stable matching.}
\end{figure}


\paragraph{Our Contributions}

In the {\it optimal stable matching} problem, we are given a cost function $C$ so that $C(a,b)$ is the cost to agent $a$ of being matched to agent $b$.  The cost of stable matching $\mu$ is $C(\mu) = \sum_{\{a,b\}\in \mu} C(a,b) + C(b,a)$.  The goal is to find a stable matching with least cost.  When $C(a,b)$ is equal to the rank of $b$ in $a$'s preference list, an optimal stable matching is also called an {\it egalitarian stable matching}.


The {\it profile} of $\mu$ is a vector $\langle p_1, p_2, \hdots, p_n \rangle$,  where $p_i$ is the number of agents matched to their $i$th choice.  The {\it reverse profile} of $\mu$ is $\langle p_n, p_{n-1}, \hdots, p_1 \rangle$.  A stable matching is {\it rank-maximal} if its profile is lexicographically maximum over all stable matchings---that is, it has the most agents matched to their first choice, then subject to that, their second choice and so on.
A stable matching is {\it generous} if its reverse profile is lexicographically minimum over all stable matchings---that is, it has the least number of agents matched to their last choice, then subject to that, the second-to-last choice and so on.  By choosing appropriate weights, both rank-maximal and generous stable matchings can be viewed as optimal stable matchings \cite{Irving1987efficient, CooperM20}.


In Theorem~\ref{thmLocalOpt}, we show that the technique used by Irving et al.~\cite{Irving1987efficient} to  compute an optimal SM stable matching efficiently  can also be used to find a {\it locally optimal} stable matching of an SR instance $I$.  That is, suppose the median graph $G(I)$ is rooted at the stable matching $\eta$ to create the median semilattice $\mc{L}(I, \eta)$. For any stable matching $\mu$, denote as $[\eta, \mu]$ the set that contains $\mu$ and its predecessors in $\mc{L}(I, \eta)$.  It turns out that the subposet induced by  $[\eta, \mu]$ forms a distributive lattice.  Using Irving et al.'s technique,  a stable matching with the least cost in $[\eta, \mu]$ can be computed in $O(n^4 \log n)$ time when $I$ has $2n$ agents.


\paragraph{Our algorithm} Let the {\it maximal} elements of $\mc{L}(I, \eta)$ be $\mu_1, \mu_2, \hdots,  \mu_r$.  Notice that $\cup_{i=1}^r$ $[\eta, \mu_i]$ contains all the stable matchings of $I$. Moreover, every optimal stable matching of $I$ belongs to some $[\eta, \mu_i]$, where it is also a local optimum.  Thus,
we can obtain an optimal stable matching of $I$ as follows:  first, compute a locally optimal stable matching of $[\eta, \mu_i]$ for $i = 1, 2, \hdots, r$. Then, among the $r$ locally optimal stable matchings,  return the one with the least cost.  By Theorem~\ref{thmLocalOpt}, the running time of the algorithm is $O(r \times n^4 \log n)$.  Clearly, the fewer the maximal elements of $\mc{L}(I, \eta)$, the faster is the algorithm.  In particular, when $\mc{L}(I, \eta)$ has only one maximal element---and therefore the maximal element is a {\it maximum} element---$\mc{L}(I, \eta)$ is actually a distributive lattice and the  algorithm we have just described reduces to Irving et al.'s algorithm.


How then should we choose a root for $G(I)$ so that the resulting median semilattice has the fewest number of maximal elements?
To answer this question, we make use of the mirror poset $\mc{R}'(I)$, a structure that can be built efficiently from the preference lists of $I$.  We show that rooting $G(I)$ at the stable matching $\eta$ has an analogous  action on $\mc{R}'(I)$:  {\it orienting $\mc{R}'(I)$ at $S_\eta$}, where $S_\eta$ is the complete closed subset of $\mc{R}'(I)$ that corresponds to $\eta$.  The orientation partitions $\mc{R}'(I)$ into two parts.  The lower part $\mc{R}^-$ consists of the rotations in $S_\eta$ while the upper part $\mc{R}^+$ contains the rest of the rotations.  We call an edge of the Hasse diagram of $\mc{R}'(I)$ a {\it crossing edge} if it  has one endpoint  in $\mc{R}^-$ and another endpoint in $\mc{R}^+$.   In Theorem \ref{thmmaximalrep},
we show that the maximal elements of $\mc{L}(I, \eta)$ can be described using the crossing edges of $\mc{R}'(I)$.  Consequently, we prove in Corollary \ref{cor:crossing-maximal} that when $\mc{R}'(I)$ oriented at $S_\eta$ has $2k$ crossing edges, the corresponding median semilattice $\mc{L}(I, \eta)$ has at most $3^k$ maximal elements.

In light of Corollary \ref{cor:crossing-maximal},  we investigate the following optimization problem:

\noindent \textsc{Minimum Crossing Orientation (MCO):} {\it Given a mirror poset $\mc{P}$, find an orientation of $\mc{P}$ that has the least number of crossing edges among all the orientations of $\mc{P}$.}


Let MCO($\calP$)  denote the number of crossing edges in an optimal orientation of $\calP$.  When $\calP = \mc{R}'(I)$, we also refer to MCO($\calP$) as the {\it minimum crossing distance of $I$} because it measures how far the SR instance $I$ is from an SM instance.  In particular, we prove in Lemma \ref{prop:MCO} that  MCO($\calP$) $= 0$ if and only if $G(I)$ is the cover graph of a distributive lattice.
Corollary \ref{cor:crossing-maximal} generalizes this result further by implying that if MCO($\calP$) $= k$ then $G(I)$ is the union of the cover graphs of at most $3^k$ distributive lattices.

In Theorem \ref{thm:mco-hard}, we prove that MCO is NP-hard via a reduction from the minimum vertex cover problem.  Next, by doing a reduction to Almost 2-SAT, we show in Theorem \ref{thm:mco} that if MCO($\calP$) $=k$ then MCO can be solved in time $O(15^k k^4 m^3)$, where $m$ is the number of edges in the Hasse diagram of $\calP$.  Applying this FPT result to $\mc{R}'(I)$ and combining it with our algorithm  for finding an optimal stable matching of $I$,  we have our main result.

\begin{theorem}\label{thm:main}
	Let $I$ be an SR instance with $2n$ agents and minimum crossing distance $k$. An optimal stable matching for $I$ can be found in time $2^{O(k)} n^{O(1)}$. Thus, the optimal stable matching problem is fixed-parameter tractable with respect to the minimum crossing distance of $I$.
\end{theorem}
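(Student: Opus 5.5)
The proof assembles the results announced in the introduction into a pipeline. Each stage should take either polynomial time or $2^{O(k)}$ times a polynomial.

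First, run Irving's algorithm on $I$. If $I$ is unsolvable, we report this. Otherwise we build the rotation poset $\mc{R}(I)$ and extract the mirror poset $\calP=\mc{R}'(I)$ of non-singular rotations, all in polynomial time. The Hasse diagram of $\calP$ has polynomially many edges, say $m=n^{O(1)}$, since there are $O(n^2)$ rotations. Second, apply Theorem~\ref{thm:mco} to $\calP$. Since $\mco(\calP)=k$, this yields in time $O(15^k k^4 m^3)=2^{O(k)}n^{O(1)}$ an orientation of $\calP$ with exactly $k$ crossing edges. We must check that an orientation corresponds to a complete closed subset $S_\eta$. Presumably that is how orientations are defined, with the lower part $\mc{R}^-$ being such a set. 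From $S_\eta$ we recover the stable matching $\eta$ by eliminating the rotations of $S_\eta$, which takes polynomial time.

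Third, enumerate the maximal elements $\mu_1,\dots,\mu_r$ of $\mc{L}(I,\eta)$. Corollary~\ref{cor:crossing-maximal} gives $r\le 3^{k/2}\le 3^k$, since $2k'$ crossing edges give at most $3^{k'}$ maximal elements. The enumeration uses the description of maximal elements via crossing edges in Theorem~\ref{thmmaximalrep}. Each maximal element should be determined by a choice, for each crossing pair of edges, among a constant number of options, with at most three options per pair. For each choice we form the candidate complete closed subset in polynomial time, test whether it is closed and complete, and discard duplicates or non-maximal candidates. This costs $3^{O(k)}n^{O(1)}$ overall.

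Fourth, for each $\mu_i$, apply Theorem~\ref{thmLocalOpt} to get a minimum-cost stable matching in $[\eta,\mu_i]$ in time $O(n^4\log n)$. Then return the cheapest of these. This is correct because every stable matching $\mu$ lies below some maximal element of the finite semilattice, so $\bigcup_i[\eta,\mu_i]$ contains all stable matchings, including a global optimum. The total time is $2^{O(k)}n^{O(1)}$.

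The main obstacle is the third step. Corollary~\ref{cor:crossing-maximal} only bounds $r$; the proof also needs the maximal elements to be constructible efficiently. I would extract this from the proof of Theorem~\ref{thmmaximalrep}, which should map each maximal element to an assignment on crossing edges. If that map is not explicitly constructive, a fallback is to enumerate all $3^{k}$ assignments and, for each, compute the largest complete closed subset consistent with it by a closure computation. Finally, we keep those that are maximal under inclusion of their upper parts. Any superset of the true maximal elements also suffices, because Theorem~\ref{thmLocalOpt} applies to every $[\eta,\mu]$ and the union still covers all stable matchings.
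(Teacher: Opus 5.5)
Your proposal is correct and follows the paper's proof. You compute the Hasse diagram of $\mc{R}'(I)$ and get a minimum crossing orientation from Theorem~\ref{thm:mco}; then, as in Corollary~\ref{cor:optimal-matching}, you enumerate the at most $3^{k/2}$ candidates $A_T\cup \mathrm{Pred}(A_T)\cup C_T$ from Theorem~\ref{thmmaximalrep}, discard those that are not complete closed subsets, and return the cheapest local optimum from Theorem~\ref{thmLocalOpt}. Theorem~\ref{thmmaximalrep} already gives each maximal element in explicit form, so your fallback is unnecessary, and your remark that any superset of the maximal elements suffices is exactly what the paper relies on, since it never removes non-maximal candidates.
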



Theorem \ref{thm:main} serves as a bridge between the polynomial-solvability of optimal stable matchings in the SM setting and the NP-hardness of the same problem in the SR setting by showing that the closer an SR instance is to an SM instance, the faster it is find an optimal stable matching.

We believe that the approach we have taken in this paper can be applied to other problems in stable matchings as well, transforming (polynomial-time or FPT) algorithms in SM to FPT algorithms in SR so that the more similar an SR instance is to an SM instance, the faster is the algorithm.
Beyond stable matchings,  we also suspect that the results here are generalizable to other combinatorial objects that form distributive lattices in a classical setting (see \cite{Propp1997Generating} and \cite{Felsner2004Lattice} for examples)
but form median graphs in a more general one.

In Sections 2 and 3, we present the stable marriage (SM) and stable roommates (SR) problems and their associated structures, respectively,  focusing on their similarities and differences.  We outline Irving et al.'s technique for computing an optimal SM stable matching in Section 2 and show how it can be generalized to the SR setting in Section 3.  In Section 4,  we discuss median semilattices and how they can be analyzed using oriented mirror posets.    In Section 5, we connect the crossing edges of an oriented mirror poset to the maximal elements of the corresponding median semilattices.  Finally,  we introduce the minimum crossing orientation problem in Section 6, prove its hardness, present its FPT algorithm and conclude with a proof of Theorem \ref{thm:main}, our main result.

\paragraph{Further Related Work}

Feder~\cite{Feder1989new} showed that finding an egalitarian stable matching is NP-hard, and provided a 2-approximation algorithm by relating egalitarian stable matchings to the minimum vertex cover problem. Independently, Gusfield and Pitt~\cite{Gusfield1992Bounded} provided a 2-approximation for egalitarian stable matchings via a weighted minimum cost 2-SAT formulation. Teo and Sethuraman~\cite{Teo1998Geometry} provided an LP formulation of SR along with a rounding algorithm that gives a 2-approximation for optimal stable matchings when the the cost functions for each agent satisfy a ``U-shaped'' condition (which is satisfied for egalitarian costs). Cseh, Irving, and Manlove~\cite{Cseh2019Stable} showed that finding egalitarian stable matchings remains NP hard even when agents have bounded-length preference lists and provided improved approximation algorithms for maximum list lengths of at most 5. Simola and Manlove~\cite{Simola2021Profile} showed NP-hardness of finding rank-maximal and generous stable matchings with bounded preference lists as well.

Chen, Ding, Hu and Zang~\cite{Chen2012Maximum-Weight} characterized SR instances in for which the fraction stable matching polytope is integral as precisely those instances for which Phase~1 of Irving's algorithm (see Appendix~\ref{sec:Appendix2}) yields a bipartite graph. For such instances, Chen et al.\ showed that optimal stable matchings can be found in polynomial time using the same machinery as for finding optimal stable marriages in SM. This family of SR instances will have minimum crossing distance 0, thus our main result for instances with bounded minimum crossing number generalizes the algorithmic result of~\cite{Chen2012Maximum-Weight}. Combined with the characterization of~\cite{Chen2012Maximum-Weight}, our result implicitly identifies a family of SR instances whose matching polytope is not integral, but for which optimal stable matchings can be found in polynomial time.

As many variants of SM and SR are known to be NP-hard, parameterized complexity is becoming an increasingly popular tool to understand the complexity landscape of these problems~\cite{Chen2025Frontiers}. Most of the literature in this area focuses on SM variants, with relatively less work on SR. The work that is most closely related to ours is Chen et al.~\cite{Chen2018How}, who give an FPT algorithm for the egalitarian stable matching problem in SR, parameterized by the egalitarian cost of a matching.  As we have already noted, the egalitarian stable matching problem is a special case of the optimal stable matching problem where $C(a,b)$ is the rank of $b$ in $a$'s preference list.   It is unlikely that there is a relationship between the parameter we used, the minimum crossing distance, and their parameter, the egalitarian cost of a matching. Indeed, the egalitarian cost could be $\Omega(n^2)$ for SR instances with minimum crossing distance $0$.


For SM, several works have addressed the relationship between the combinatorial structure of rotation posets and the associated algorithmic complexity of SM variants. Bhanagar et al.~\cite{Bhatnagar2008Sampling} showed that certain preference restrictions still yield sufficiently general rotation posets that certain procedures for sampling stable marriages are inefficient. Subsequently, Cheng and Rosenbaum~\cite{Cheng2022Stable} fully characterized the rotation poset structure of such SM instances, both proving NP hardness and providing FPT algorithms for several SM variants. Gupta et al.~\cite{Gupta2022Treewidth} considered the complexity of SM problems parameterized by the treewidth of the instances' poset. They showed that both the balanced and sex-equal stable marriage problems are FPT when parameterized by the treewidth of the rotation poset.

%% file: SM.tex
\section{SM and the Optimal Stable Matching Problem}
\label{sec:SM}

In the {\it Stable Marriage} problem (SM), there are $n$ men and $n$ women.   Each agent has a preference list that ranks members of the opposite group in a linear ordering.  A {\it matching} partitions the $2n$ agents into $n$ pairs of men and women.  Such a matching has a {\it blocking pair} if  there is a pair of man and woman who prefer each other over their partners in the matching.  The goal is to find a {\it stable matching}, a matching with no blocking pairs. 

The seminal paper by Gale and Shapley \cite{Gale1962College} in 1962  proved that {\it every} SM instance $I$ has a stable matching.  They presented an algorithm that computes one in $O(n^2)$ time.  Interestingly, even though $I$ may have many stable matchings,  Gale and Shapley's algorithm can only output two types  -- the  {\it man-optimal/woman-pessimal} and the {\it woman-optimal/man-pessimal} stable matchings.   In the man-optimal stable matching, $\mu_M$, every man is matched to his best stable partner and, simultaneously, every woman is matched to her worst stable partner.  In the woman-optimal stable matching, $\mu_W$,  the opposite is true.  The extremeness of both of these stable matchings motivates the problem of finding {\it fair} stable matchings, those that treat both groups, and more broadly all agents,  equally.


In 1976, Knuth described the meta-structure of stable matchings that was communicated to him by Conway \cite{Knuth1976MariagesStables, Knuth1997Stable}. 
Let $M(I)$ denote the set of all stable matchings of $I$.  For $\mu \in M(I)$ and agent $a$, let $\mu(a)$ denote the partner of $a$ in $\mu$.  Define a partial ordering on $M(I)$ as follows:  for $\mu, \mu' \in M(I)$, let $\mu \preceq \mu'$ if,  for each man $m$,  $\mu(m) = \mu'(m)$ or $m$ prefers $\mu(m)$ to $\mu'(m)$.  It turns out that $\preceq$  induces a {\it distributive lattice} $\mc{M}(I)$.  
The minimum element of $\mc{M}(I)$ is $\mu_M$ while its maximum element is $\mu_W$.  As one traces a path from the minimum to the maximum element, the stable matchings gradually transform from $\mu_M$ to $\mu_W$; the partners of the men progressively worsen while the partners of the women progressively improve.  Searching the interior of $\mc{M}(I)$ directly for a fair stable matching, however, is not feasible because $I$ can have an exponential number of stable matchings (see \cite{KarlinGW18} and references therein).

Let $\mathcal{P}$ be a poset whose element set is $P$. A subset $S \subseteq P$ is {\it closed} if for every element $s \in S$, all the predecessors of $s$ are also in $S$. 
Let $C(\mc{P})$ contain all the closed subsets of $\mc{P}$.  It is easy to check that when the elements of $C(\mc{P})$ are ordered according to the subset relation, the result is a distributive lattice with $\emptyset$ as its minimum element and $P$ as its maximum element.
 A classic result by  Birkhoff \cite{Birkhoff1937Rings} in the late 1930's states the following:

\begin{theorem}[\cite{Birkhoff1937Rings}]
For every (finite) distributive lattice $\mc{D}$, there is a poset $\mc{P_D}$ whose distributive lattice of closed subsets, $(C(\mc{P_D}), \subseteq)$,   is isomorphic to $\mc{D}$.  
\label{thmBirkhoff}
\end{theorem}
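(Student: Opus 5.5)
The plan follows the standard join-irreducible construction. Let $\mc{D}$ be a finite distributive lattice with order $\le$, join $\vee$, meet $\wedge$, and minimum $\hat 0$. Let $\mc{P_D}$ be the subposet of join-irreducible elements. An element $j \neq \hat 0$ is \emph{join-irreducible} if $j = x \vee y$ implies $j = x$ or $j = y$; equivalently, $j$ covers exactly one element. Define $\phi : \mc{D} \to C(\mc{P_D})$ by $\phi(x) = \set{ j \in \mc{P_D} \sucht j \le x }$. The set $\phi(x)$ is a down-set of $\mc{P_D}$, hence closed, so $\phi$ is well defined. It is also monotone, since $x \le y$ gives $\phi(x) \subseteq \phi(y)$.

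First I show every $x$ is the join of $\phi(x)$, with the empty join equal to $\hat 0$. I argue by induction along the finite order. If $x = \hat 0$ or $x$ is join-irreducible, this is immediate. Otherwise $x = y \vee z$ with $y, z < x$, and by induction $y$ and $z$ are joins of join-irreducibles below them, all of which lie below $x$. It follows that $\phi(x) \subseteq \phi(y)$ implies $x \le y$. So $\phi$ is an order embedding, and in particular injective.

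For surjectivity, take a closed set $S$ and put $x = \bigvee S$. Clearly $S \subseteq \phi(x)$. The main obstacle is the reverse inclusion, and this is the only place distributivity is used. The key step is that a join-irreducible $j$ in a distributive lattice is join-prime: if $j \le a \vee b$, then $j = j \wedge (a \vee b) = (j \wedge a) \vee (j \wedge b)$, so by irreducibility $j = j\wedge a$ or $j = j \wedge b$, that is, $j \le a$ or $j \le b$. Inducting on $|S|$ gives that $j \le \bigvee S$ implies $j \le s$ for some $s \in S$. Since $S$ is closed, $j \in S$. Hence $\phi(x) = S$.

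Finally, $\phi$ is a bijection that preserves and reflects order, so it is an isomorphism between $(\mc{D}, \le)$ and $(C(\mc{P_D}), \subseteq)$. Lattice operations are determined by the order, so $\phi$ is a lattice isomorphism as well.
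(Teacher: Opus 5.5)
Your proof is correct and follows essentially the argument the paper has in mind. The paper does not reprove Birkhoff's theorem, but it states that $\mathcal{P_D}$ is the subposet of join-irreducible elements (those with in-degree $1$ in the Hasse diagram of $\mathcal{D}$), which is your construction via $\phi(x)=\{j\in\mathcal{P_D} : j\le x\}$, and your two steps (every $x$ is the join of $\phi(x)$, and join-irreducibles are join-prime by distributivity) correctly supply the standard details.
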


We shall refer to  $\mc{P_D}$ as the {\it poset that encodes the elements of $\mc{D}$} because every element of $\mc{D}$ corresponds to a closed subset of  $\mc{P_D}$ and vice versa. In Birkhoff's proof, $\mc{P_D}$ is derived from $\mc{D}$ itself.  It is the subposet induced by the {\it join-irreducible} elements of $\mc{D}$, where the join-irreducible elements are the ones that have an in-degree  of $1$ in the Hasse diagram\footnote{We shall view the Hasse diagram of a poset as a directed graph where every edge is directed from a lower vertex to a higher one.} of $\mc{D}$.  

For stable matchings, however, Irving and Leather  \cite{Irving1986Complexity} proved in the mid-1980's that the poset that encodes the stable matchings of $I$ can be derived {\it directly} from the preference lists of $I$.  They called it the {\it rotation poset} of $I$.  The poset contains {\it rotations},  which are the minimal differences between two stable matchings. A {\it rotation} $\rho = (m_0, w_0), (m_1, w_1), \hdots, $ $(m_{r-1}, w_{r-1})$ is a cyclic list of man-woman pairs so that when $\rho$ is {\it exposed} in a stable matching $\mu$, each $(m_i, w_i) \in \mu$. When $\rho$ is {\it eliminated} from $\mu$, the result is another stable matching  
\begin{eqnarray}
 \mu/\rho & = & \mu - \{(m_i, w_i), i = 0, 1, \hdots, r-1\} \cup \{(m_i, w_{i+1}), i = 0, 1, \hdots, r-1\} 
 \label{eqn1}
 \end{eqnarray}
 where addition in the subscript is modulo $r$.  It can be shown that every stable matching that is not $\mu_W$ has an exposed rotation.  Thus, starting from $\mu_M$, 
rotations can be eliminated repeatedly until $\mu_W$ is reached.  This process reveals {\it all} the rotations of $I$ exactly once.  It is also the case that starting at $\mu_M$,  if $\mu$ can be obtained by eliminating the rotations in set $S$,  then $S$ is unique to $\mu$. That is, if $\mu$ can be obtained from $\mu_M$ by eliminating the rotations in $S'$, then $S' = S$.

Let $R(I)$ contain all the rotations of $I$.  Again, define a partial ordering on $R(I)$ as follows:   for $\rho$ and $\sigma \in R(I)$, let $\rho \leq \sigma$ if $\rho$ is eliminated before $\sigma$ in every sequence of eliminations from $\mu_M$ to $\mu_W$.  The resulting poset, $\mc{R}(I)$, is the {\it rotation poset of $I$}.  Here are the key results:

\begin{theorem}[\cite{Irving1986Complexity, Gusfield1989Stable}]
Let $I$ be an SM instance  and let $\mc{R}(I) = (R(I), \preceq)$ be its rotation poset.  The distributive lattice of stable matchings of $I$, $\mc{M}(I)$,  is isomorphic to the distributive lattice of the closed subsets of $\mc{R}(I)$,  $(C(\mc{R}(I)), \subseteq)$.   
 In particular,  a stable matching $\mu$ of $I$ corresponds to the closed subset $S_\mu$ of $\mc{R}(I)$ iff starting from $\mu_M$,  the rotations in $S_\mu$ are exactly the rotations that have to be eliminated to obtain $\mu$. 
\label{mainthm1}
\end{theorem}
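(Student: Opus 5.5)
We will prove Theorem~\ref{mainthm1} by constructing an explicit bijection and checking that it preserves order in both directions. Define $\Phi:\mc{M}(I)\to C(\mc{R}(I))$ as follows. Given a stable matching $\mu$, pick any elimination sequence $\mu_M=\mu^0,\mu^1,\dots,\mu^t=\mu$ in which each $\mu^{j}=\mu^{j-1}/\rho_j$ for some rotation $\rho_j$ exposed in $\mu^{j-1}$. Such a sequence exists because every stable matching other than $\mu_W$ has an exposed rotation, and elimination moves strictly upward in $\preceq$ until it must pass through $\mu$. We then set $\Phi(\mu)=S_\mu=\{\rho_1,\dots,\rho_t\}$.

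The uniqueness fact quoted just before the theorem says that $S_\mu$ does not depend on the chosen sequence, so $\Phi$ is well defined.

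\begin{enumerate}
\item \emph{$S_\mu$ is closed.} Take $\sigma\in S_\mu$ and $\rho\le\sigma$. Extend the sequence from $\mu$ on to $\mu_W$. This gives a full elimination sequence in which $\sigma$ is eliminated before reaching $\mu$. By the definition of $\le$, $\rho$ is eliminated before $\sigma$ in this sequence, so $\rho\in S_\mu$.

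\item \emph{$\Phi$ is injective.} The set $S_\mu$ together with $\mu_M$ determines $\mu$, and the order of elimination does not matter. The formula (\ref{eqn1}) shows that each man's partner in $\mu$ is determined by the rotations in $S_\mu$ involving him: he ends with the last woman he is moved to, and these moves are monotone along his preference list.

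\item \emph{$\Phi$ is surjective.} This is the main step. Given a closed $S$, list it as a linear extension $\rho_1,\dots,\rho_t$ of $\mc{R}(I)$ restricted to $S$, and show by induction that each $\rho_j$ is exposed in the matching $\mu_M/\rho_1/\cdots/\rho_{j-1}$.

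To get the induction going, I would first show the following. Let $\nu$ be a stable matching with closed set $T$, and let $\rho\notin T$ be such that every predecessor of $\rho$ lies in $T$. Then $\rho$ is exposed in $\nu$. Equivalently, every rotation that is minimal in the complement of $T$ is exposed in $\nu$.

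To prove this, take a full elimination sequence from $\nu$ to $\mu_W$. Some rotation $\tau$ is exposed in $\nu$, and any rotation first exposed later must wait for an earlier elimination. If $\rho$ were not exposed in $\nu$, I would argue that some rotation in the sequence must precede $\rho$ in every full sequence, via the pair-level precedence rules (a rotation moving $m$ off $w$ must come after the one moving $m$ onto $w$, and similarly for women). That would give a predecessor of $\rho$ outside $T$, contradicting minimality.

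\item \emph{Order preservation.} Suppose $S_\mu\subseteq S_{\mu'}$. Then $\mu'$ is reached from $\mu$ by eliminating the remaining rotations, and each elimination only worsens men's partners. Conversely, if $\mu\preceq\mu'$, then the greedy elimination from $\mu$ toward $\mu'$ works: some rotation exposed in $\mu$ moves men no further than their partners in $\mu'$, using the standard fact that $\mu\preceq\mu'$ implies such a rotation exists. Hence $S_\mu\subseteq S_{\mu'}$.
\end{enumerate}

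The expected obstacle is the exposure claim in step~3. I would handle it through the explicit characterization of rotation precedence by pairs, following Irving and Leather~\cite{Irving1986Complexity}. If that proves cumbersome, I would fall back on Birkhoff's Theorem~\ref{thmBirkhoff}: identify rotations with the join-irreducible elements of $\mc{M}(I)$, namely the matchings $\mu$ whose unique lower cover $\mu'$ satisfies $\mu=\mu'/\rho$. Showing that this identification is a bijection compatible with $\le$ then yields the isomorphism directly.
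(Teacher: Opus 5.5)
The paper does not prove this theorem; it quotes it from Irving--Leather and Gusfield--Irving, so your argument has to stand on its own. Its overall structure is the standard one: elimination sets, closedness straight from the definition of $\le$, injectivity because each man ends at the last woman he is moved to, and surjectivity by eliminating a linear extension of a closed set. It is also the scheme the paper itself uses for the SR analogue in Theorem~\ref{thmExposedRotation} and Corollary~\ref{corLinearExtension}.

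However, the very first step has a genuine gap. You argue that an elimination sequence from $\mu_M$ to an arbitrary $\mu$ exists because elimination ``must pass through $\mu$''. This is false. If $\rho_1,\rho_2$ are incomparable and both exposed in $\mu_M$, a sequence that eliminates $\rho_1$ first never visits $\mu_M/\rho_2$. You have to steer, using the lemma that if $\nu\preceq\mu$ and $\nu\neq\mu$, then some rotation $\rho$ exposed in $\nu$ satisfies $\nu/\rho\preceq\mu$.

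That is the ``standard fact'' you invoke only in step~4, but it is the load-bearing lemma of the whole theorem. Without it, $\Phi$ is defined only on matchings reachable from $\mu_M$. Steps 1--3 then give a bijection between those matchings and $C(\mathcal{R}(I))$, not between $M(I)$ and $C(\mathcal{R}(I))$.

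The proof of the lemma is short and should be included:
\begin{itemize}
\item Let $s_\nu(m)$ be the first woman after $\nu(m)$ on $m$'s list who prefers $m$ to her $\nu$-partner. A rotation exposed in $\nu$ is a cycle of the map $m\mapsto\nu(s_\nu(m))$, and eliminating it moves each $m$ on the cycle to $s_\nu(m)$.
\item Let $X$ be the set of men with $\nu(m)\neq\mu(m)$. For $m\in X$, the woman $\mu(m)$ prefers $m$ to $\nu(\mu(m))$. Otherwise stability of $\mu$ forces $\nu(\mu(m))$ to strictly prefer his $\mu$-partner to his $\nu$-partner, contradicting $\nu\preceq\mu$.
\item Hence $s_\nu(m)$ exists and is either $\mu(m)$ or a woman $m$ prefers to $\mu(m)$.
\item Also $\nu(s_\nu(m))\in X$, since otherwise $(m,s_\nu(m))$ blocks $\mu$.
\item So the map has a cycle inside $X$. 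Eliminating that rotation leaves every man weakly better off than in $\mu$.
\end{itemize}

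Second, the precedence rules you list for the exposure claim in step~3 are not enough. The rule that the rotation moving $m_i$ onto $w_i$ precedes the one moving him off $w_i$ (and its analogue for women), together with the fact that each pair lies in at most one rotation, only gives $\nu(m_i)=w_i$. Exposure also requires $s_\nu(m_i)=w_{i+1}$. That is, every woman $w$ strictly between $w_i$ and $w_{i+1}$ on $m_i$'s list must prefer $\nu(w)$ to $m_i$.

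The missing rule concerns each such $w$ whose $\mu_M$-partner is worse for her than $m_i$. For such a $w$, the unique rotation that moves her from a partner worse than $m_i$ to one better than $m_i$ must precede $\rho$, because whenever $\rho$ is exposed, $w$ must already be matched above $m_i$. This is the SM form of Gusfield's necessary eliminations (Lemma~\ref{lemmaPred} and Corollary~\ref{corPred}), which is exactly what the paper's proof of Theorem~\ref{thmExposedRotation} rests on.

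With both rules, your induction in step~3 goes through. The converse in step~4 also becomes immediate: if $\rho\in S_\mu\setminus S_{\mu'}$, every man of $\rho$ is strictly worse off in $\mu$ than in $\mu'$.
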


Theorem~\ref{mainthm1} provides  another way of understanding how the stable matchings of $I$ are related in $\mc{M}(I)$.  
The closed subsets of $\mc{R}(I)$ that corresponds to $\mu_M$ and $\mu_W$  are $\emptyset$ and $R(I)$ respectively.  For  $\mu, \mu' \in M(I)$,  
$\mu \preceq \mu'$ if and only if $S_{\mu} \subseteq S_{\mu'}$.   Equally significant,  $\mc{R}(I)$ can be constructed from the preference lists of $I$ efficiently. 


\begin{theorem}[See discussion in Section~3.3 of \cite{Gusfield1989Stable}]
Let $I$ be an SM instance consisting of $n$ men and $n$ women.  Then $I$ has $O(n^2)$ rotations.  Furthermore, a directed acyclic graph that represents\footnote{More precisely, the transitive closure of this directed acyclic graph contains all the relations in $\mc{R}(I)$.}   
 $\mc{R}(I)$ can be constructed in $O(n^2)$ time.\label{mainthm2}
\end{theorem}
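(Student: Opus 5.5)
The plan is to establish the two claims of Theorem~\ref{mainthm2} separately: first the $O(n^2)$ bound on the number of rotations, and then the construction of a DAG whose transitive closure is $\mc{R}(I)$.

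\emph{Counting rotations.} We start from the man-optimal matching $\mu_M$ and fix any maximal elimination sequence from $\mu_M$ to $\mu_W$. By the discussion preceding Theorem~\ref{mainthm1}, this sequence reveals every rotation exactly once. Eliminating a rotation $\rho=(m_0,w_0),\dots,(m_{r-1},w_{r-1})$ moves each $m_i$ from $w_i$ to $w_{i+1}$. By~(\ref{eqn1}) and the lattice ordering, each $m_i$ strictly worsens. So every pair $(m_i,w_i)$ leaves the matching permanently, because men's partners only get worse along the chain. Hence each man--woman pair belongs to at most one rotation, and the total size of all rotations is at most $n^2$. In particular there are $O(n^2)$ rotations.

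\emph{Computing the rotations.} We run Gale--Shapley to obtain $\mu_M$, and symmetrically $\mu_W$, in $O(n^2)$ time. We then perform the elimination chain as in Gusfield--Irving. From the current matching $\mu$, define for each man $m$ with $\mu(m)\neq\mu_W(m)$ the value $s_\mu(m)$: the first woman after $\mu(m)$ on $m$'s list who prefers $m$ to her current partner. Define $\mathrm{next}(m)=\mu(s_\mu(m))$.

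\begin{itemize}
\item Rotations exposed in $\mu$ are exactly the cycles of the map $\mathrm{next}$.
\item We find them by walking $\mathrm{next}$ pointers with a stack, eliminating each cycle as soon as it closes and continuing the walk.
\item Each man's pointer into his list only advances, since a woman's partner only improves.
\end{itemize}

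So the total scanning work is $O(n^2)$, and the whole chain costs $O(n^2)$ amortized.

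\emph{Building the DAG.} We label each pair $(m,w)$ with the unique rotation, if any, that \emph{moves $m$ to $w$}, and with the rotation that \emph{removes $(m,w)$}. We then add two types of edges.
\begin{enumerate}
\item[(i)] If $(m,w)\in\rho$ and $(m,w)$ is produced by $\sigma$, i.e.\ $m$ moves to $w$ under $\sigma$, add the edge $\sigma\to\rho$.
\item[(ii)] If $w'$ lies strictly between $\mu(m)$-positions on $m$'s list, so that $m$ skips $w'$ during $\rho$, and $w'$ becomes matched to someone she prefers to $m$ only via rotation $\sigma$, add the edge $\sigma\to\rho$. This captures that $\rho$ cannot be eliminated until $w'$ no longer blocks.
\end{enumerate}
Both edge types are determined by one scan of each man's list with the labels in hand, giving $O(n^2)$ edges and $O(n^2)$ time.

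The main obstacle is proving that the transitive closure of these edges is exactly $\preceq$.

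\emph{Soundness.} Each edge is a genuine precedence. This is immediate for type (i). For type (ii), note that if $\rho$ were exposed before $\sigma$ had been eliminated, then $(m,w')$ would block the matching $\mu/\rho$.

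\emph{Completeness.} If $\sigma<\rho$ then a path of edges exists. We would argue contrapositively: if no edge path leads from $\sigma$ to $\rho$, we build an elimination order from $\mu_M$ in which $\rho$ precedes $\sigma$. To do this, take a closed set containing $\rho$'s edge-ancestors but not $\sigma$. Then show that the corresponding set of rotations is eliminable in some order, using the fact that exposure of a rotation is certified precisely by the absence of type (i) and type (ii) obstructions. This local characterization of exposure is the delicate step; I would prove it via the $s_\mu$/$\mathrm{next}$ description above.
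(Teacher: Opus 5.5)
Your proposal is correct and follows essentially the argument of Gusfield and Irving (Section~3.3), which the paper cites without reproducing: each man--woman pair lies in at most one rotation, the elimination chain finds all rotations in $O(n^2)$ amortized time, and the type-(i)/type-(ii) explicit-precedence edges form a sparse DAG whose transitive closure is $\mathcal{R}(I)$, with completeness coming from the fact that a rotation becomes exposed once all of its explicit predecessors have been eliminated. Two small points remain to fill in: the type-(ii) label on $(m,w')$ (the rotation that moves $w'$ below $m$) comes from one scan of each \emph{woman's} list as her partner improves, not from the man-side labels you list; and the deferred exposure lemma follows because $(m_i,w_i)$ lies only in $\rho$, so $m_i$ stays at $w_i$, while any skipped $w'$ without a type-(ii) label must already prefer $\mu_M(w')$ to $m_i$, since otherwise $(m_i,w')$ would block $\mu_W$.
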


Theorems \ref{mainthm1} and \ref{mainthm2} imply that the rotation poset of $I$ is a polynomial-sized structure that encodes all the stable matchings of $I$.   It can be used to explore $\mc{M}(I)$ without actually constructing $\mc{M}(I)$.  This is exactly what Irving, Leather and Gusfield \cite{Irving1987efficient} did to find an {\it egalitarian stable matching} and, more generally, an {\it optimal stable matching} of $I$. 


\noindent{\bf Optimal Stable Matchings.} For SM instance $I$,  let $C(a,b)$ indicate how much agent $a$ likes agent $b$, a member of the opposite group.  The cost function $C$ follows the rule that for $b \neq b'$, $C(a,b) \neq C(a, b')$ (i.e., no ties are allowed) and  $C(a,b) < C(a, b')$ if and only if $a$ prefers $b$ to $b'$.  
The {\it cost of a stable matching $\mu$} is 
$$ C(\mu) = \sum_{(m,w) \in \mu} C(m,w) + C(w,m).$$
The value of $C(\mu)$ captures the dissatisfaction of the agents as a whole with respect to their matches in $\mu$. 
Given $I$ and $C$,  the objective is to find an {\it optimal stable matching}, a stable matching whose cost is the least among all the stable matchings of $I$.       When $C(a,b)$ is the {\it rank} of $b$ in $a$'s preference list (i.e., the number of agents $a$ prefers to $b$),  an optimal stable matching of $I$ is referred to as an {\it egalitarian stable matching} of $I$.  It is arguably a fair stable matching because it treats all agents the same way.    

Let $\rho = (m_0, w_0), (m_1, w_1), \hdots, (m_{r-1}, w_{r-1})$ be a rotation of $I$.  Define the {\it cost of $\rho$} with respect to $C$ as  $$C(\rho) = \sum_{i=0}^{r-1} [C(m_i, w_i) + C(w_i, m_i)] - \sum_{i=0}^{r-1}  [C(m_i, w_{i+1})+ C(w_{i+1}, m_i)].$$  Suppose $\rho$ is exposed in stable matching $\mu$.  Then from equation (\ref{eqn1}),  
\begin{eqnarray*}
C(\mu/\rho) & = &  C(\mu) - \sum_{i=0}^{r-1} [C(m_i, w_i) + C(w_i, m_i)] + \sum_{i=0}^{r-1}  [C(m_i, w_{i+1}) + C(w_{i+1}, m_i)] \\
& = & C(\mu) - C(\rho).
\end{eqnarray*}
Combining this observation with Theorem \ref{mainthm1}, the next result follows:

\begin{theorem}[\cite{Irving1987efficient}]
Let $\mu$ be a stable matching of $I$ and let $S_{\mu}$ be the closed subset in $\mc{R}(I)$ that it corresponds to.  Then 
$C(\mu) = C(\mu_M) - \sum_{\rho \in S_{\mu}} C(\rho)$.
\label{thmEgal}
\end{theorem}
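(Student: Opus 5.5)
The plan is to prove Theorem~\ref{thmEgal} by induction on the number of rotations eliminated, combining the single-step identity $C(\mu/\rho) = C(\mu) - C(\rho)$ derived just above with the elimination-sequence characterization in Theorem~\ref{mainthm1}.

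Fix a stable matching $\mu$ with closed subset $S_\mu$. By Theorem~\ref{mainthm1}, $S_\mu$ is exactly the set of rotations eliminated on any path from $\mu_M$ to $\mu$. Since $S_\mu$ is closed, take a linear extension $\rho_1,\dots,\rho_t$ of the restriction of $\mc{R}(I)$ to $S_\mu$, where $t=\abs{S_\mu}$. For $0\le j\le t$, define $S_j=\set{\rho_1,\dots,\rho_j}$. Each $S_j$ is a closed subset: any predecessor of $\rho_i$ with $i\le j$ lies in $S_\mu$ by closedness, and it precedes $\rho_i$ in the linear extension. Let $\mu_j$ be the stable matching corresponding to $S_j$, so that $\mu_0=\mu_M$ and $\mu_t=\mu$.

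The key step is to show that $\rho_{j+1}$ is exposed in $\mu_j$ and that $\mu_{j+1} = \mu_j/\rho_{j+1}$. The lattice isomorphism of Theorem~\ref{mainthm1} implies that $\mu_j \prec \mu_{j+1}$ with $S_{j+1}\setminus S_j=\set{\rho_{j+1}}$, so $\mu_{j+1}$ covers $\mu_j$. A covering step in $\mc{M}(I)$ is obtained by eliminating a single exposed rotation, so $\mu_{j+1}=\mu_j/\sigma$ for some rotation $\sigma$ exposed in $\mu_j$. The set of rotations eliminated from $\mu_M$ to reach $\mu_{j+1}$ along this path is $S_j\cup\set{\sigma}$. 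By the uniqueness of the elimination set stated in Theorem~\ref{mainthm1}, this equals $S_{j+1}$, which forces $\sigma=\rho_{j+1}$.

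With this in hand, the displayed identity gives $C(\mu_{j+1}) = C(\mu_j) - C(\rho_{j+1})$ for each $j$. Summing from $j=0$ to $t-1$ telescopes to $C(\mu) = C(\mu_M) - \sum_{\rho\in S_\mu} C(\rho)$. This is independent of the chosen linear extension, as it must be.

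The main obstacle is the justification that the rotation eliminated in a single covering step is exactly the new element of the closed set. This rests on the uniqueness clause of Theorem~\ref{mainthm1}, namely that every elimination path from $\mu_M$ to a given matching uses the same set of rotations. Once that is in place, the rest is a direct telescoping argument.
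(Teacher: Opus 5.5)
Your proof is correct and is essentially the paper's argument: telescope $C(\mu/\rho)=C(\mu)-C(\rho)$ along a sequence of rotation eliminations from $\mu_M$ to $\mu$ whose rotation set is exactly $S_\mu$, which Theorem~\ref{mainthm1} already provides. Your detour through a linear extension and covering steps is unnecessary, because you can simply take the elimination sequence that Theorem~\ref{mainthm1} supplies. That detour also relies on an extra fact the paper does not state, namely that every cover in $\mc{M}(I)$ is a single rotation elimination; the fact is standard, but you should either cite it or avoid it.
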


For any closed subset $S$ of $\mc{R}(I)$, let the {\it cost of $S$} be $C(S) = \sum_{\rho \in S} C(\rho)$.  In the above theorem, $C(\mu_M)$ is fixed.  Thus, we can reduce the problem of finding an optimal stable matching of $I$ to  finding a closed subset of $\mc{R}(I)$ with {\it maximum cost}. The latter has been studied by other researchers (e.g. \cite{Picard1976MaximumClosure}, \cite{Hochbaum2001new}, etc.) and can be solved by computing a minimum cut of a flow network based on $\mc{R}(I)$.

\begin{theorem}[\cite{Irving1987efficient}]
Given an SM instance $I$ with $n$ men and $n$ women, an egalitarian stable matching of $I$ can be computed in $O(n^4)$ while an optimal stable matching of $I$ can be computed in $O(n^4 \log n)$ time.  
\label{corEgal}
\end{theorem}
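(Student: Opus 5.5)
The plan is to reduce optimal stable matching to a maximum-weight closure problem on the rotation poset, and then solve that problem by a minimum cut. First, build the rotation poset $\mc{R}(I)$ using Theorem~\ref{mainthm2}. This gives $O(n^2)$ rotations and a DAG $D$ with $O(n^2)$ edges whose transitive closure is $\mc{R}(I)$, in $O(n^2)$ time. Compute $\mu_M$ by Gale--Shapley in $O(n^2)$ time, and compute each rotation cost $C(\rho)$ directly from its definition. Since rotations are disjoint in the man--woman pairs they eliminate, the total length of all rotations is $O(n^2)$, so all costs cost $O(n^2)$ time in total.

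By Theorem~\ref{thmEgal}, $C(\mu)=C(\mu_M)-C(S_\mu)$. By Theorem~\ref{mainthm1}, the map $\mu\mapsto S_\mu$ is a bijection between stable matchings and closed subsets of $\mc{R}(I)$. Hence minimizing $C(\mu)$ is the same as finding a closed subset $S$ of $\mc{R}(I)$ maximizing $C(S)$. A subset is closed in $\mc{R}(I)$ exactly when it is closed under predecessors in $D$, because the transitive closure of $D$ gives the same downward-closed sets. So we may work with the sparse graph $D$.

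Solve the maximum closure problem by the standard construction of Picard. Add a source $s$ and a sink $t$. For each rotation $\rho$ with $C(\rho)>0$, add an arc $s\to\rho$ of capacity $C(\rho)$. For each $\rho$ with $C(\rho)<0$, add an arc $\rho\to t$ of capacity $-C(\rho)$. For each edge $\sigma\to\rho$ of $D$ (with $\sigma$ a predecessor of $\rho$), add an arc $\rho\to\sigma$ of infinite capacity. A finite $s$--$t$ cut $(\{s\}\cup S, \cdot)$ then forces $S$ to be closed. Its capacity is $\sum_{C(\rho)>0}C(\rho)-C(S)$, so a minimum cut yields a maximum-cost closed set. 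Finally, recover $\mu$ by eliminating the rotations of $S$ from $\mu_M$ in topological order of $D$.

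The remaining step is the running-time accounting, which I expect to be the main obstacle. The network has $V=O(n^2)$ vertices and $E=O(n^2)$ arcs. In the egalitarian case, $C(\rho)$ are integers and, I would argue, the total positive capacity is $O(n^3)$. A max-flow algorithm whose running time suits these bounds (e.g.\ an augmenting-path or Dinic-type bound using the small integer capacities) should then give $O(n^4)$. For general weights, use a strongly polynomial max-flow algorithm such as Goldberg--Tarjan with dynamic trees, which runs in $O(VE\log(V^2/E))=O(n^4\log n)$. All other steps cost $O(n^2)$, so these totals are the claimed bounds. The part needing care is confirming the egalitarian capacity bound, which gives the $\log n$-free $O(n^4)$ time.
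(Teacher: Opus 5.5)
Your reduction is the paper's (that is, Irving, Leather and Gusfield's): Theorems~\ref{thmEgal} and~\ref{mainthm1} reduce the problem to finding a maximum-cost closed subset of $\mc{R}(I)$, which is solved by a Picard-style minimum cut on a network with $O(n^2)$ vertices and arcs. Your $O(n^4\log n)$ accounting for general costs also matches the paper's appeal to an $O(|V||E|\log n)$ max-flow algorithm. The gap is the egalitarian $O(n^4)$ bound. You leave it open, and the bound you propose does not give it. An $O(n^3)$ total positive capacity is what you get from $|C(\rho)|\le 2n|\rho|$ summed over rotations of total length $O(n^2)$. With that bound, integral augmenting paths give $O(n^3)$ augmentations of $O(n^2)$ time each, which is only $O(n^5)$. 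The appeal to a ``Dinic-type bound'' is not substantiated, and the infinite-capacity arcs block the usual unit-capacity arguments.

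The missing fact is that the flow value is $O(n^2)$. This is what the paper means by ``the minimum cut has a nice bound and the edge weights on the flow network are integral.'' Reindexing the definition of $C(\rho)$ gives
$C(\rho)=\sum_i[C(m_i,w_i)-C(m_i,w_{i+1})]+\sum_i[C(w_i,m_i)-C(w_i,m_{i-1})]$.
With rank costs, every term of the first sum is nonpositive, since each man gets worse. Every term of the second sum is nonnegative, since $w_i$ trades $m_i$ for the preferred $m_{i-1}$. So $C(\rho)$ is at most the total rank improvement of the women of $\rho$. Eliminating rotations from $\mu_M$ to $\mu_W$ uses each rotation exactly once, and each woman's partner only improves along the way. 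Hence these improvements telescope: $\sum_{C(\rho)>0}C(\rho)\le\sum_{w}\bigl[C(w,\mu_M(w))-C(w,\mu_W(w))\bigr]\le n(n-1)$. This bounds the capacity of the cut $(\{s\},\cdot)$ and therefore the maximum flow. Plain Ford--Fulkerson with integer capacities then needs at most $n(n-1)$ augmentations, each found in $O(|V|+|E|)=O(n^2)$ time, which gives $O(n^4)$.
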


The slight improvement in the running time for the egalitarian stable matching is due to the fact the minimum cut has a nice bound and the edge weights on the flow network are integral.  In \cite{Feder1989new}, Feder cites an $O(n^3 \log n)$ algorithm for computing an egalitarian stable matching.

%% file: SR.tex
\section{SR and the Optimal Stable Matching Problem}
\label{sec:SR}

The \emph{Stable Roommates} problem (SR) is the non-bipartite version of SM.  There are $2n$ agents and each one has a preference list that orders all the other  agents linearly.  The goal again is to find a stable matching.  Gale and Shapley \cite{Gale1962College} described SR in their original paper and pointed out that, unlike SM, some SR instances are {\it unsolvable} in that they have no stable matchings.  In 1985,  Irving \cite{Irving1985efficient} finally solved SR, presenting an $O(n^2)$ algorithm that outputs a stable matching for the  {\it solvable} instances  and reports none for the unsolvable ones.

From here on, we shall only consider SR instances that are solvable.  Let $I$ be one of them.  Irving's algorithm has two phases.  In phase 1, the preference lists of $I$ are transformed into the {\it phase-1 table} $T_0$.  In phase 2,  starting from $T_0$, exposed {\it rotations} are eliminated over and over again until none exists.  This process transforms $T_0$ into smaller and smaller tables; the final table is equivalent to a stable matching $\mu$.  When $Z$ is the set of rotations eliminated from $T_0$ to obtain $\mu$, we write $\mu = T_0/Z$.

We shall describe Irving's algorithm in more detail and define terms precisely  in Appendix~\ref{sec:Appendix2}. One important fact to note though is that Irving's algorithm does not always output the same stable matching.  The set of rotations eliminated by the algorithm determines the outcome.  That is,  suppose  $\mu = T_0/Z$ and $\mu' = T_0/Z'$.  Then $\mu = \mu'$ if and only if $Z = Z'$.


For an SR instance $I$,  let $R(I)$ contain all the exposed rotations that can be eliminated during an execution of phase 2 of Irving's algorithm.  Like SM, a {\it rotation} $\rho = (x_0, y_0), (x_1, y_1), \hdots (x_{r-1}, y_{r-1})$ in this context is a cyclic sequence of ordered pairs of agents that satisfies certain properties.  If $\ol{\rho} = (y_1, x_0), (y_2, x_1), \hdots, (y_0, x_{r-1})$ is also in $R(I)$, then $\rho$ is a {\it non-singular} rotation; otherwise, $\rho$ is a {\it singular} rotation.  If $\rho$ is non-singular, then   $\ol{\rho}$ is referred to as the {\it dual} of $\rho$.  Notice that  $\ol{\rho}$ is also non-singular and its dual is $\rho$ so $\{\rho,  \ol{\rho} \}$ is a {\it dual pair} of non-singular rotations of $I$.

\begin{theorem}[{\cite[Theorem 4.3.1]{Gusfield1989Stable}}]
	Let $\mu$ be a stable matching of $I$.  If $\mu = T_0/Z$,  then $Z$ contains every singular rotation and exactly one of each dual pair of non-singular rotations of $I$.
	\label{thmZcontent}
\end{theorem}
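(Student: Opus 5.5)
The plan is to argue directly from the mechanics of phase~2 of Irving's algorithm (Appendix~\ref{sec:Appendix2}). I will use the standard facts about tables. Tables only shrink under rotation elimination, and a stable matching $\mu$ obtained as $T_0/Z$ is contained, as a set of pairs, in every intermediate table of that run. For a rotation $\rho=(x_0,y_0),\dots,(x_{r-1},y_{r-1})$ exposed in a table $T$, $y_i$ is first and $y_{i+1}$ is second on $x_i$'s list. Eliminating $\rho$ makes $x_i$ last on $y_{i+1}$'s list and deletes the pairs $\{x_i,y_i\}$. The statement splits into three claims.
\begin{enumerate}
\item At most one rotation of each dual pair is in $Z$.
\item Each dual pair contributes at least one rotation to $Z$.
\item Every singular rotation lies in $Z$.
\end{enumerate}

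For the first claim, suppose $\rho$ is eliminated at some point of the run. Then the pair $\{x_1,y_1\}$ is deleted from the table. The dual $\ol{\rho}=(y_1,x_0),(y_2,x_1),\dots$ requires $x_1$ to be the second entry of $y_1$'s list whenever $\ol{\rho}$ is exposed. Since tables only shrink, that pair never reappears, so $\ol{\rho}$ can never be exposed later. Symmetrically, if $\ol{\rho}$ is eliminated first, $\rho$ is never exposed afterwards. Also no rotation is eliminated twice, since its pairs are gone.

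For the second and third claims, I would first prove a dichotomy lemma. Let $T$ be a table (reachable from $T_0$) in which $\rho$ is exposed, and let $\mu\subseteq T$ be a stable matching. Then either $\mu\subseteq T/\rho$, or $\mu(x_i)=y_i$ for all $i$.

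To prove the lemma, I would use stability of $\mu$ and the first/second/last structure of the lists. If some $x_i$ is matched to $y_i$, then $y_{i+1}$ cannot be matched to something worse than $x_i$ without creating the blocking pair $\{x_i,y_{i+1}\}$. Propagating this around the cycle forces all $x_j$ to be matched to $y_j$. Otherwise no $\mu$-pair is among the deleted ones, which gives $\mu\subseteq T/\rho$.

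In the second case, the pairs $(y_{i+1},x_{i+1})$ all lie in $\mu\subseteq T$ and are positioned as the dual rotation needs. I then want to show that $\ol{\rho}$ is also a rotation of $I$, so $\rho$ is non-singular. I would do this by running the algorithm from $T$ toward $\mu$ (always possible, by the stability of $\mu$ in the reduced tables) until $\ol{\rho}$ becomes exposed.

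With the lemma in hand, fix any rotation $\rho\in R(I)$ and an arbitrary run producing $\mu=T_0/Z$. I would compare this run with a run in which $\rho$ is eliminated, and follow $\mu$ through it. Since the final table equals $\mu$, every pair of every rotation that could still be exposed must eventually disappear. A pair disappears only because $\rho$ itself, or a rotation forcing the dual situation, is eliminated.

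The main obstacle is exactly this last bookkeeping step. The goal is to show that if neither $\rho$ nor $\ol{\rho}$ is eliminated, then at termination some agent still holds two entries, contradicting that the final table is a matching. For a singular $\rho$, the dichotomy lemma's second case is impossible, so $\mu$ always lies in $T/\rho$; using this, I would show $\rho$ stays exposed until it is eliminated. I would first try an induction on the number of eliminations, tracking the invariant "$\rho$ is exposed or already eliminated, unless $\ol{\rho}$ was eliminated". If that becomes messy, I would fall back on the cited argument of \cite[Theorem 4.3.1]{Gusfield1989Stable}.
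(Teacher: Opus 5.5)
Your Claim~1 is correct, but Claims~2 and~3, which carry the real content of the theorem, are not established. A rotation $\rho\in R(I)$ is only known to be exposed in \emph{some} table of \emph{some} run. In the particular run producing $\mu=T_0/Z$, $\rho$ need not be exposed anywhere until its predecessors have been eliminated. So your invariant ``$\rho$ is exposed or already eliminated, unless $\overline{\rho}$ was eliminated'' is false at $T_0$ whenever $\rho$ has predecessors, and the induction cannot start.

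Following $\mu$ through a second run in which $\rho$ is eliminated does not help either. Nothing guarantees $\mu\subseteq T_B$: $\mu$ is lost whenever that run eliminates a rotation all of whose pairs $\{x_i,y_i\}$ lie in $\mu$, which is the second case of your own dichotomy. Even when you do get $\mu\subseteq T_B/\rho$, this only says the pairs $\{x_i,y_i\}$ are absent from $\mu$. It does not say that $\rho$ was the rotation that deleted them in the first run. Your sentence ``a pair disappears only because $\rho$ itself, or a rotation forcing the dual situation, is eliminated'' is exactly the missing lemma, asserted rather than proved.

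There is also a minor slip in the dichotomy lemma: the propagation runs backwards. If $\mu(x_i)=y_i$, then $\mu(x_{i-1})=y_{i-1}$, since otherwise $\{x_{i-1},y_i\}$ blocks. The pair $\{x_i,y_{i+1}\}$ you name cannot block, because $x_i$ already holds its first entry.

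The missing idea is to track \emph{embeddedness} (being exposed in some subtable) instead of exposure. This is Lemma~\ref{lemmaEmbed} (Gusfield's Corollary~4.2), used exactly this way in the proof of Theorem~\ref{thmExposedRotation}. The argument has three steps:
\begin{enumerate}
\item Every $\rho\in R(I)$ is embedded in $T_0$, because any table in which $\rho$ is exposed is a subtable of $T_0$.
\item Eliminating any exposed $\sigma\notin\{\rho,\overline{\rho}\}$ keeps $\rho$ embedded.
\item No rotation is embedded in the final table $\mu$, since every list there has a single entry and exposure needs a second entry on $x_0$'s list.
\end{enumerate}
Hence every run producing $\mu$ eliminates $\rho$ or $\overline{\rho}$. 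When $\rho$ is singular, $\overline{\rho}$ is not a rotation and is never eliminated, so it must be $\rho$. Combined with your Claim~1, this proves the theorem. It also makes the dichotomy lemma unnecessary, together with your non-singularity argument, whose key step (why $\overline{\rho}$ must eventually become exposed when running toward $\mu$) is itself left unargued.
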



Just like in the SM setting,  a poset can be created from $R(I)$.  For $\sigma, \rho \in R(I)$,  let $\sigma \leq \rho$ if for every sequence of eliminations in the phase 2 of Irving's algorithm in which $\rho$ appears, $\sigma$ appears before $\rho$.  The poset $\mc{R}(I) = (R(I), \leq)$ is the {\it rotation poset of $I$}.  Here are some properties about the ordering $\leq$:

\begin{lemma} (Lemma 4.3.7 in \cite{Gusfield1989Stable}.)
	Let $\rho$ and $\sigma$ be non-singular rotations and $\tau$ be a singular rotation of $I$.  Then

	\noindent (i) $\rho$ and $\ol{\rho}$ are incomparable elements in $\mc{R}(I)$;

	\noindent (ii) $\sigma \leq \rho$ if and only if $\ol{\rho} \leq \ol{\sigma}$;

	\noindent (iii) any predecessor of $\tau$ in $\mc{R}(I)$ is a singular rotation.
	\label{lemmaRotations}
\end{lemma}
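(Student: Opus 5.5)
The plan is to argue directly from the definition of $\leq$ through elimination sequences, with Theorem~\ref{thmZcontent} as the main tool. I will use one routine fact about Irving's phase 2 throughout: any partial elimination sequence starting at $T_0$ can be extended to a complete one that ends in a stable matching. This holds because $I$ is solvable, and phase 2 never gets stuck on a solvable instance. Each $\rho\in R(I)$ appears in at least one such sequence, since $R(I)$ is defined as the set of rotations eliminable in some execution.

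\emph{Part (i).} Take any complete sequence $Z$ in which $\rho$ appears. By Theorem~\ref{thmZcontent}, $Z$ contains exactly one rotation of the dual pair $\{\rho,\ol{\rho}\}$, so $\ol{\rho}\notin Z$. Hence it is not the case that $\ol{\rho}$ appears before $\rho$ in every sequence containing $\rho$, so $\ol{\rho}\not\leq\rho$. Exchanging the roles of $\rho$ and $\ol{\rho}$ gives $\rho\not\leq\ol{\rho}$.

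\emph{Part (iii).} This is equally short. Suppose $\sigma\leq\tau$ with $\sigma$ non-singular. By Theorem~\ref{thmZcontent}, every complete sequence contains $\tau$, and therefore contains $\sigma$. On the other hand, $\ol{\sigma}\in R(I)$ appears in some sequence, which we extend to a complete one $Z$. Then $\sigma\notin Z$ while $\tau\in Z$, a contradiction.

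\emph{Part (ii).} This is the real work. By symmetry (apply the statement to $\ol{\rho},\ol{\sigma}$), it suffices to show that $\sigma\leq\rho$ implies $\ol{\rho}\leq\ol{\sigma}$. Suppose not. Then there is an elimination sequence containing $\ol{\sigma}$ in which $\ol{\rho}$ does not occur before $\ol{\sigma}$; extend it to a complete sequence $Z$. By Theorem~\ref{thmZcontent}, $\sigma\notin Z$ and $Z$ contains exactly one of $\rho,\ol{\rho}$.
\begin{itemize}
\item If $\rho\in Z$, then $\rho$ occurs in a sequence without $\sigma$, contradicting $\sigma\leq\rho$.
\item Otherwise $\ol{\rho}\in Z$ and it occurs after $\ol{\sigma}$.
\end{itemize}
In the second case I will perform an exchange. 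Let $T$ be the table just before $\ol{\sigma}$ is eliminated. I will then build a new sequence that agrees with $Z$ up to $T$, never eliminates $\sigma$, and eventually eliminates $\rho$. That again contradicts $\sigma\leq\rho$.

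The main obstacle is justifying this exchange. My first attempt would be an exchange lemma for Irving's tables: if $\ol{\rho}$ is exposed in a table $T'$ and $\rho$ is a rotation, then eliminating $\ol{\rho}$ from $T'$ can be replaced by eliminating $\rho$, and the continuation remains a valid phase-2 run. I would prove this from the appendix definitions by comparing $T'/\rho$ and $T'/\ol{\rho}$ entry by entry on the agents of the rotation. Concretely, I would take the first table along $Z$ where $\ol{\rho}$ is exposed, swap $\ol{\rho}$ for $\rho$ there, and complete the run. Since $\sigma$ was never eliminated before that point, and after the swap we only continue a phase-2 run that can avoid $\sigma$ (its dual $\ol{\sigma}$ has already been eliminated, so Theorem~\ref{thmZcontent} forbids $\sigma$ from occurring), $\rho$ then occurs with no preceding $\sigma$.

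If the table-level exchange proves too delicate, the fallback is a minimality argument. Choose a counterexample pair $(\sigma,\rho)$ with $\sigma$ a Hasse-diagram predecessor of $\rho$, which reduces to the case where $\rho$ becomes exposed exactly when $\sigma$ is eliminated. Then use the explicit pair structure $\ol{\rho}=(y_1,x_0),\dots$ to show that $\ol{\sigma}$ is exposed only after $\ol{\rho}$ is eliminated.
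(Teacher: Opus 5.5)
Your arguments for (i) and (iii) are fine. Both follow from Theorem~\ref{thmZcontent} once one knows that every partial phase-2 run extends to a complete one. Your reduction of (ii) to the case where $\overline{\sigma}$ is eliminated and $\overline{\rho}$ comes later is also correct. (The paper gives no proof of this lemma; it quotes it from Gusfield and Irving, so your argument must stand on its own.)

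The gap is the exchange step. The lemma you want is: if $\overline{\rho}$ is exposed in a reachable table $T'$, then $\rho$ can be eliminated from $T'$ in its place. This is false. $T'/\rho$ is not even defined unless $\rho$ is exposed in $T'$, which requires every predecessor of $\rho$ to have been eliminated, and exposure of $\overline{\rho}$ says nothing about that. (By Lemma~\ref{lemmatwins}, the two are simultaneously exposed only when every $x_i$ and $y_i$ has exactly two entries left.) To see that your lemma cannot hold, take any non-singular $\pi<\rho$:
\begin{itemize}
\item A complete run containing $\overline{\pi}$ cannot contain $\rho$, so it contains $\overline{\rho}$.
\item Swapping at the table where $\overline{\rho}$ is exposed would give a run eliminating $\rho$ with no $\pi$ before it, contradicting $\pi<\rho$.
\item Hence $\mathcal{R}'(I)$ would have to be an antichain, so every complete subset would be closed.
\item By Theorem~\ref{mainthm3}, the number of stable matchings would then be a power of $2$. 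That is false in general, already for SM instances viewed as SR instances.
\end{itemize}
The fallback does not close the gap either. The reduction to covering pairs is harmless. But ``$\rho$ becomes exposed exactly when $\sigma$ is eliminated'' is not automatic when $\rho$ has several immediate predecessors. And ``use the pair structure to show that $\overline{\sigma}$ is exposed only after $\overline{\rho}$ is eliminated'' is just the claim restated, with no mechanism.

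What is missing is a positive tool that lets $\rho$ be eliminated \emph{after} $\overline{\sigma}$, without passing through $\overline{\rho}$. Let $T^*$ be the table right after $\overline{\sigma}$ is eliminated in $Z$. Every rotation of $I$ is embedded in $T_0$, no rotation eliminated up to this point equals $\rho$ or $\overline{\rho}$, and $\overline{\sigma}\neq\overline{\rho}$. So by Lemma~\ref{lemmaEmbed}, $\rho$ is still embedded in $T^*$. Now suppose you know that a rotation embedded in a table reachable from $T_0$ is eliminated in some phase-2 continuation from that table. Then you are done: that continuation eliminates $\rho$ but never $\sigma$, because $\overline{\sigma}$ is already gone (Theorem~\ref{thmZcontent}), contradicting $\sigma\le\rho$. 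This ``embedded implies eventually eliminable'' fact is where the real content of (ii) lies. It has to be proved from the mechanics of tables and necessary eliminations, and your proposal never supplies it.
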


Let $R'(I)$ consist of all the non-singular rotations of $I$.  Let $\mc{R}'(I)$ be the subposet  induced by $R'(I)$; it is referred to as the {\it reduced rotation poset of $I$}.  A subset $S \subseteq R'(I)$ is {\it complete} if it contains exactly one element from each dual pair of rotations of $I$.  It is {\it closed} if whenever $\rho \in S$, all the predecessors of $\rho$ are also in $S$.   Here are the counterparts of Theorems \ref{mainthm1} and \ref{mainthm2}:

\begin{theorem}[\cite{Gusfield1988Structure, Irving1986Stable}]
	Let $I$ be an SR instance and let $\mc{R}'(I) = (R'(I), \leq)$ be its reduced rotation poset.  There is a one-to-one correspondence between the stable matchings of $I$ and the complete closed subsets of $\mc{R}'(I)$.  In particular, a stable matching $\mu$ of $I$ corresponds to the complete closed subset $S_\mu$ of $\mc{R}'(I)$ if and only if starting from $T_0$,  all the singular rotations of $I$ and all the rotations in $S_\mu$ are exactly the rotations that have to be eliminated to obtain $\mu$.
	\label{mainthm3}
\end{theorem}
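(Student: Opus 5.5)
We would prove the correspondence by mapping each stable matching to the non-singular part of its elimination set, then checking that this map is well defined, injective and surjective. For a stable matching $\mu$, recall from the discussion of Irving's algorithm that $\mu = T_0/Z$ for a set $Z$ determined uniquely by $\mu$. Define $S_\mu = Z \cap R'(I)$. The "in particular" clause of the statement is then exactly this definition, provided we know that $Z$ consists of all singular rotations together with $S_\mu$. That fact is given by Theorem~\ref{thmZcontent}, which also shows that $S_\mu$ is complete.

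\emph{Closedness of $S_\mu$.} Let $\rho \in S_\mu$ and let $\sigma \leq \rho$ with $\sigma$ non-singular. Fix an elimination sequence producing $\mu$ in which $\rho$ appears. By the definition of $\leq$, $\sigma$ appears earlier in that sequence, so $\sigma \in Z$ and hence $\sigma \in S_\mu$.

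\emph{Injectivity.} If $S_\mu = S_{\mu'}$, then the two elimination sets $Z$ and $Z'$ both equal the set of all singular rotations together with $S_\mu$. So $Z = Z'$, and $\mu = \mu'$ because the elimination set determines the outcome.

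\emph{Surjectivity.} Let $S$ be a complete closed subset of $\mc{R}'(I)$, and let $Z$ be the set of all singular rotations together with $S$. We first check that $Z$ is closed in the full poset $\mc{R}(I)$.
\begin{itemize}
\item A singular predecessor of any rotation lies in $Z$, since $Z$ contains every singular rotation.
\item A singular rotation has no non-singular predecessors, by Lemma~\ref{lemmaRotations}(iii).
\item A non-singular predecessor of an element of $S$ lies in $S$, because $S$ is closed.
\end{itemize}
Next we eliminate the elements of $Z$ from $T_0$ following any linear extension of $\leq$ restricted to $Z$. We must show two things: each rotation is exposed at the moment we reach it, and the final table is a stable matching.

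The key tool is a structural characterization of exposed rotations, which we would establish (or take from the development in Appendix~\ref{sec:Appendix2}). It says that in a table $T_0/Y$ with $Y$ closed, a rotation $\rho \notin Y$ is exposed exactly when all its predecessors lie in $Y$ and $\ol{\rho} \notin Y$.

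\emph{Validity of each step.} Suppose we are about to eliminate $\rho \in Z$. Its predecessors lie in $Z$ and come earlier in the linear extension, so they have already been eliminated. Its dual $\ol{\rho}$ is not in $Z$: completeness rules this out when $\rho$ is non-singular, and a singular $\rho$ has no dual. So $\rho$ is exposed.

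\emph{The final table is a matching.} Suppose the table $T_0/Z$ still had an exposed rotation $\tau$. If $\tau$ is singular, it already lies in $Z$. If $\tau$ is non-singular, completeness puts either $\tau$ or $\ol{\tau}$ in $Z$. In every case this contradicts the characterization, since $\tau$ would either already be eliminated or have its dual eliminated. With no exposed rotations left, Irving's phase~2 terminates, and the final table is a stable matching $\mu$ with $S_\mu = S$.

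The main obstacle is the exposure characterization. Its forward direction is essentially the definition of $\leq$. The converse is the substantive part: a rotation becomes exposed as soon as its predecessors are gone and its dual has not been eliminated. For this we would argue through the table-reduction rules of phase~2, showing that eliminations of rotations incomparable to $\rho$ never disturb the pairs $(x_i,y_i)$ of $\rho$ or the second entries of the $x_i$'s lists. The exception is the dual $\ol{\rho}$, whose elimination deletes exactly those entries. This is the content of the Gusfield--Irving analysis, and it is where most of the work would go.
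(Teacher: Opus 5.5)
Your argument is correct, but it takes a different route from the paper. The paper does not prove this statement; it quotes it from Gusfield and Irving. Its appendix records the full-poset version, Theorem~\ref{mainthm3-Gusfield}: stable matchings correspond to closed subsets of $\mc{R}(I)$ that contain every singular rotation and one rotation from each dual pair. Read that way, the paper's implicit derivation is just to discard the singular rotations, using Lemma~\ref{lemmaRotations}(iii) to see that for such sets closedness in $\mc{R}(I)$ and in $\mc{R}'(I)$ agree. That step is exactly your three-bullet check.

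Beyond that, you re-derive the full-poset correspondence itself from more primitive ingredients: uniqueness of elimination sets, Theorem~\ref{thmZcontent}, the definition of $\leq$, and an exposure criterion. Your ``substantive converse'' is precisely the appendix's Theorem~\ref{thmExposedRotation}. Your elimination along a linear extension is its Corollary~\ref{corLinearExtension}. The paper proves that corollary only for a $Z$ already known to yield a stable matching, which relies on the cited Theorem~\ref{mainthm3-Gusfield}. Your surjectivity argument shows the same reasoning produces a stable matching from an arbitrary complete closed $S$, so that citation can be avoided. Your route buys self-containedness relative to Gusfield's Theorems~5.1--5.2, though the real content still sits in the exposure lemma, which you import just as the paper does.

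There are three points to tighten.
\begin{enumerate}
\item Well-definedness of $\mu\mapsto S_\mu$ needs every stable matching to be $T_0/Z$ for some $Z$. The paper's text only gives uniqueness of $Z$ when it exists, and Theorem~\ref{thmZcontent} is conditional on $\mu=T_0/Z$. So this reachability fact must be cited separately from Gusfield--Irving.
\item In the forward direction of your criterion, the clause $\ol{\rho}\notin Y$ does not follow from the definition of $\leq$. Use Lemma~\ref{lemmaEmbed}: once $\ol{\rho}$ is eliminated, $\rho$ is no longer embedded and so is never exposed later. Alternatively, sidestep it: continue phase~2 from $T_0/Z$ to a stable matching $T_0/Z'$ with $Z\subseteq Z'$. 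Then Theorem~\ref{thmZcontent} forces $Z'=Z$, so $T_0/Z$ is already a matching.
\item In your sketch of the converse, preservation of $y_i,y_{i+1}$ on $x_i$'s list under every elimination other than $\ol{\rho}$ (Lemma~\ref{lemmaEmbed}) is only half the argument. You also need that every entry other than $y_i$ lying ahead of $y_{i+1}$ on $x_i$'s list can be removed only by a predecessor of $\rho$ (Lemma~\ref{lemmaPred} and Corollary~\ref{corPred}). That is what makes ``all predecessors eliminated'' sufficient for exposure.
\end{enumerate}
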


\begin{theorem}[Lemma 6.4 in \cite{Gusfield1988Structure}, Section 4.4.1 in \cite{Gusfield1989Stable}]
	Let $I$ be an SR instance with $2n$ agents.  Then $I$ has $O(n^2)$ rotations.  Furthermore, there is a directed acyclic graph that represents $\mc{R}'(I)$ that has $O(n^2)$ vertices and $O(n^2)$ edges, and it can be constructed in $O(n^3 \log n)$ time.
	\label{thm:sr-poset}
\end{theorem}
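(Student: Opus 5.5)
The statement is Theorem~\ref{thm:sr-poset}, which the paper cites from \cite{Gusfield1988Structure,Gusfield1989Stable}. My plan mirrors the SM argument behind Theorem~\ref{mainthm2} and has three parts: bound the number of rotations, compute the set $R'(I)$, and build a precedence graph with the right size and transitive closure.

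\emph{Counting rotations.} I would charge each rotation to table entries, as in SM. Suppose $\rho=(x_0,y_0),\dots,(x_{r-1},y_{r-1})$ is exposed in a table $T$. Then $y_i$ is first in $x_i$'s list, $y_{i+1}$ is second, and eliminating $\rho$ deletes the pair $\{x_i,y_i\}$ from $T$. Every phase-2 table is obtained from $T_0$ by deletions only. So in any single elimination sequence, each pair $\{x,y\}$ is deleted at most once. I would then show that an ordered pair $(x,y)$ can occur as a pair of at most $O(1)$ rotations of $R(I)$ across all executions; I expect the bound to be two, namely $\rho$ and possibly a pair involving $\ol{\rho}$. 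The argument would be that the rotation containing $(x,y)$ is determined by the deterministic ``second-choice'' chain starting from $x$ once $y$ is first in $x$'s list, using Theorem~\ref{thmZcontent} to rule out conflicting occurrences. Since there are $O(n^2)$ ordered pairs, the total length of all rotations is $O(n^2)$, so there are $O(n^2)$ rotations.

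\emph{Computing $R'(I)$.} I would run Irving's phase 2 once, which takes $O(n^2)$ time, to reach some stable matching $T_0/Z$. By Theorem~\ref{thmZcontent}, $Z$ contains every singular rotation and exactly one member of each dual pair. For each $\rho\in Z$, I would form the candidate dual $\ol{\rho}$ directly from its definition. I would then check whether $\ol{\rho}$ is a genuine rotation by testing its pairs against the phase-1 table and the entries deleted along the run. This separates singular from non-singular rotations and yields $R'(I)$ in time linear in the total rotation size, so $O(n^2)$.

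\emph{Precedence graph.} Following the SM construction, I would label each entry of $T_0$ with two rotations: the one that brings $y$ to the top of $x$'s list, and the one that deletes the entry. From these labels I would add two edge types. A \emph{type-1} edge goes $\sigma\to\rho$ when $\sigma$ moves $x$ to $y$ and $(x,y)\in\rho$. A \emph{type-2} edge goes $\sigma\to\rho$ when $\rho$ moves $x$ past an entry whose removal requires $\sigma$. I would close the edge set under the duality of Lemma~\ref{lemmaRotations}(ii) by adding $\ol{\rho}\to\ol{\sigma}$ for every edge $\sigma\to\rho$. Each table entry generates $O(1)$ edges, which gives $O(n^2)$ edges. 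The $O(n^3\log n)$ running time would come from computing the labels: for each agent I would scan its list against the $O(n)$ relevant rotations, using sorted or balanced-tree lookups, at cost $O(n\log n)$ per agent per rotation class.

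\emph{Main obstacle.} The hardest step is proving that the transitive closure of this graph is exactly $\le$ restricted to $R'(I)$.
\begin{itemize}
\item \emph{Soundness} (every edge is a true precedence) should follow locally from the definition of exposure.
\item \emph{Completeness} (every true precedence is implied by a path) is the difficult part. I would first try a contrapositive argument: if $\sigma\not\le\rho$ in the graph's closure, eliminate a closed set containing $\rho$ but not $\sigma$, building the elimination sequence greedily and using Lemma~\ref{lemmaRotations}(i),(iii) to keep it valid. This would contradict $\sigma\le\rho$.
\end{itemize}
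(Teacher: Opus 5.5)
The paper does not prove this statement; it imports it from Gusfield's paper and the Gusfield--Irving book. Judged on its own, your outline has genuine gaps exactly where SR differs from SM.

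The first gap is how you obtain $R'(I)$. Suppose $\rho=(x_0,y_0),\dots,(x_{r-1},y_{r-1})\in Z$ is exposed in a table $T$ of your run. Then every pair $\{x_i,y_{i+1}\}$ of the candidate $\overline{\rho}$ is an entry of $T$, hence of $T_0$. Moreover, $y_{i+1}$ ranks $x_i$ above $\ell_T(y_{i+1})=x_{i+1}$. So checks of the pairs of $\overline{\rho}$ against $T_0$ and against your run's deletions come out the same whether $\rho$ is singular or not. Whether $\overline{\rho}$ is a rotation is a global question: does some elimination sequence avoiding $\rho$ reach a table exposing $\overline{\rho}$? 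It is equivalent to $\{x_0,y_0\}$ being a stable pair. One direction is Lemma~\ref{lemmaStablePartners}. For the other, a singular $\rho$ lies in every run by Theorem~\ref{thmZcontent}, and eliminating it deletes $\{x_0,y_0\}$. You give no procedure for deciding this, so you have neither the vertex set $R'(I)$ nor the dual rotations themselves.

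The second gap is the precedence graph. The SM construction reads its labels off one maximal chain, which eliminates \emph{every} rotation, so each entry's labels are well defined. Your run eliminates only one member of each dual pair, and which rotation (if any) deletes an entry depends on the run. For instance, $\{x_i,y_i\}$ is deleted by $\rho$, yet it is a stable pair when $\rho$ is non-singular. Labels read off your run give edges only inside $Z$, and dualising adds edges only inside $\{\overline{\rho}:\rho\in Z\}$. So the closure never contains a relation $\sigma<\overline{\rho}$ with $\sigma,\rho\in Z$. These mixed relations are exactly the crossing edges this paper is built on. For the mirror poset of Figure~\ref{figmirror} with run set $\{\rho_1,\dots,\rho_5\}$, you would miss $\rho_2<\overline{\rho_4}$, $\rho_3<\overline{\rho_4}$, $\rho_3<\overline{\rho_5}$ and their duals.

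A correct edge set is given by the context-dependent necessary eliminations $\sigma_{x_i,z}\to\rho$ of Lemma~\ref{lemmaPred}, taken over every $\rho\in R'(I)$, including the duals you never eliminate. With that edge set, your contrapositive completeness argument does go through; it is essentially the argument of Theorem~\ref{thmExposedRotation}. You would still need a way to compute these $\sigma_{x_i,z}$ for rotations outside your run.

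Finally, your rotation count and your ``$O(1)$ edges per entry'' both need a proof of two bounds. A given ordered pair must lie in only boundedly many rotations, and a given entry of $x$'s list must be skipped by only boundedly many rotations. The ``deterministic second-choice chain'' does not give this, because $s_T(x)$ and $\ell_T(s_T(x))$ depend on which other rotations have already been eliminated.
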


\begin{corollary}
	Let $I$ be an SR instance with $2n$ agents and let  $\mc{R}'(I)$ its reduced rotation poset.  Given a complete closed subset $S$ of $\mc{R}'(I)$,  the stable matching of $I$ that corresponds to $S$ can be obtained in $O(n^2)$  time.
	\label{corFindMatching}
\end{corollary}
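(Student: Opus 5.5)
We prove Corollary~\ref{corFindMatching} by replaying phase~2 of Irving's algorithm, taking care to eliminate exactly the rotations that Theorem~\ref{mainthm3} prescribes. That theorem says the stable matching $\mu$ corresponding to $S$ is the one obtained from $T_0$ by eliminating all singular rotations of $I$ together with the rotations in $S$. So it suffices to order these rotations into a valid elimination sequence and perform the eliminations cheaply. We assume the phase-1 table $T_0$ and the rotation data (each rotation given as its list of pairs, together with the poset or DAG from Theorem~\ref{thm:sr-poset}) are available. This is the sense in which the $O(n^2)$ bound is stated, since building the poset itself costs more.

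\textbf{Step 1: find a valid order.} Let $Z$ be $S$ together with all singular rotations. First we check that $Z$ is closed in the full poset $\mc{R}(I)$.
\begin{itemize}
\item A predecessor of a singular rotation is singular by Lemma~\ref{lemmaRotations}(iii), so it lies in $Z$.
\item Take $\rho\in S$. Its non-singular predecessors lie in $S$ because $S$ is closed in $\mc{R}'(I)$, and its singular predecessors are in $Z$ by construction.
\end{itemize}
Any topological order of the DAG restricted to $Z$ is therefore a linear extension of a closed set. By the definition of $\leq$, every rotation in this order is exposed at the moment it is reached: its predecessors have all been eliminated, and the elimination sequence is a legitimate run of phase~2. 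By Theorem~\ref{mainthm3} and the uniqueness remark ($\mu=\mu'$ iff $Z=Z'$), the run ends at the table equal to $\mu$. The topological sort takes time linear in the DAG size, which is $O(n^2)$.

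\textbf{Step 2: perform the eliminations.} Eliminating $\rho=(x_0,y_0),\dots,(x_{r-1},y_{r-1})$ means, for each $i$, deleting from $y_{i+1}$'s list every entry worse than $x_i$, along with the symmetric entries. Each deletion removes a table entry permanently, and $T_0$ has $O(n^2)$ entries. With doubly linked lists and rank arrays, each deletion costs $O(1)$ and each rotation costs $O(r)$ beyond its deletions. Since every rotation has total size $O(n^2)$ summed over all rotations, as in Irving's analysis, the whole phase costs $O(n^2)$. Finally we read off $\mu$ from the table, where every list has length one, in $O(n)$ time.

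The main obstacle is Step~2's amortized accounting: the deletions must be charged to table entries rather than to rotations. This is the standard argument behind the $O(n^2)$ bound for phase~2 of Irving's algorithm, and we can cite it (Appendix~\ref{sec:Appendix2}, \cite{Gusfield1989Stable}) rather than redo it.
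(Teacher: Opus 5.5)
Your route, replaying phase~2 on $Z=S\cup Z_0$ in a topological order, differs from the paper's. As written, Step~1 has a genuine gap. The corollary gives you only $I$ and $S\subseteq R'(I)$, and the DAG you invoke from Theorem~\ref{thm:sr-poset} represents $\mc{R}'(I)$, so it has no vertices for singular rotations. ``The DAG restricted to $Z$'' is therefore not available to you, and you never say how to obtain the singular rotations or a valid order in which to eliminate them. Without them you cannot reach the table of $\mu$ at all, because by Theorem~\ref{thmZcontent} every singular rotation must be eliminated to obtain any stable matching. This is not a technicality. The paper explicitly contrasts the corollary with Gusfield's $O(n^2)$ procedure, whose input is a closed subset of the full $\mc{R}(I)$ containing every singular rotation. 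Working from the non-singular rotations alone is exactly what the statement adds, so your sketch in effect proves only Gusfield's version. Repairing your approach needs an extra idea. For instance, by Lemma~\ref{lemmaRotations}(iii) the singular rotations form a down-set, so they may all be eliminated first. One $O(n^2)$ run of Irving's algorithm, recording the rotations it eliminates and keeping those that are neither in $S$ nor duals of members of $S$, then lists all of them in an order compatible with $\leq$.

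Separately, the claim that every rotation in your order is exposed ``by the definition of $\leq$'' is unjustified. The definition only says a rotation cannot be eliminated before its predecessors. The converse is a real theorem: $\rho$ is exposed once all its predecessors, but not $\ol{\rho}$, have been eliminated (Theorem~\ref{thmExposedRotation}, whence Corollary~\ref{corLinearExtension}). It rests on Lemmas~\ref{lemmaEmbed} and~\ref{lemmaPred}. The dual condition, which you omit, is essential: a rotation with no predecessors is not exposed in any table obtained after its dual has been eliminated. It holds in your setting because $Z$ is complete, but you must invoke it and cite that result. Your Step~2 accounting is fine.

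For comparison, the paper never simulates eliminations. Take an agent $a$ in the $X$-set of some rotation of $S$. By Corollary~\ref{corPred}, among those rotations the one whose pair $(a,y_i)$ has $y_i$ lowest on $a$'s list is eliminated last, so $a$'s partner in $\mu$ is $y_{i+1}$, read directly off that rotation. Agents in no non-singular rotation belong to fixed pairs (Lemma~\ref{lemmaStablePartners}), so one run of Irving's algorithm supplies their partners. That needs only $I$ and the pair lists of $S$: no poset, no ordering, and no singular rotations.
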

\begin{proof}
	See Appendix~\ref{sec:Appendix2}.
\end{proof}

Thus, like the SM setting,  there is a polynomial-sized structure that encodes the stable matchings of an SR instance I.  What is missing from this discussion though is the meta-structure of the stable matchings of $I$.  Indeed, it was not until the work of Cheng and Lin~\cite{Cheng2011Stable} in 2011 that this meta-structure was fully understood.   They took a combinatorial approach and viewed $\mc{R}'(I)$ in general terms, calling it a {\it mirror poset}.

\begin{definition}
	In a mirror poset $\mathcal{P} = (P, \leq)$,  the element set $P$ is partitioned into dual pairs of elements, where the dual of $\rho \in P$ is denoted as $\ol{\rho}$, such that

	\noindent (i) for each $\rho \in P$,  $\rho$ and its dual $\ol{\rho}$ are incomparable and

	\noindent (ii) for any $\sigma, \rho \in P$, $\sigma \leq \rho$ if and only if $\ol {\rho} \leq \ol{\sigma}$.
\end{definition}

A subset $S \subseteq P$ is {\it partially complete} if it contains at most one element from each dual pair  in $\mathcal{P}$; it is {\it complete} if it contains {\it exactly} one element from each pair in $\mathcal{P}$.  It is {\it closed} whenever $\rho \in S$, every predecessor of $\rho$ is also in $S$.   

\begin{definition}
	A graph $G=(V,E)$ is a median graph if for any three vertices $u, v, w$, there exists a unique vertex that lies in a shortest path from $u$ to $v$, $u$ to $w$ and $v$ to $w$.
\end{definition}

As we shall see, mirror posets are related to {median graphs}, a well-studied class of graphs that includes hypercubes, trees and
the cover graphs\footnote{If $\mc{P}$ is a poset, the {\it cover graph} of $\mc{P}$ is the undirected version of the Hasse diagram of $\mc{P}$.} of distributive lattices.  Median graphs are connected, bipartite and have at most $|V| \log_2 |V|$ edges \cite{Imrich2000Product}.

\begin{theorem}[{\cite[Corollary 1 and Lemma 4]{Cheng2011Stable}}] 
	Let $\mathcal{P}$ be a mirror poset.   Denote as $G(\mathcal{P})$ the graph whose vertices are the complete closed subsets of $\mathcal{P}$ and where any two of them are adjacent if and only if they differ by one dual element.  Then $G(\mathcal{P})$ is a median graph.   Moreover, for any two closed subsets $S$ and $S'$ of $\mathcal{P}$, their distance in $G(\mathcal{P})$ is  $d(S, S') =  |S-S'| = |S'-S|$. 
	\label{thmMirrorToMedian}
\end{theorem}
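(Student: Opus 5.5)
The plan is to follow Cheng and Lin's argument: first establish the distance formula, then verify the median property directly from the structure of complete closed subsets.

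\textbf{Step 1: the distance formula.} If $S,S'$ are complete, then for each dual pair $\{\rho,\ol\rho\}$ either both sets pick the same element or they pick opposite ones. Hence $|S-S'| = |S'-S|$, and both equal the number of pairs on which the two sets disagree.

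For the lower bound, adjacent vertices differ in exactly one pair. So any path from $S$ to $S'$ has length at least $|S-S'|$.

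For the upper bound, I will show that if $S\neq S'$ there is a neighbor $S''$ of $S$ with $|S''-S'| = |S-S'|-1$; induction then finishes. The natural choice is to take a maximal element $\rho$ of $S-S'$ under $\leq$ and set $S''=S-\{\rho\}\cup\{\ol\rho\}$. Completeness of $S''$ is clear. Closure needs two checks:
\begin{itemize}
\item \emph{Removing $\rho$.} No successor $\sigma>\rho$ may lie in $S$. If such a $\sigma$ lay in $S\cap S'$, closure of $S'$ would force $\rho\in S'$, a contradiction. If it lay in $S-S'$, maximality of $\rho$ would be violated.
\item \emph{Adding $\ol\rho$.} Every predecessor $\tau<\ol\rho$ must lie in $S''$. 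By mirror property (ii), $\tau<\ol\rho$ means $\rho<\ol\tau$. Since $\ol\rho\in S'$, closure of $S'$ gives $\tau\in S'$. If $\tau\notin S$, then $\ol\tau\in S-S'$ with $\ol\tau>\rho$, contradicting maximality. Also $\tau\neq\rho$, because $\rho$ and $\ol\rho$ are incomparable. So $\tau\in S''$.
\end{itemize}
This gives $d(S,S')=|S-S'|$, and in particular $G(\mathcal P)$ is connected. The "closed subsets" in the statement should be read as complete closed subsets.

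\textbf{Step 2: the median property.} For three vertices $A,B,C$, define $M$ as the majority set: for each dual pair, $M$ picks the element chosen by at least two of $A,B,C$.
\begin{itemize}
\item \emph{$M$ is complete}, by construction.
\item \emph{$M$ is closed.} If $\rho\in M$, then $\rho$ lies in two of the sets, say $A$ and $B$. Any predecessor of $\rho$ then lies in both $A$ and $B$ by their closure, so it is in $M$.
\item \emph{$M$ lies on shortest paths.} Using the distance formula, $d(A,M)+d(M,B)=d(A,B)$, because every pair on which $M$ disagrees with $A$ is a pair where $A$ and $B$ disagree, and likewise for $B$. The same holds for the other two pairs of vertices.
\item \emph{Uniqueness.} Suppose $X$ lies on shortest paths between all three pairs. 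Then $X$ must agree with $A$ and $B$ wherever they agree. Otherwise the pair would be counted in both $d(A,X)$ and $d(X,B)$, giving a sum strictly larger than $d(A,B)$. Applying this to all three pairs of vertices forces $X$ to take the majority choice in every dual pair, so $X=M$.
\end{itemize}

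\textbf{Main obstacle.} I expect the closure check for $S''$ in Step 1 to be the main difficulty. It is the one place where the mirror axiom (ii) and the incomparability axiom (i) are genuinely needed, and where the choice of $\rho$ as a \emph{maximal} element of $S-S'$ does the work.
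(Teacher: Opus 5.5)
Your proof is correct and complete. There is no in-paper argument to match it against. The paper states this theorem without proof, imports it from Cheng and Lin~\cite{Cheng2011Stable}, and uses it as a black box: the distance formula feeds Lemma~\ref{lemmasubset} and Theorem~\ref{thmSREgal}, and the median property is what Theorem~\ref{thmMedianToSemi} is applied to.

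Your argument is a direct, self-contained replacement for that citation. You identify each complete closed subset with its choices on the dual pairs, which makes the vertices points of a hypercube. Step~1 shows that the graph metric equals the number of disagreeing pairs. The key move there is flipping a maximal element $\rho$ of $S-S'$, and this is the only place the two mirror-poset axioms are needed. Step~2 shows that complete closed subsets are closed under the pairwise majority, which pins down the median of any triple. The gain is that the paper's later results would rest only on axioms (i) and (ii) rather than on an external source; the citation buys the paper brevity.

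Two small points are worth making explicit.
\begin{itemize}
\item In proving $d(A,M)+d(M,B)=d(A,B)$, the containment of disagreement sets you state does not by itself give equality. Add that on a pair where $A$ and $B$ disagree, $M$ disagrees with exactly one of them, since there are only two choices per pair. So the two disagreement sets are disjoint and together exhaust the disagreements of $A$ and $B$.
\item If you want $G(\mathcal{P})$ to be nonempty, so that the median condition is not vacuous, apply Lemma~\ref{lemmapartial} to $U=\emptyset$.
\end{itemize}
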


Figure \ref{figmirror} shows an example of a mirror poset and the median graph formed by its complete closed subsets. Using the correspondence described in Theorem \ref{mainthm3}, we now have a meta-structure for the stable matchings of $I$:

\begin{corollary}[\cite{Cheng2011Stable}]
	For an SR instance $I$, let $G(I)$ denote the graph whose vertices are the stable matchings of $I$ and any two stable matchings $\mu$ and $\mu'$ are adjacent if and only if $S_{\mu}$ and $S_{\mu'}$ differ by one rotation.  Then $G(I)$ is a median graph.  Moreover, for stable matchings $\mu$ and $\eta$, their distance in $G(I)$ is $|S_\mu - S_\eta| = |S_\eta - S_\mu|$.
	\label{corSRMeta}
\end{corollary}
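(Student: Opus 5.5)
The corollary follows from Theorem~\ref{mainthm3} combined with Theorem~\ref{thmMirrorToMedian}, applied to the mirror poset $\mc{R}'(I)$.

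First I check that $\mc{R}'(I)$ is a mirror poset. Its elements are exactly the non-singular rotations, and these are partitioned into dual pairs $\{\rho,\ol{\rho}\}$. Property (i) of the definition is Lemma~\ref{lemmaRotations}(i). For property (ii), Lemma~\ref{lemmaRotations}(ii) gives the equivalence $\sigma\le\rho \iff \ol{\rho}\le\ol{\sigma}$ in $\mc{R}(I)$. Because $\mc{R}'(I)$ is the induced subposet on $R'(I)$, and the dual of a non-singular rotation is again non-singular, the same equivalence holds in $\mc{R}'(I)$. Hence $\mc{R}'(I)$ is a mirror poset.

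Next I identify the two graphs. By Theorem~\ref{mainthm3}, the map $\mu\mapsto S_\mu$ is a bijection from the stable matchings of $I$ onto the complete closed subsets of $\mc{R}'(I)$, that is, onto the vertex set of $G(\mc{R}'(I))$. The adjacency rule in $G(I)$ is stated directly in terms of $S_\mu$ and $S_{\mu'}$. Two complete sets that ``differ by one rotation'' must differ by swapping a single rotation for its dual. Indeed, by completeness $|S_\mu - S_{\mu'}| = |S_{\mu'}-S_\mu|$, and the element removed and the element added come from the same dual pair. This is exactly the adjacency rule of Theorem~\ref{thmMirrorToMedian}. So the bijection is a graph isomorphism $G(I)\cong G(\mc{R}'(I))$, and therefore $G(I)$ is a median graph.

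The distance statement transfers through the same isomorphism. By Theorem~\ref{thmMirrorToMedian}, $d(\mu,\eta)=d(S_\mu,S_\eta)=|S_\mu-S_\eta|=|S_\eta-S_\mu|$.

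The only step that needs care is reconciling the phrase ``differ by one rotation'' with ``differ by one dual element.'' I handle it through completeness as above: for complete sets, a symmetric difference of a single dual pair is the only possible form of a minimal difference. Everything else is direct substitution into the cited theorems.
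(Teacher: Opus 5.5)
Your proposal is correct and takes the same route as the paper, which obtains this corollary directly by transporting Theorem~\ref{thmMirrorToMedian} along the bijection of Theorem~\ref{mainthm3}, with $\mc{R}'(I)$ a mirror poset by Lemma~\ref{lemmaRotations}. Your check that ``differ by one rotation'' coincides with ``differ by one dual element'' for complete sets is a detail the paper leaves implicit.
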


It is interesting that median graphs, the meta-structures for SR stable matchings,  are unordered while distributive lattices, the meta-structures for SM stable matchings,  are ordered.  The difference has to do with the fact that SM instances have two sides -- the men and the women -- and these two sides are responsible for the meet and join operation and, consequently, the minimum and maximum elements of the distributive lattices.  In contrast, there are no sides to SR instances and therefore creating an ordering for their stable matchings is less natural.

\begin{figure}
	\begin{center}
		\begin{tikzpicture}
			\node (1+) at (-2,0) {$\rho_1$};
			\node (2+) at (0,0) {$\rho_2$};
			\node (3+) at (2,0) {$\rho_3$};
			\node (4+) at (-3,1) {$\rho_4$};
			\node (5+) at (-1,1) {$\rho_5$};
			\node (1-) at (2,2) {$\ol{\rho_1}$};
			\node (2-) at (0,2) {$\ol{\rho_2}$};
			\node (3-) at (-2,2) {$\ol{\rho_3}$};
			\node (4-) at (1,1) {$\ol{\rho_4}$};
			\node (5-) at (3,1) {$\ol{\rho_5}$};

			\begin{scope}[every edge/.style={->,draw=black, thick}]
				\draw (1+) edge node[above]{} (4+);
				\draw (4-) edge node[above]{} (1-);
				\draw (1+) edge node[above]{} (5+);
				\draw (5-) edge node[above]{} (1-);
				\draw (2+) edge node[above]{} (5+);
				\draw (5-) edge node[above]{} (2-);
				\draw (2+) edge node[above]{} (4-);
				\draw (4+) edge node[above]{} (2-);
				\draw (3+) edge node[above]{} (4-);
				\draw (4+) edge node[above]{} (3-);
				\draw (3+) edge node[above]{} (5-);
				\draw (5+) edge node[above]{} (3-);
			\end{scope}
		\end{tikzpicture}
		\hspace*{2em}
		\begin{tikzpicture}
			\node (1) at (-2.5,0) {$\rho_1 \rho_2 \ol{\rho_3} \rho_4 \rho_5$};
			\node (2) at (0,0) {$\rho_1 \rho_2 {\rho_3} \rho_4 \rho_5$};
			\node (3) at (2.5,0) {$\rho_1 \rho_2 {\rho_3} \ol{\rho_4} \rho_5$};
			\node (4) at (0,1.25) {$\rho_1 \rho_2 {\rho_3} \rho_4 \ol{\rho_5}$};
			\node (5) at (2.5,1.25) {$\rho_1 \rho_2 {\rho_3} \ol{\rho_4} \ol{\rho_5}$};
			\node (6) at (0,2.5) {$\rho_1 \ol{\rho_2} {\rho_3} \rho_4 \ol{\rho_5}$};
			\node (7) at (2.5,2.5) {$\ol{\rho_1} \rho_2 {\rho_3} \ol{\rho_4} \ol{\rho_5}$};

			\begin{scope}[every edge/.style={-,draw=black, thick}]
				\draw (1) edge node[above]{} (2);
				\draw (2) edge node[above]{} (3);
				\draw (2) edge node[above]{} (4);
				\draw (3) edge node[above]{} (5);
				\draw (4) edge node[above]{} (5);
				\draw (4) edge node[above]{} (6);
				\draw (5) edge node[above]{} (7);
			\end{scope}
		\end{tikzpicture}
	\end{center}
	\caption{On the left is a mirror poset $\mc{P}$ with five dual pairs.    On the right is the median graph $G(\mc{P})$ formed by its seven complete closed subsets. Two subsets are adjacent in the median graph if and only if they differ by one element.}
	\Description{A side-by-side diagram. The left panel is a directed mirror poset with five paired rotations, and the right panel is the seven-vertex median graph formed by its complete closed subsets.}
	\label{figmirror}
\end{figure}
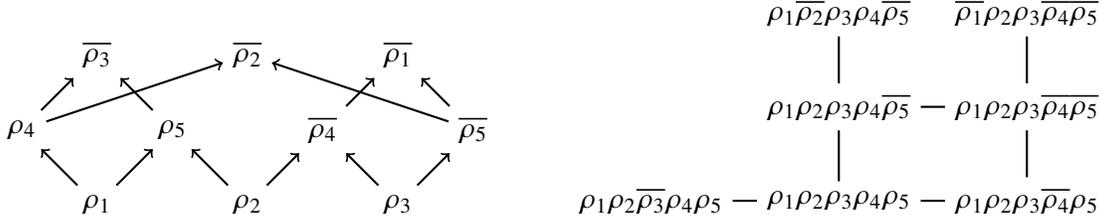




\paragraph{Optimal Stable Matchings} We now consider the optimal stable matching problem for SR instance $I$.  As before, there is a cost function $C$ so that for distinct agents $a, b, b'$,  $C(a,b) \neq C(a, b')$  and $C(a, b) < C(a, b')$ if and only if $a$ prefers $b$ to $b'$.  The {\it cost} of a stable matching $\mu$ is $$C(\mu) = \sum_{\{a, b\} \in \mu} C(a,b) + C(b,a).$$
A stable matching of $I$ with the least cost is an {\it optimal stable matching}.  When $C(a,b)$ is the rank of $b$ in $a$'s preference list, then an optimal stable matching of $I$ is also called an {\it egalitarian stable matching} of $I$.
The structural similarities between SM stable matchings and SR stable matchings suggest that the approach for finding an optimal stable matching in the SM setting might work in the SR setting.   To do so, however, we need a result similar to equation (\ref{eqn1}) that describes how two adjacent stable matchings in $\mc{M}(I)$ are related.

\begin{lemma}
	Let $\mu$ and $\mu'$ be two stable matchings of SR instance $I$ that are adjacent in $G(I)$.
	Let $\rho = (x_0, y_0), (x_1, y_1), \hdots,$ $(x_{r-1}, y_{r-1})$ be the rotation so that $S_{\mu'} - S_\mu =  \{\rho\}$.  Then
	\begin{center}
		$\mu' = \mu - \{ \{x_0, y_0\}, \{x_1, y_1\}, \hdots, \{x_{r-1}, y_{r-1}\} \} \cup \{ \{x_0, y_1\}, \{x_1, y_2\}, \hdots, \{x_{r-1}, y_{0}\} \}.$
	\end{center}
	\label{lemmaDiff}
\end{lemma}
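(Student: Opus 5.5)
Since $\mu$ and $\mu'$ are adjacent in $G(I)$, Corollary~\ref{corSRMeta} gives $|S_{\mu'}-S_\mu| = |S_\mu - S_{\mu'}| = 1$. Completeness of both sets then forces $S_\mu - S_{\mu'} = \{\ol{\rho}\}$, so $S_{\mu'} = (S_\mu - \{\ol{\rho}\}) \cup \{\rho\}$. The idea is to realize both matchings by elimination sequences in phase~2 of Irving's algorithm that coincide except at the last step. The common prefix eliminates all singular rotations together with $S_\mu \cap S_{\mu'}$, producing a common table $T$. The final step then eliminates $\ol{\rho}$ to reach $\mu$, or $\rho$ to reach $\mu'$. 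Comparing the two outcomes directly, using the definition of rotation elimination from Appendix~\ref{sec:Appendix2}, should yield the stated formula.

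\emph{Step 1: the common table $T$.} Set $Z = (\text{singular rotations}) \cup (S_\mu \cap S_{\mu'})$. I need an elimination order from $T_0$ that removes exactly $Z$, and then one more rotation from $\{\rho,\ol{\rho}\}$, with every rotation exposed when it is eliminated.

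\begin{itemize}
\item In $\mc{R}(I)$, the predecessors of $\rho$ lie in the closed set $S_{\mu'}$ together with the singular rotations, and they exclude $\rho$ itself. So all predecessors of $\rho$ lie in $Z$.
\item The same argument applied to $S_\mu$ shows that all predecessors of $\ol{\rho}$ lie in $Z$.
\item By Lemma~\ref{lemmaRotations}(iii), singular rotations have only singular predecessors.
\end{itemize}

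Hence $Z$ is closed in $\mc{R}(I)$, and $Z\cup\{\rho\}$ and $Z\cup\{\ol\rho\}$ are both closed. I will use the standard fact from \cite{Gusfield1989Stable} that any linear extension of a closed set of rotations is a valid phase-2 elimination sequence. This lets both rotations be exposed in $T = T_0/Z$. By Theorem~\ref{mainthm3}, $T/\rho = \mu'$ and $T/\ol{\rho} = \mu$. Moreover, these must be the final tables: since each final set contains one rotation from every dual pair, no further rotation can be eliminated.

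\emph{Step 2: reading off the matchings.} The rotation $\rho$ is exposed in $T$, so each $x_i$ has first entry $y_i$ and second entry $y_{i+1}$ in $T$. Eliminating $\rho$ makes $y_{i+1}$ the first entry of $x_i$, which gives the pairs $\{x_i,y_{i+1}\}\in\mu'$. Symmetrically, $\ol{\rho} = (y_1,x_0),\dots,(y_0,x_{r-1})$ is exposed in $T$, so each $y_{i+1}$ has first entry $x_i$ in $T$. Eliminating $\ol\rho$ makes the second entry of $y_{i+1}$ its new first entry. That second entry is $x_{i+1}$ (indices taken mod $r$, matching the dual's definition), so $\{y_{i+1},x_{i+1}\}\in\mu$; that is, $\{x_i,y_i\}\in\mu$ for all $i$.

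\emph{Step 3: the remaining pairs.} It remains to show that all other pairs of $\mu$ and $\mu'$ coincide. Elimination of $\rho$ only alters the lists of the $x_i$ and of the agents deleted by those changes. I would argue that agents outside $A=\{x_i,y_i\}$ keep the same final partner in both. One route: $|\mu - \mu'|$ involves only agents whose partner changed. Both $\mu$ and $\mu'$ restricted to $A$ are perfect matchings of $A$, as shown in Step~2. So agents outside $A$ are matched within the complement of $A$, and it suffices to show that their partners are unchanged. For this, use the fact that a phase-2 table in which every list is a singleton is a stable matching. Since $T/\rho$ and $T/\ol\rho$ differ from $T$ only in entries concerning $A$, any agent $z\notin A$ has the same list in both final tables, namely its list in $T$ after removing entries in $A$.

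I expect Step~3 to be the main obstacle, because it requires precise control over which table entries an elimination deletes. If Step~3 proves hard, I would fall back on stability. The tables $T/\rho$ and $T/\ol\rho$ are subtables of $T$, so $\mu$ and $\mu'$ both lie in $T$. Agents outside $A$ see no change in their $T$-lists after $A$ is removed; if their lists are already singletons in $T$, they are forced to agree.
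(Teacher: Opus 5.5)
Your proof is correct and follows the paper's skeleton. You take elimination orders for $\mu'$ and $\mu$ that share a prefix eliminating $Z$ and end with $\rho$, resp.\ $\overline{\rho}$, so both rotations are exposed in the common table $T$. You then compare $T/\rho=\mu'$ with $T/\overline{\rho}=\mu$.

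Where you differ is in that comparison. The paper cites Lemma~\ref{lemmatwins} (Lemma 4.3.4 of \cite{Gusfield1989Stable}): with both rotations exposed, the lists of $x_i$ and $y_i$ in $T$ are exactly $\{y_i,y_{i+1}\}$ and $\{x_{i-1},x_i\}$, and neither elimination touches any other list. That gives the pairs on $A$ and the agreement off $A$ in one stroke. You instead read off the pairs on $A$ from $f_{T/\rho}(x_i)=y_{i+1}$ and its analogue for $\overline{\rho}$, and handle the remaining agents by hand. This is slightly longer but uses only the elimination rule and the fact that final tables are singletons.

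Your main route for Step~3 already works, so you can drop the fallback, whose premise that outside lists are singletons in $T$ is unjustified. Just make the counting explicit:
\begin{itemize}
\item For $z\notin A$, eliminating $\rho$ deletes only pairs through some $y_i$, and eliminating $\overline{\rho}$ deletes only pairs through some $x_i$.
\item So each final list of $z$ is its $T$-list minus a subset of $A$.
\item That final list is a singleton whose entry lies outside $A$, because both matchings pair $A$ within itself.
\item Hence the entry must be the unique element of $z$'s $T$-list outside $A$, which is the same for $\mu$ and $\mu'$.
\end{itemize}

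Step~1 needs two minor repairs. First, the cited ``standard fact'' is false for arbitrary closed sets, which may contain both members of a dual pair; after $\sigma$ is eliminated, $\overline{\sigma}$ is no longer embedded. Instead, apply Corollary~\ref{corLinearExtension} to $Z\cup\{\rho\}$ and $Z\cup\{\overline{\rho}\}$, which are exactly the rotation sets of $\mu'$ and $\mu$. Second, your bullets do not show that $Z$ is closed, since they say nothing about predecessors of rotations in $S_\mu\cap S_{\mu'}$. This follows from $Z=(Z\cup\{\rho\})\cap(Z\cup\{\overline{\rho}\})$, an intersection of two closed sets.
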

\begin{proof}  See Appendix~\ref{sec:Appendix2}.
\end{proof}

Proving the lemma above is requires work because the rotations of $I$ are exposed with respect to tables rather than stable matchings.  We work out the details in Appendix~\ref{sec:Appendix2}.

For rotation $\rho  = (x_0, y_0), (x_1, y_1), \hdots, (x_{r-1}, y_{r-1})$, let us define its cost as
\[C(\rho) = \sum_{i=0}^{r-1} [C(x_i, y_i)+ C(y_i, x_i)] - \sum_{i=0}^{r-1} [C(x_i, y_{i+1}) + C(y_{i+1}, x_i)]\]
where the addition in the subscript is modulo $r$.

\begin{theorem}
	Let $\mu$ and $\mu'$ be two stable matchings of SR instance $I$.
	Then
	$$C(\mu') = C(\mu) - \!\!\! \sum_{\rho \in S_{\mu'} - S_\mu} \!\!\!C(\rho).$$
	\label{thmSREgal}
\end{theorem}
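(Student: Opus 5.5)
Our plan is to reduce the theorem to Lemma~\ref{lemmaDiff} by walking along a shortest path in the median graph $G(I)$. By Corollary~\ref{corSRMeta}, the distance between $\mu$ and $\mu'$ in $G(I)$ is $d = |S_{\mu'} - S_\mu| = |S_\mu - S_{\mu'}|$. Fix a shortest path $\mu = \nu_0, \nu_1, \ldots, \nu_d = \mu'$ in $G(I)$. For each step $i$, adjacency means $S_{\nu_{i}}$ and $S_{\nu_{i-1}}$ differ in exactly one dual pair. Hence $S_{\nu_i} - S_{\nu_{i-1}} = \{\rho_i\}$ for a single non-singular rotation $\rho_i$, with $S_{\nu_{i-1}} - S_{\nu_i} = \{\ol{\rho_i}\}$.

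\emph{Step 1: one step of the path.} Apply Lemma~\ref{lemmaDiff} to the adjacent pair $\nu_{i-1},\nu_i$. If $\rho_i = (x_0,y_0),\ldots,(x_{r-1},y_{r-1})$, the lemma says that $\nu_i$ is obtained from $\nu_{i-1}$ by removing the pairs $\{x_j,y_j\}$ and adding the pairs $\{x_j,y_{j+1}\}$. Each removed pair $\{x_j,y_j\}$ contributes $C(x_j,y_j)+C(y_j,x_j)$ to $C(\nu_{i-1})$, and each added pair contributes $C(x_j,y_{j+1})+C(y_{j+1},x_j)$ to $C(\nu_i)$. Subtracting gives $C(\nu_i) = C(\nu_{i-1}) - C(\rho_i)$, exactly as in the SM computation preceding Theorem~\ref{thmEgal}. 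For this we need the removed pairs to be distinct and to genuinely lie in $\nu_{i-1}$, and the added pairs to be distinct. This follows because the $x_j$ are distinct agents and the lemma describes a valid perfect matching $\nu_i$.

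\emph{Step 2: telescoping.} Summing over $i=1,\ldots,d$ gives $C(\mu') = C(\mu) - \sum_{i=1}^d C(\rho_i)$. It remains to show that $\{\rho_1,\ldots,\rho_d\} = S_{\mu'} - S_\mu$ as a set of $d$ distinct rotations. We will argue this by counting. Each step changes the chosen element of exactly one dual pair, and the pairs on which $S_\mu$ and $S_{\mu'}$ disagree number exactly $d$. Every such pair must be switched at least once, and the path has only $d$ steps, so each disagreeing pair is switched exactly once and no other pair is touched. Therefore the rotation added in the step that switches a disagreeing pair is precisely the member of that pair lying in $S_{\mu'}$. Hence the $\rho_i$ are distinct and form exactly $S_{\mu'} - S_\mu$.

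The main obstacle is only the bookkeeping in Step 2, namely that a shortest path never switches a pair twice; the counting argument above handles it, given the distance formula in Corollary~\ref{corSRMeta}. A minor edge case is $\mu=\mu'$, where the sum is empty and the claim is trivial. The substantive content, that adjacent stable matchings differ by the rotation exchange, is already supplied by Lemma~\ref{lemmaDiff}.
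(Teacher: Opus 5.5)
Your proposal is correct and is essentially the paper's proof. You walk a shortest path $\mu=\nu_0,\ldots,\nu_d=\mu'$ in $G(I)$, apply Lemma~\ref{lemmaDiff} on each edge to get $C(\nu_i)=C(\nu_{i-1})-C(\rho_i)$, and telescope. You then use $d=|S_{\mu'}-S_\mu|$ from Corollary~\ref{corSRMeta} with the same counting argument to see that the added rotations are exactly the $d$ distinct elements of $S_{\mu'}-S_\mu$. Your extra check that the exchanged pairs $\{x_j,y_j\}$ are distinct is a point the paper does not address; you justify it only briefly, but it does follow from $\nu_{i-1}$ and $\nu_i$ both being matchings, as you indicate.
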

\begin{proof}
	From Corollary \ref{corSRMeta},  the distance between $\mu$ and $\mu'$ in $G(I)$ is $|S_{\mu'} - S_\mu|$. Assume $|S_{\mu'} - S_\mu| = \ell$.  Then there is a path $\Psi$ from $\mu$ to $\mu'$ in $G(I)$ so that $\Psi = (\mu_0, \mu_1, \hdots, \mu_\ell)$ with $\mu_0 = \mu$ and $\mu_\ell = \mu'$.  For simplicity, let $S_i = S_{\mu_i}$ for $i = 0, \hdots, \ell$.

	Assume $\rho \in   S_{\mu'} - S_\mu$.   Then $\ol{\rho} \in S_\mu$ because $S_\mu$ is a complete subset of $\mc{R}'(I)$.   It must also be the case that for some $i$,  $S_i - S_{i-1} = \{\rho\}$.  Otherwise,  $\ol{\rho} \in S_0, S_1, \hdots, S_\ell$.  Since $S_\ell = S_{\mu'}$ and $\rho \in   S_{\mu'}$,   both $\rho$ and $\ol{\rho}$  are in   $S_{\mu'}$,  a contradiction.  Hence, {\it every} element in $S_{\mu'} - S_\mu$ can be found in $\cup_{i=1}^\ell (S_i - S_{i-1})$.  That is,  $|\cup_{i=1}^\ell (S_i - S_{i-1})| \ge \ell$.  But  $\cup_{i=1}^\ell (S_i - S_{i-1})$ cannot have more than $\ell$ elements since $|S_i-S_{i-1}| = 1$ for each $i$.  Thus,  $\cup_{i=1}^\ell (S_i - S_{i-1}) = S_{\mu'} - S_\mu$.

	Let $S_i - S_{i-1} = \rho_i$ for $i = 1, \hdots, \ell$.  From our discussion, $S_{\mu'} - S_\mu = \{\rho_1, \rho_2, \hdots, \rho_\ell \}$.
	Lemma \ref{lemmaDiff} implies that $C(\mu_i) = C(\mu_{i-1}) - C(\rho_i)$  for $i = 1, \hdots, \ell$.  Thus,
	\begin{eqnarray*}
		C(\mu_\ell) & = & C(\mu_{\ell-1}) - C(\rho_\ell) \\
		& = & [C(\mu_{\ell -2}) - C(\rho_{\ell -1})] - C(\rho_\ell) \\
		& = &  [C(\mu_{\ell -3}) - C(\rho_{\ell -2})] - C(\rho_{\ell -1}) - C(\rho_\ell) \\
		& \vdots & \\
		& = & C(\mu_0) - C(\rho_1) - C(\rho_2) - \cdots -  C(\rho_\ell) \\
		& = & C(\mu_0) - \!\!\! \sum_{\rho \in S_{\mu'} - S_\mu} \!\!\! C(\rho).
	\end{eqnarray*}
\end{proof}

Notice that Theorem \ref{thmSREgal} is quite similar to Theorem \ref{thmEgal}.  If $\mu$ is fixed, then finding an optimal stable matching of $I$ is equivalent to finding a complete closed subset $S$ of $\mc{R}'(I)$ that maximizes  $\sum_{\rho \in S - S_\mu} C(\rho)$.  Unfortunately, this is not an easy problem.  Feder \cite{Feder1989new, Feder1992journal} has shown that computing an egalitarian stable matching in the SR setting is NP-hard.
Nonetheless, this approach to finding an optimal stable matching can still be fruitful.


%% file: MedianSemilattices.tex
\section{Median Semilattices and Local Optima}
\label{sec:MedianSemilattices}

We have already noted that the meta-structure of an SR instance's stable matchings is unordered.  But it is still possible to impose an ordering on it because a median graph can be transformed into an ordered structure called a {\it median semilattice}.

\begin{definition}
A {\it median semilattice} $\mathcal{L}$ is a meet semilattice (i.e., the greatest lower bound of any two elements always exists) such that (i) for any element $v$, the subposet induced by $v$ and its  predecessors is a distributive lattice and (ii) any three elements have an upper bound whenever any two of them have an upper bound.
\label{defmedsemi}
\end{definition}


The next proposition states when a median semilattice is a distributive lattice.

\begin{proposition}
A median semilattice $\mathcal{L}$ is a distributive lattice if and only if it has a maximum element.
\label{propmax}
\end{proposition}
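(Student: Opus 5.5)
The forward direction needs only the definition of a lattice: a distributive lattice is in particular a lattice, and every finite lattice has a maximum element, namely the join of all its elements. Since the semilattices considered here are finite (they come from finitely many stable matchings, or complete closed subsets of a finite mirror poset), this settles that direction. I would state finiteness as a standing assumption; alternatively, the maximum is part of what "distributive lattice" means in the bounded setting the paper uses.

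For the converse, suppose $\mathcal{L}$ has a maximum element $\top$. I would apply condition (i) of Definition~\ref{defmedsemi} to $v=\top$. The subposet induced by $\top$ and its predecessors is a distributive lattice. Because $\top$ is the maximum, every element of $\mathcal{L}$ is either $\top$ or a predecessor of $\top$, so this subposet is all of $\mathcal{L}$. The key step is to check that "the subposet induced by" carries the full order of $\mathcal{L}$ and not a different one. An induced subposet keeps exactly the order relations of $\mathcal{L}$ among its elements, so when the subposet is everything it is $\mathcal{L}$ itself. Hence $\mathcal{L}$ is a distributive lattice.

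The only point needing care is ensuring that the lattice operations of the induced subposet agree with those of $\mathcal{L}$. Here this is immediate, since the subposet is $\mathcal{L}$. There is also no conflict with the meet-semilattice structure: the existing meets coincide, and joins exist because the whole poset is a lattice. Condition (ii) is not needed for this argument. I do not expect a real obstacle; the main risk is only the finiteness convention in the forward direction.
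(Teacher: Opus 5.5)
Your proof is correct and matches the paper's argument. For the converse you apply condition (i) of Definition~\ref{defmedsemi} at the maximum element, whose down-set is all of $\mathcal{L}$; for the forward direction, your explicit appeal to finiteness is a slight sharpening of the paper's ``by definition,'' which tacitly assumes the finite setting.
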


\begin{proof}
Let $\hat{0}$ be the minimum element of $\mc{L}$.  Suppose $\mathcal{L}$ has a maximum element $\hat{1}$.  From the definition of median semilattices, the subposet induced $\hat{1}$ and its predecessors  is a distributive lattice.  But this subposet contains all the elements of $\mathcal{L}$ so $\mathcal{L}$ itself is a distributive lattice.
On the other hand, if $\mathcal{L}$ is a distributive lattice, it must have a maximum element by definition.
\end{proof}

Let $G$ be a connected graph and $u$ be a vertex of $G$.  The {\it canonical order $\sq_u$} induces a partial  ordering on the vertices $G$ as follows:  for vertices $v$ and $w$, $v \sq_u w$ if and only if $d(u, w) = d(u, v) + d(v, w)$.  That is, $v \sq_u w$ if and only if $v$ lies on a shortest path from $u$ to $w$.   A result by Avann \cite{Avann1961metric} connects median graphs to median semilattices.  Cheng and Lin \cite{Cheng2011Stable} clarifies the cover graph of the median semilattices.

\begin{theorem}
(\cite{Avann1961metric,Cheng2011Stable}  From median graphs to median semilattices.)  Let $G=(V,E)$ be a median graph and $u$ be a vertex of $G$.  The poset $(V, \sq_u)$ is a median semilattice whose minimum element is $u$.   Moreover, the cover graph of $(V, \sq_u)$ is $G$.
\label{thmMedianToSemi}
\end{theorem}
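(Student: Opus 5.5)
Theorem~\ref{thmMedianToSemi} is attributed to Avann and to Cheng and Lin, so the plan below follows the standard route. It has four parts: show that $\sq_u$ is a partial order, then that it is a meet semilattice, then verify the two clauses of Definition~\ref{defmedsemi}, and finally identify the cover graph.

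\textbf{Partial order and cover graph.}
\begin{itemize}
\item Reflexivity and antisymmetry follow directly from the distance identity.
\item For transitivity, suppose $v \sq_u w$ and $w \sq_u x$. Then $d(u,x)=d(u,v)+d(v,w)+d(w,x)\ge d(u,v)+d(v,x)\ge d(u,x)$, so equality holds throughout and $v \sq_u x$.
\item The minimum element is $u$, since $d(u,v)=d(u,u)+d(u,v)$ for every $v$.
\item Median graphs are bipartite, so adjacent vertices $v,w$ have $|d(u,v)-d(u,w)|=1$. Hence every edge is a relation of $\sq_u$ between consecutive distance levels, and therefore a cover.
\item Conversely, if $v \sq_u w$ with $d(v,w)\ge 2$, any vertex strictly inside a shortest $v$--$w$ path lies between $v$ and $w$ in $\sq_u$. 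So covers are exactly edges, and the cover graph is $G$.
\end{itemize}

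\textbf{Meet semilattice.} For $v,w$ I will take $m=\mathrm{med}(u,v,w)$, the unique median of the triple. Since $m$ lies on shortest $u$--$v$ and $u$--$w$ paths, $m \sq_u v$ and $m \sq_u w$. For maximality, let $z$ be any common lower bound. I will show $z \sq_u m$ by working in the interval $I(u,v)$ and using the median of $(z,v,w)$, together with the fact that intervals in median graphs are convex. Concretely, set $m'=\mathrm{med}(z,v,w)$. Because $z$ lies on a geodesic from $u$ to $v$ and on one from $u$ to $w$, the point $m'$ is a common lower bound above $z$ that lies on a $v$--$w$ geodesic. Uniqueness of the median then forces $m'=m$, which gives $z \sq_u m$.

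\textbf{Clause (ii).} Suppose $a,b,c$ are pairwise bounded above by $x,y,z$ respectively. I will take $t=\mathrm{med}(x,y,z)$ and check that $t$ is an upper bound of $a$, $b$ and $c$. For instance, $a$ lies below both $x$ and $y$. Using the median, $a \sq_u \mathrm{med}(u,x,y)$, and then the median-of-medians identities (equivalently, the ``ternary'' distributive law of median algebras) give $a \sq_u t$.

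\textbf{Clause (i) — the main obstacle.} This is showing that each down-set $[u,v]=\{w: w \sq_u v\}$ is a distributive lattice. The meet is inherited from the semilattice structure. For the join of $w,w'\le v$, I will take $\mathrm{med}(v,w,w')$ and show it is the least upper bound inside $[u,v]$. The substitute I would rely on is the fact that in a median graph, for fixed $v$, the map $(x,y)\mapsto \mathrm{med}(x,y,v)$ on $I(u,v)$ together with the meet operation $\mathrm{med}(x,y,u)$ satisfies the median-algebra axioms. Birkhoff-style distributivity then follows from the median identity $\mathrm{med}(\mathrm{med}(x,y,u),\,\mathrm{med}(x,y,v),\,\cdot)$ relations.

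In practice, for this paper I would simply cite Avann for the order-theoretic part and Cheng and Lin for the cover-graph statement, giving full details only for the cover-graph argument above.
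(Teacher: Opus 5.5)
The paper gives no proof of this theorem. It is imported from Avann (the median-semilattice structure) and from Cheng--Lin (the cover graph). Your fallback of citing Avann for the order-theoretic part therefore agrees with the paper, and the extra elementary arguments you supply are sound.

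\textbf{What is correct as written.}
The order axioms and the minimality of $u$ are immediate. Your meet argument works: concatenating geodesics puts $\mathrm{med}(z,v,w)$ in $I(u,v)\cap I(u,w)\cap I(v,w)$, so uniqueness of medians gives $\mathrm{med}(z,v,w)=\mathrm{med}(u,v,w)$. Convexity of intervals is not actually needed there. Your cover-graph argument is also correct, and it shows that this half of the statement uses only connectivity and bipartiteness.

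\textbf{Clause (ii).}
The step $\mathrm{med}(u,x,y)\sq_u \mathrm{med}(x,y,z)$ is true. It is precisely the gate property of intervals in a median graph: $\mathrm{med}(u,x,y)$ is the gate of $u$ in $I(x,y)$, and $\mathrm{med}(x,y,z)\in I(x,y)$. You should invoke that property by name rather than unspecified ``median-of-medians identities.''

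\textbf{Clause (i).}
Your join $\mathrm{med}(v,w,w')$ is correct. On $I(u,v)$ the order $\sq_u$ is the reverse of $\sq_v$, so your meet argument with root $v$ gives leastness.

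However, the sentence about distributivity is not an argument as written; it garbles which operations satisfy which axioms. Two facts are needed:
\begin{enumerate}
\item the median operation of a median graph satisfies the median-algebra axioms;
\item in a median algebra, the interval $I(u,v)$ with $\mathrm{med}(u,\cdot,\cdot)$ as meet and $\mathrm{med}(v,\cdot,\cdot)$ as join is a distributive lattice (a classical result going back to Birkhoff--Kiss and Sholander).
\end{enumerate}
This is the genuinely nontrivial content of Avann's theorem. Citing it, as both you and the paper do, is the right way to close this step; just do not present your one-line sketch as a derivation of it.
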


An implication of the theorem is that if we take a median graph $G$, root it at one of its vertices $u$ and direct all the edges of $G$ away from $u$, the result is the Hasse diagram of the median semilattice $(V, \sq_u)$. For this reason, we  refer to $(V, \sq_u)$ as  the median semilattice obtained {\it by rooting $G$ at $u$}.  Note that the same result holds if $u$ is replaced by another vertex $v$ of $G$.  Thus,  a median graph $G$ can be transformed into {\it one or more} median semilattices.  Figures \ref{figsemis} and \ref{figsemis2} have some examples.

Suppose $\mathcal{P} = (P, \leq)$ is a mirror poset.  Let $G(\mathcal{P})$ be the median graph formed by the complete closed subsets of $\mc{P}$ as described in Theorem \ref{thmMirrorToMedian}.    Root $G(\mathcal{P})$ at a complete closed subset $W$.   According to our discussion above,  the result is a  median semilattice,  which we denote as $\mc{L}(\mc{P}, W)$ with ordering relation  $\sq_W$.   To better understand $\mc{L}(\mc{P}, W)$, we now introduce the notion of {\it orienting} $\mc{P}$.  One way of thinking about $\mc{L}(\mc{P}, W)$ is that we are viewing $G(\mathcal{P})$ from the perspective of $W$.  A corresponding action on $\mc{P}$ is to orient $\mc{P}$ at $W$.

\begin{definition}
    Let $\mathcal{P} = (P, \leq)$ be a mirror poset.   An {\it orientation of $\mc{P}$} is a partitioning of $P$ into $P^- \cup P^+$ where $P^-$ is a complete closed subset of $\mc{P}$ and $P^+ = P - P^-$.  We denote this orientation as $\mc{P}= (P^- \cup P^+, \leq)$ and say that $\mc{P}$ is {\it oriented} at $P^-$.  We also refer to $P^-$ as the {\it base} of the orientation.  The {\it crossing edges} of $\mc{P}= (P^- \cup P^+, \leq)$ are the edges between $P^-$ and $P^+$ in the Hasse diagram of $\mc{P}$.  When the orientation of $\mc{P}$ at $P^-$ is already specified, we  denote the median semilattice $\mc{L}(\mc{P}, P^-)$ simply as $\mc{L}(\mc{P})$ and the ordering $\sq_{P^-}$ as $\sq$.
\end{definition}

In our discussion,  the elements in $P^-$ and $P^+$ are called the {\it negative} and {\it positive} elements of $\mc{P}$ respectively.  Since $P^-$ is a complete subset of $\mc{P}$, every dual pair  has an element in $P^-$ and another in $P^+$.  Thus,
for each dual pair $\{\rho, \overline{\rho}\}$, we rename the elements as $\{\rho^-, \rho^+\}$ where $\rho^- \in P^-$ and $\rho^+ \in P^+$.  Additionally, because $P^-$ is a closed subset of $\mc{P}$, none of its elements has a predecessor in $P^+$, so any crossing edges of $\mc{P}= (P^- \cup P^+, \leq)$ must be directed from $P^-$ to $P^+$.  For this reason, when illustrating an orientation of $\mathcal{P}$,  the elements of $P^-$ appear at the bottom while those of $P^+$  at the top.  See Figures \ref{figsemis} and \ref{figsemis2} again for some examples.

\begin{figure}
\begin{center}
\begin{tikzpicture}[scale=0.9]
\node (1) at (-2.25,1.25) {$\rho_1 \rho_2 \ol{\rho_3} \rho_4 \rho_5$};
\node (2) at (0,0) {$\rho_1 \rho_2 {\rho_3} \rho_4 \rho_5$};
\node (3) at (2.25,1.25) {$\rho_1 \rho_2 {\rho_3} \ol{\rho_4} \rho_5$};
\node (4) at (0,1.25) {$\rho_1 \rho_2 {\rho_3} \rho_4 \ol{\rho_5}$};
\node (5) at (2.25,2.5) {$\rho_1 \rho_2 {\rho_3} \ol{\rho_4} \ol{\rho_5}$};
\node (6) at (0,2.5) {$\rho_1 \ol{\rho_2} {\rho_3} \rho_4 \ol{\rho_5}$};
\node (7) at (2.25,3.75) {$\ol{\rho_1} \rho_2 {\rho_3} \ol{\rho_4} \ol{\rho_5}$};

\begin{scope}[every edge/.style={->,draw=black, thick}]
\draw (2) edge node[above]{} (1);
\draw (2) edge node[above]{} (3);
\draw (2) edge node[above]{} (4);
\draw (3) edge node[above]{} (5);
\draw (4) edge node[above]{} (5);
\draw (4) edge node[above]{} (6);
\draw (5) edge node[above]{} (7);
\end{scope}
  \draw[->, black, thick, dashed] (0,-1) -- (0,-2);
   \draw[->, black, thick, dashed] (4,-4.75) -- (5,-4.75);
\node (1+) at (-2,-6.5) {$\rho_1$};
\node (2+) at (0,-6.5) {$\rho_2$};
\node (3+) at (2,-6.5) {$\rho_3$};
\node (4+) at (-1,-5.5) {$\rho_4$};
\node (5+) at (1,-5.5) {$\rho_5$};
\node (1-) at (2,-3) {$\ol{\rho_1}$};
\node (2-) at (0,-3) {$\ol{\rho_2}$};
\node (3-) at (-2,-3) {$\ol{\rho_3}$};
\node (4-) at (-1,-4) {$\ol{\rho_4}$};
\node (5-) at (1,-4) {$\ol{\rho_5}$};

\begin{scope}[every edge/.style={->,draw=black, thick}]
\draw (1+) edge node[above]{} (4+);
\draw (4-) edge node[above]{} (1-);
\draw (1+) edge node[above]{} (5+);
\draw (5-) edge node[above]{} (1-);
\draw (2+) edge node[above]{} (5+);
\draw (5-) edge node[above]{} (2-);
\draw (2+) edge node[above]{} (4-);
\draw (4+) edge node[above]{} (2-);
\draw (3+) edge node[above]{} (4-);
\draw (4+) edge node[above]{} (3-);
\draw (3+) edge node[above]{} (5-);
\draw (5+) edge node[above]{} (3-);
\end{scope}
\node (1+) at (7,-6.5) {$\rho^-_1$};
\node (2+) at (9,-6.5) {$\rho^-_2$};
\node (3+) at (11,-6.5) {$\rho^-_3$};
\node (4+) at (8,-5.5) {$\rho^-_4$};
\node (5+) at (10,-5.5) {$\rho^-_5$};
\node (1-) at (11,-3) {${\rho^+_1}$};
\node (2-) at (9,-3) {${\rho^+_2}$};
\node (3-) at (7,-3) {${\rho^+_3}$};
\node (4-) at (8,-4) {${\rho^+_4}$};
\node (5-) at (10,-4) {${\rho^+_5}$};
\draw[dashed] (7,-4.75) -- (11,-4.75);

\begin{scope}[every edge/.style={->,draw=black, thick}]
\draw (1+) edge node[above]{} (4+);
\draw (4-) edge node[above]{} (1-);
\draw (1+) edge node[above]{} (5+);
\draw (5-) edge node[above]{} (1-);
\draw (2+) edge node[above]{} (5+);
\draw (5-) edge node[above]{} (2-);
\draw (2+) edge node[above]{} (4-);
\draw (4+) edge node[above]{} (2-);
\draw (3+) edge node[above]{} (4-);
\draw (4+) edge node[above]{} (3-);
\draw (3+) edge node[above]{} (5-);
\draw (5+) edge node[above]{} (3-);
\end{scope}
\node (1) at (6,1.25) {$\rho_1^- \rho_2^- \rho_3^+ \rho_4^- \rho_5^-$};
\node (2) at (9,0) {$\rho_1^- \rho_2^- \rho_3^- \rho_4^- \rho_5^-$};
\node (3) at (12,1.25) {$\rho_1^- \rho_2^- \rho_3^- {\rho_4}^+ \rho_5^-$};
\node (4) at (9,1.25) {$\rho_1^- \rho_2^- \rho_3^- \rho_4^- {\rho_5}^+$};
\node (5) at (12, 2.5) {$\rho_1^- \rho_2^- \rho_3^- {\rho_4}^+ {\rho_5}^+$};
\node (6) at (9,2.5) {$\rho_1^- {\rho_2}^+ \rho_3^- \rho_4^- {\rho_5}^+$};
\node (7) at (12,3.75) {${\rho_1}^+ \rho_2^- \rho_3^- {\rho_4}^+ {\rho_5}^+$};

\begin{scope}[every edge/.style={->,draw=black, thick}]
\draw (2) edge node[above]{} (1);
\draw (2) edge node[above]{} (3);
\draw (2) edge node[above]{} (4);
\draw (3) edge node[above]{} (5);
\draw (4) edge node[above]{} (5);
\draw (4) edge node[above]{} (6);
\draw (5) edge node[above]{} (7);
\end{scope}
  \draw[<-, black, thick, dashed] (9,-1) -- (9,-2);

\end{tikzpicture}
\end{center}
\caption{At the top left is the median semilattice $\mc{L}(\mc{P}, W)$ formed by rooting the median graph $G(\mc{P})$ in Figure \ref{figmirror} at the complete closed subset $W = \{\rho_1, \rho_2, \rho_3, \rho_4, \rho_5\}$.   It has three maximal elements.   At the bottom left is the mirror poset $\mc{P}$ oriented at $W$; the lower half of $\mc{P}$ consists of the elements of $W$ and the upper half has the elements of $P-W$.  Each $\rho_i$ is renamed as $\rho^-_i$ and its dual as $\rho_i^+$.  The renamed poset is shown on the bottom right.   It has three pairs of crossing edges from $\mc{P}^-$ to $\mc{P}^+$: $\{(\rho_2^-, \rho_4^+),  (\rho_4^-, \rho_2^+)\}$, $\{(\rho_3^-, \rho_4^+),  (\rho_4^-, \rho_3^+)\}$ and $\{(\rho_3^-, \rho_5^+),  (\rho_5^-, \rho_3^+)\}$.   At the top right, the complete closed subsets  of $\mc{P}$ are updated to reflect the change in the naming convention. }
\Description{A four-panel diagram showing a median semilattice rooted at a complete closed subset, the corresponding oriented mirror poset, the same poset with elements renamed negative and positive, and the resulting complete closed subsets.}
\label{figsemis}
\end{figure}
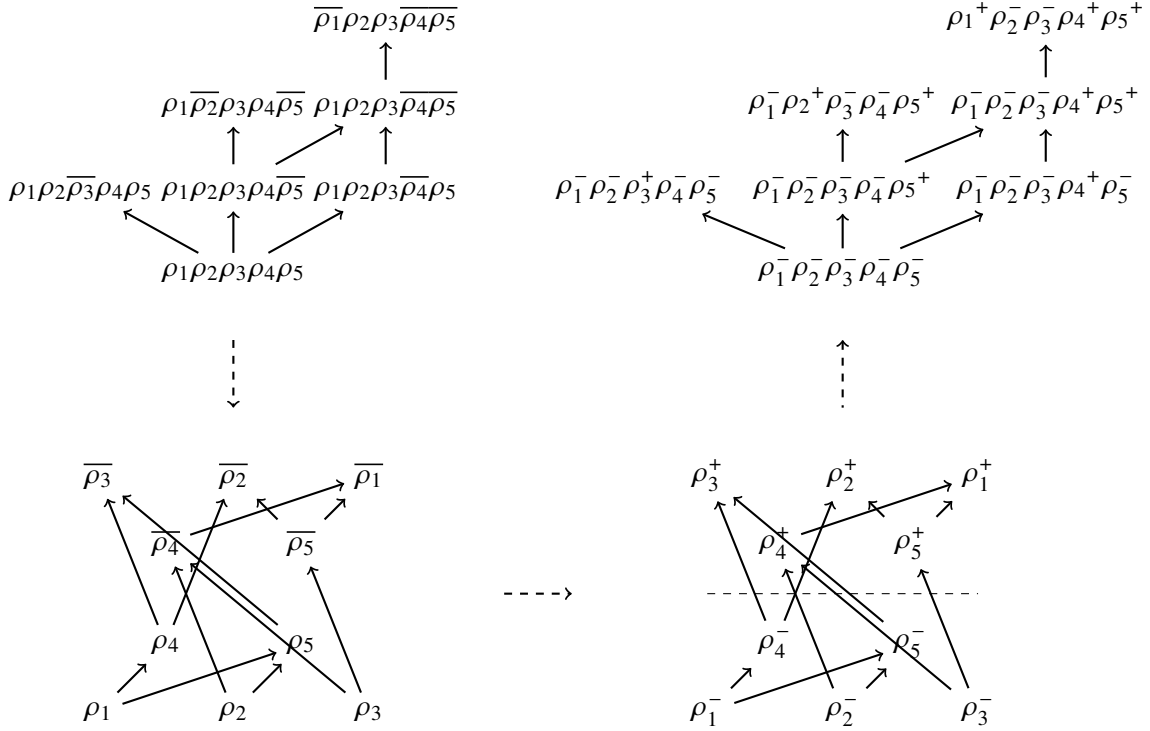

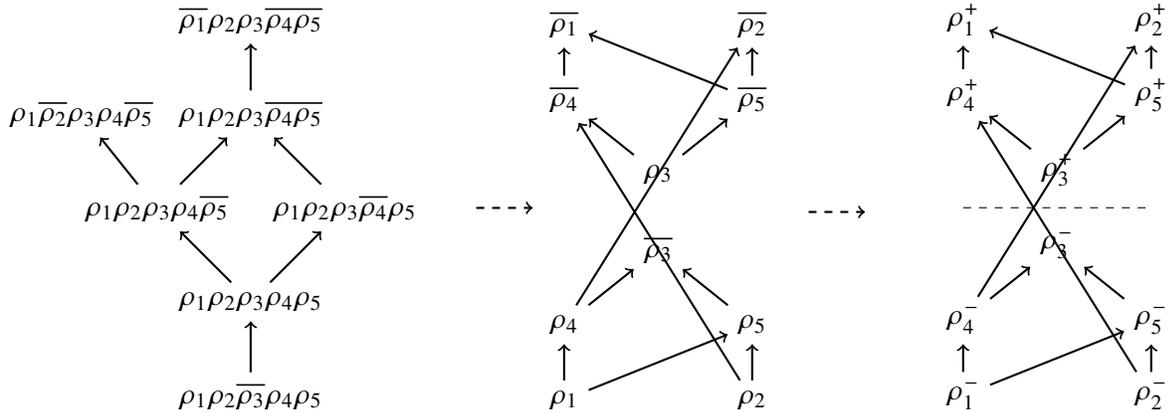
\begin{figure}
\begin{center}
\begin{tikzpicture}
\node (1) at (0, 0) {$\rho_1 \rho_2 \ol{\rho_3} \rho_4 \rho_5$};
\node (2) at (0, 1.25) {$\rho_1 \rho_2 {\rho_3} \rho_4 \rho_5$};
\node (3) at (1.25, 2.5) {$\rho_1 \rho_2 {\rho_3} \ol{\rho_4} \rho_5$};
\node (4) at (-1.25,2.5) {$\rho_1 \rho_2 {\rho_3} \rho_4 \ol{\rho_5}$};
\node (5) at (0, 3.75) {$\rho_1 \rho_2 {\rho_3} \ol{\rho_4} \ol{\rho_5}$};
\node (6) at (-2.25,3.75) {$\rho_1 \ol{\rho_2} {\rho_3} \rho_4 \ol{\rho_5}$};
\node (7) at (0,5) {$\ol{\rho_1} \rho_2 {\rho_3} \ol{\rho_4} \ol{\rho_5}$};

\begin{scope}[every edge/.style={->,draw=black, thick}]
\draw (1) edge node[above]{} (2);
\draw (2) edge node[above]{} (3);
\draw (2) edge node[above]{} (4);
\draw (3) edge node[above]{} (5);
\draw (4) edge node[above]{} (5);
\draw (4) edge node[above]{} (6);
\draw (5) edge node[above]{} (7);
\end{scope}
\draw[->, black, thick, dashed] (3,2.5) -- (3.75,2.5);
\end{tikzpicture}
\begin{tikzpicture}
\node (1+) at (0,0) {$\rho_1$};
\node (2+) at (2.5,0) {$\rho_2$};
\node (3+) at (2,3) {$\rho_3$};
\node (4+) at (0,1) {$\rho_4$};
\node (5+) at (2.5,1) {$\rho_5$};
\node (1-) at (0,5) {$\ol{\rho_1}$};
\node (2-) at (2.5,5) {$\ol{\rho_2}$};
\node (3-) at (2,2) {$\ol{\rho_3}$};
\node (4-) at (0,4) {$\ol{\rho_4}$};
\node (5-) at (2.5,4) {$\ol{\rho_5}$};

\begin{scope}[every edge/.style={->,draw=black, thick}]
\draw (1+) edge node[above]{} (4+);
\draw (4-) edge node[above]{} (1-);
\draw (1+) edge node[above]{} (5+);
\draw (5-) edge node[above]{} (1-);
\draw (2+) edge node[above]{} (5+);
\draw (5-) edge node[above]{} (2-);
\draw (2+) edge node[above]{} (4-);
\draw (4+) edge node[above]{} (2-);
\draw (3+) edge node[above]{} (4-);
\draw (4+) edge node[above]{} (3-);
\draw (3+) edge node[above]{} (5-);
\draw (5+) edge node[above]{} (3-);
\end{scope}
\draw[->, black, thick, dashed] (3.25,2.5) -- (4,2.5);
\end{tikzpicture}
\hspace*{2em}
\begin{tikzpicture}
\node (1+) at (0,0) {$\rho^-_1$};
\node (2+) at (2.5,0) {$\rho^-_2$};
\node (3+) at (2,3) {$\rho^+_3$};
\node (4+) at (0,1) {$\rho^-_4$};
\node (5+) at (2.5,1) {$\rho^-_5$};
\node (1-) at (0,5) {${\rho^+_1}$};
\node (2-) at (2.5,5) {${\rho^+_2}$};
\node (3-) at (2,2) {${\rho^-_3}$};
\node (4-) at (0,4) {${\rho^+_4}$};
\node (5-) at (2.5,4) {${\rho^+_5}$};
\draw[dashed] (0, 2.5) -- (2.5,2.5);

\begin{scope}[every edge/.style={->,draw=black, thick}]
\draw (1+) edge node[above]{} (4+);
\draw (4-) edge node[above]{} (1-);
\draw (1+) edge node[above]{} (5+);
\draw (5-) edge node[above]{} (1-);
\draw (2+) edge node[above]{} (5+);
\draw (5-) edge node[above]{} (2-);
\draw (2+) edge node[above]{} (4-);
\draw (4+) edge node[above]{} (2-);
\draw (3+) edge node[above]{} (4-);
\draw (4+) edge node[above]{} (3-);
\draw (3+) edge node[above]{} (5-);
\draw (5+) edge node[above]{} (3-);
\end{scope}
\end{tikzpicture}
\hspace*{2em}

\end{center}
\caption{This time around, the median graph $G(\mc{P})$ in Figure \ref{figmirror} is rooted at $W = \{\rho_1, \rho_2, \ol{\rho_3}, \rho_4, \rho_5\}$ to create another median semilattice. This semilattice has two maximal elements.  The mirror poset $\mathcal{P}$ is oriented at $W$ and the rotations are renamed to reflect this orientation.  The resulting poset on the right has only one pair of crossing edges: $\{(\rho_2^-, \rho_4^+), (\rho_4^-, \rho_2^+)\}$. }
\Description{A three-panel diagram showing a median semilattice rooted at a different complete closed subset, the corresponding oriented mirror poset, and the renamed poset with one pair of crossing edges.}
\label{figsemis2}
\end{figure}

Suppose $\mathcal{P} = (P^- \cup P^+, \leq)$ and $Q \subseteq P^- \cup P^+$.  Let $Q^- = Q \cap P^-$ and $Q^+ = Q \cap P^+$.  The next lemma provides a simple way of determining when two complete closed subsets of $\mc{P}$ are comparable in $\mc{L}(\mc{P})$.


\begin{lemma}
Let $\mathcal{P} = (P^- \cup P^+, \leq)$ and let $S$ and $T$ be two complete closed subsets of $\mathcal{P}$.  Then,   $S \sq T$   in $\mathcal{L}(\mathcal{P})$ if and only if $S^+ \subseteq T^+$.
\label{lemmasubset}
\end{lemma}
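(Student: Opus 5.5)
We prove Lemma~\ref{lemmasubset} by unwinding the definition of $\sq = \sq_{P^-}$ into a statement about distances in $G(\mc{P})$. By definition, $S \sq T$ if and only if
\[ d(P^-, T) = d(P^-, S) + d(S, T). \]
By Theorem~\ref{thmMirrorToMedian}, the distance between two complete closed subsets $A$ and $B$ is $|B - A| = |A - B|$. Since $P^-$ is complete, $T - P^- = T \cap P^+ = T^+$, so $d(P^-, T) = |T^+|$. Likewise $d(P^-, S) = |S^+|$ and $d(S, T) = |T - S|$. The lemma therefore reduces to showing that
\[ |T^+| = |S^+| + |T - S| \quad\text{if and only if}\quad S^+ \subseteq T^+ . \]

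The main tool is the completeness of $S$ and $T$: each contains exactly one element of every dual pair $\{\rho^-, \rho^+\}$. Let $D$ be the set of dual pairs on which $S$ and $T$ choose different elements. Then $|T - S| = |D|$. Split $D$ into two parts:
\begin{itemize}
\item $D_1$, the pairs where $S$ picks $\rho^-$ and $T$ picks $\rho^+$;
\item $D_2$, the pairs where $S$ picks $\rho^+$ and $T$ picks $\rho^-$.
\end{itemize}
Counting positive elements pair by pair gives $|T^+| = |S^+| + |D_1| - |D_2|$. Substituting this and $|T - S| = |D_1| + |D_2|$, the distance identity becomes $|D_1| - |D_2| = |D_1| + |D_2|$, that is, $|D_2| = 0$.

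It remains to check that $D_2 = \emptyset$ is equivalent to $S^+ \subseteq T^+$.
\begin{itemize}
\item If $S^+ \subseteq T^+$, then no pair can have $\rho^+ \in S$ and $\rho^- \in T$, because $\rho^+ \in S^+$ would force $\rho^+ \in T$, and $T$ contains only one element of the pair. Hence $D_2 = \emptyset$.
\item Conversely, if $D_2 = \emptyset$, then every $\rho^+ \in S$ lies on a pair where $S$ and $T$ agree (so $\rho^+ \in T$) or on a pair in $D_1$ (impossible, since on $D_1$ the set $S$ picks $\rho^-$). Hence $S^+ \subseteq T^+$.
\end{itemize}

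The only step requiring care is invoking Theorem~\ref{thmMirrorToMedian} correctly. The theorem gives that the distance in $G(\mc{P})$ equals the set difference, and the canonical order $\sq_{P^-}$ is defined through those distances. Closedness of $S$ and $T$ plays no role beyond guaranteeing that they are vertices of $G(\mc{P})$. Everything else is bookkeeping over dual pairs.
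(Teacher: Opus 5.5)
Your proof is correct and follows essentially the same route as the paper: you reduce $S \sq T$ to the identity $|T^+| = |S^+| + |T - S|$ via Theorem~\ref{thmMirrorToMedian}, then use completeness of $S$ and $T$. Your pair sets $D_1$ and $D_2$ are exactly the paper's $T^+ - S^+$ and $T^- - S^-$, and your condition $D_2 = \emptyset$ is the paper's condition $T^- \subseteq S^-$.
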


\begin{proof}
Consider the complete closed subsets $S$ and $T$.   By definition,  $\mc{L}(\mc{P})$ is rooted at $P^-$
 so $S \sq T$  if and only if $ S \sq_{P^-} T$.  Using the definition of $\sq_{P^-}$ and the fact that $P^-$ contains only negative elements, we have
\begin{eqnarray*}
 S \sq_{P^-} T & \iff & d(P^-, T) = d(P^-, S) + d(S, T) \\
  & \iff & |T^+| = |S^+| + d(S,T) \\
  & \iff & d(S,T) = |T^+| - |S^+|.
\end{eqnarray*}
But from Theorem \ref{thmMirrorToMedian}, $d(S, T) = |T-S| =  |T^+ - S^+| +  |T^- - S^-|$ so
 \begin{eqnarray}
   S \sq_{P^-} T & \iff &  |T^+ - S^+| +  |T^- - S^-| = |T^+| - |S^+|.
  \label{eqn2}
 \end{eqnarray}

Assume $S^+ \subseteq T^+$.  Since  both $S$ and $T$ are complete closed subsets of $\mathcal{P}$, $T^- \subseteq S^-$.
It follows that $ |T^+ - S^+| = |T^+| - |S^+|$ while $|T^- - S^-|  = 0$.  The right-hand side of  (\ref{eqn2}) is satisfied,  so when $S^+ \subseteq T^+$, it is the case that $S \sq_{P^-} T$.

Conversely, assume $ S \sq_{P^-} T$. From the right-hand side of  (\ref{eqn2}),  $ |T^- - S^-| = (|T^+| - |S^+|) -  |T^+ - S^+|$.  But for any complete closed subsets $S$ and $T$, $ |T^+ - S^+| \ge  |T^+| - |S^+|$ so $ |T^- - S^-| \leq 0$.  Since $ |T^- - S^-|$ cannot be less than $0$, it follows that $ |T^- - S^-| = 0$.  That is, $T^- \subseteq S^-$ so $S^+ \subseteq T^+$.
\end{proof}


For any complete closed subset $T$ of $\mc{P}= (P^- \cup P^+, \leq)$, define the {\it interval} $[P^-, T]$ in $\mc{L}(\mc{P})$ as
$[P^-, T] = \{S:  P^- \sq S \sq T\}$.   Since $P^-$ is the minimum element of $\mc{L}(\mc{P})$, $[P^-,T]$ consists  of $T$ and all the predecessors of $T$ in $\mc{L}(\mc{P})$. From the definition of median semilattices,  the subposet induced by the elements of $[P^-,T]$ is a distributive lattice.  Now, according to Birkhoff's result in Theorem \ref{thmBirkhoff}, there must be a poset that encodes the elements of $[P^-,T]$.   We identify this poset next.

\begin{theorem}
Let $\mathcal{P} = (P^- \cup P^+, \leq)$ and let $T$ be a complete closed subset of $\mathcal{P}$.  Denote as  $\mc{P}_{T^+}$  the subposet of $\mc{P}$ induced by the elements of $T^+$.  There is a one-to-one correspondence between the elements of $[P^-, T]$ in the median semilattice $\mc{L(P)}$ and the closed subsets of $\mc{P}_{T^+}$.   In particular,  $S \in [P^-, T]$ if and only if $S^+$ is a closed subset of $\mc{P}_{T^+}$.
\label{thmChar}
\end{theorem}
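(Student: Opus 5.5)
The plan is to build the correspondence explicitly as the map $S \mapsto S^+$, defined on $[P^-,T]$, and to use Lemma~\ref{lemmasubset} throughout. By that lemma, $S \in [P^-,T]$ means exactly that $S$ is a complete closed subset of $\mc{P}$ with $S^+ \subseteq T^+$. The comparison $P^- \sq S$ is automatic, since $(P^-)^+ = \emptyset$. So the theorem reduces to two claims. First, for a complete closed $S$ with $S^+ \subseteq T^+$, the set $S^+$ is closed in $\mc{P}_{T^+}$. Second, every closed subset $X$ of $\mc{P}_{T^+}$ arises as $S^+$ for exactly one such $S$. Injectivity is immediate from completeness: $S^+$ determines $S$, because $S^- = \{\rho^- : \rho^+ \notin S^+\}$.

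For the forward direction, take $\sigma \in T^+$ with $\sigma \le \rho$ for some $\rho \in S^+$. Since $S$ is closed, $\sigma \in S$. Since $\sigma$ is positive, $\sigma \in S^+$. Hence $S^+$ is a down-set within $\mc{P}_{T^+}$, which is all that closedness in $\mc{P}_{T^+}$ requires.

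For the converse, given $X$ closed in $\mc{P}_{T^+}$, I will define
\[S = X \cup \{\rho^- : \rho^+ \notin X\}.\]
By construction this set is complete and satisfies $S^+ = X \subseteq T^+$. The remaining work, which I expect to be the main obstacle, is showing that $S$ is closed in $\mc{P}$. Take $\tau \in S$ and $\sigma \le \tau$; there are four cases.
\begin{itemize}
\item If $\sigma$ is negative, say $\sigma = \rho^-$, and $\tau$ is negative, suppose for contradiction that $\sigma \notin S$. Then $\rho^+ \in X$. Because $\tau = \pi^-$ with $\pi^+ \notin X$, the mirror property gives $\pi^+ \le \rho^+$, since the duals reverse order. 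Now $\rho^+ \in X \subseteq T^+$. As $T$ is closed, $\pi^+ \in T^+$, and then closedness of $X$ in $\mc{P}_{T^+}$ gives $\pi^+ \in X$. This is a contradiction.
\item If $\sigma = \rho^-$ is negative and $\tau$ is positive, so $\tau \in X$, suppose $\rho^+ \in X$. From $\rho^- \le \tau$ and $\tau \in T$, closedness of $T$ gives $\rho^- \in T$. But $\rho^+ \in X \subseteq T$ as well, contradicting completeness of $T$.
\item If $\sigma$ is positive and $\tau$ is negative, this is impossible, because $P^-$ is closed.
\item If both $\sigma$ and $\tau$ are positive, then $\tau \in X \subseteq T$ gives $\sigma \in T^+$, and closedness of $X$ gives $\sigma \in X$.
\end{itemize}
Thus $S$ is a complete closed subset with $S^+ = X \subseteq T^+$, so $S \in [P^-,T]$. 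Combined with injectivity, this establishes the bijection, and Lemma~\ref{lemmasubset} shows it is order-preserving with respect to inclusion.
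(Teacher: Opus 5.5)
Your proof is correct and follows the paper's argument: the same map $S \mapsto S^+$, the same forward direction via Lemma~\ref{lemmasubset}, and the same completion $S = X \cup \{\rho^- : \rho^+ \notin X\}$, with closedness checked by the same case split on the signs of $\sigma$ and $\tau$. Your only additions are the explicit injectivity and order-preservation remarks, which the paper leaves implicit.
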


Before we prove the theorem, it is helpful to go through an example.  Consider the top right median semilattice  in Figure \ref{figsemis} rooted at $P^-$.  Let  $T = \{\rho_1^+, \rho_2^-, \rho_3^-, \rho_4^+, \rho_5^+\}$ so $T^+ = \{\rho_1^+, \rho_4^+, \rho_5^+\}$.  Thus,  $\mc{P}_{T^+}$ is the subposet of $\mc{P}$ induced by $\rho_1^+, \rho_4^+$ and $ \rho_5^+$.  Its closed subsets are $\emptyset$, $\{\rho_4^+\}$, $\{\rho_5^+\}$, $\{\rho_4^+, \rho_5^+\}$, $\{\rho_1^+, \rho_4^+, \rho_5^+\}$.   On the other hand, the elements of the interval $[P^-,T]$ in the median semilattice correspond to the ``completions'' of the closed subsets of $\mc{P}_{T^+}$; i.e., negative elements are added to the closed subsets so they become complete closed subsets of $\mc{P}$.  They are 
$\{\rho_1^-, \rho_2^-, \rho_3^-, \rho_4^-, \rho_5^-\}$, $\{\rho_1^-, \rho_2^-, \rho_3^-, \rho_4^+, \rho_5^-\}$, $\{\rho_1^-, \rho_2^-, \rho_3^-, \rho_4^-, \rho_5^+\}$, $\{\rho_1^-, \rho_2^-, \rho_3^-, \rho_4^+, \rho_5^+\}$, $\{\rho_1^+, \rho_2^-, \rho_3^-, \rho_4^+, \rho_5^+\}$.


\begin{proof}   Assume $S \in [P^-, T]$.  Then $S \sq T$ in $\mc{L(P)}$.  By Lemma \ref{lemmasubset}, $S^+ \subseteq T^+$. We just have to argue that $S^+$ a closed subset of $\mathcal{P}_{T^+}$.  Let $\rho^+ \in S^+$ and suppose $\sigma^+ \leq \rho^+$ in
 $\mathcal{P}_{T^+}$. Then $\sigma^+ \leq \rho^+$ in $\mc{P}$ too.   Now, $S$ itself is a closed subset of $\mc{P}$ so every predecessor of $\rho^+$, whether positive or negative, is in $S$.  Thus, $\sigma^+ \in S^+$ and, therefore, $S^+$ is a closed subset of $\mc{P}_{T^+}$.

Conversely, let $S^+$ be a closed subset of $\mathcal{P}_{T^+}$.  It follows that $S = S^+ \cup \{\tau^- \in P^-: \tau^+ \not \in S^+\}$ is a complete subset of $\mc{P}$.  We will  show that $S$ is also a closed subset of $\mc{P}$.

\noindent {\bf Case 1:} Let $\rho^+ \in S$.  It means that $\rho^+ \in S^+$.  Since $S^+$ be a closed subset of $\mathcal{P}_{T^+}$, $S^+ \subseteq T^+$.  Furthermore,  $T^+ \subseteq T$. It follows that $\rho^+ \in T$.  Now $T$ is a complete closed subset of $\mc{P}$ so all the predecessors of $\rho^+$, whether positive or negative, are also in $T$.  In particular, (i) if $\sigma^+ \leq \rho^+$, then $\sigma^+ \in T^+$, and (ii) if $\sigma^- \leq \rho^+$, then $\sigma^- \in T^-$.

Consider  $\sigma^+$ so $\sigma^+ \leq \rho^+$.  From (i) and the fact that  $S^+$ is a closed subset of $\mathcal{P}_{T^+}$ implies that if  $\rho^+ \in S^+$ then $\sigma^+ \in S^+$. Therefore $\sigma^+ \in S$.  
Next, consider $\sigma^-$ so $\sigma^- \leq \rho^+$. From (ii),  $\sigma^+ \not \in T^+$ because $T$ is a complete closed subset of $\mc{P}$,  so $\sigma^+ \not \in S^+$.  This means $\sigma^- \in \{\tau^- \in P^-: \tau^+ \not \in S^+\}$ and therefore $\sigma^- \in S$.  
Thus, all the predecessors of $\rho^+$ in $\mc{P}$ are also in $S$.

\noindent {\bf Case 2:} Next, let $\rho^- \in S$ so  $\rho^+ \not \in S^+$.  As a negative rotation, $\rho^-$ can only have negative predecessors  in $\mc{P}$ because of our convention.  Let $\sigma^- \leq \rho^-$.  Then $\rho^+ \leq \sigma^+$ because $\mc{P}$ is a mirror poset.  If $\sigma^+ \in S^+$, we argued above that all its positive predecessors must be in $S^+$, including $\rho^+$.  But $\rho^+ \not \in S^+$ so $\sigma^+ \not \in S^+$.  It follows that $\sigma^- \in  \{\tau^- \in P^-: \tau^+ \not \in S^+\}$ and, hence, $\sigma^- \in S$.

We have now shown that $S$ is a closed subset of $\mc{P}$.  Since $S^+ \subseteq T^+$ because $S^+$ is a closed subset of $\mathcal{P}_{T^+}$, Lemma \ref{lemmasubset} implies that $S \sq T$ in $\mc{L(P)}$ and therefore $S \in [P^-, T]$.
\end{proof}

\noindent {\bf Local Optima.}  Let us now shift the setting back to the SR instance $I$ and its set of stable matchings $M(I)$.  For any subset $M' \subseteq M(I)$, define a stable matching $\lambda$ as a {\it local optimum of $M'$} with respect to the cost function $C$ if $\lambda$ has the least cost among all the stable matchings in $M'$.
 In the next theorem, we will only consider subsets $M'$ of $M(I)$ so that stable matchings in $M'$ form a distributive lattice in some median semilattice.  Finding a local optimum will then reduce to finding a maximum-cost closed subset of some poset.  As we already showed in the SM case, this can be done efficiently.


\begin{theorem}
Suppose $I$ is an SR instance with $2n$ agents, and $\eta$ and $\mu$ are two of its stable matchings.  Let $\mc{L}(I, \eta)$ be the median semilattice obtained by rooting $G(I)$ at $\eta$.  Let  $[\eta, \mu]$ be the interval that contains $\mu$ and all its predecessors in $\mc{L}(I, \eta)$.
Given $I$, the cost function $C$,  $S_\eta$ and $S_\mu$, finding a local optimum of  $[\eta, \mu]$  takes $O(n^4 \log n)$ time.
\label{thmLocalOpt}
\end{theorem}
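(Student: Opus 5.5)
We reduce the problem to a maximum-weight closure problem on a small poset and then apply the SM machinery behind Theorem~\ref{corEgal}. Build the reduced rotation poset $\mc{R}'(I)$ using Theorem~\ref{thm:sr-poset}. This takes $O(n^3\log n)$ time and produces a DAG with $O(n^2)$ vertices and edges. Orient $\mc{R}'(I)$ at $P^- = S_\eta$. Since $S_\eta$ is a complete closed subset, this is a valid orientation, and $\mc{L}(\mc{R}'(I), S_\eta)$ is exactly $\mc{L}(I,\eta)$ under the correspondence of Theorem~\ref{mainthm3} and Corollary~\ref{corSRMeta}. With $T = S_\mu$, we have $T^+ = S_\mu - S_\eta$. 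By Theorem~\ref{thmChar}, the stable matchings $\lambda \in [\eta,\mu]$ correspond bijectively to the closed subsets $X$ of the subposet $\mc{P}_{T^+}$ induced by $S_\mu - S_\eta$, via $S_\lambda^+ = X$.

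Next we express the cost. By Theorem~\ref{thmSREgal} applied to $\eta$ and $\lambda$,
\[
C(\lambda) = C(\eta) - \sum_{\rho \in S_\lambda - S_\eta} C(\rho).
\]
Here $S_\lambda - S_\eta = S_\lambda^+ = X$. Since $C(\eta)$ is a constant, minimizing $C(\lambda)$ over $[\eta,\mu]$ is equivalent to finding a closed subset $X$ of $\mc{P}_{T^+}$ that maximizes $\sum_{\rho\in X} C(\rho)$. This is precisely the maximum-weight closure problem that Irving et al.\ solve in the SM setting (Theorem~\ref{corEgal}), via a minimum cut in the flow network of Picard.

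We then have to make the running times explicit. Each rotation cost $C(\rho)$ is computed in time linear in the rotation's length. The total length of all rotations is $O(n^2)$, or at worst polynomial, which is well within budget. To obtain a DAG representing $\mc{P}_{T^+}$, we restrict the DAG for $\mc{R}'(I)$ to $T^+$. This step needs care, because an induced subgraph of a DAG that represents a poset need not preserve transitive relations that pass through deleted vertices.

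There are two ways to handle this. The first is to observe that $T$ is closed, so every predecessor of an element of $T^+$ lies in $T$. Relations in $\mc{P}_{T^+}$ might then pass through negative elements of $T^-$. We avoid the difficulty entirely by noting that it is enough to enforce closure in $\mc{P}$ itself. For $X \subseteq T^+$, the set $X \cup \{\tau^- : \tau^+ \notin X\}$ is closed in $\mc{P}$ exactly when $X$ is closed in $\mc{P}_{T^+}$; this is the content of the proof of Theorem~\ref{thmChar}. The second, simpler option is to compute the transitive closure restricted to $T^+$. With $O(n^2)$ vertices and edges this takes $O(n^4)$ time by running a BFS from each vertex, which fits the bound. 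Either way, we obtain a closure instance with $O(n^2)$ nodes and a manageable number of arcs.

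The main obstacle is matching the $O(n^4\log n)$ bound for the min-cut computation. If we use the full transitive closure, the arc count could reach $O(n^4)$, and a generic max-flow would then be too slow. We therefore plan to avoid the transitive closure: we keep the sparse DAG, treat the elements of $S_\eta$ as forced-out (or pre-contracted) nodes, and impose the constraint $X \subseteq T^+$ by giving elements outside $T^+$ infinite penalty. This yields a network with $O(n^2)$ nodes and $O(n^2)$ arcs, exactly as in the SM case, so the same min-cut algorithm as in \cite{Irving1987efficient} gives $O(n^4\log n)$. Finally, Corollary~\ref{corFindMatching} recovers the matching $\lambda$ from $S_\lambda$ in $O(n^2)$ time.
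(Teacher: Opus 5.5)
Your reduction is the same as the paper's. You orient $\mc{R}'(I)$ at $S_\eta$, use Theorem~\ref{thmSREgal} to write $C(\lambda)=C(\eta)-C(S_\lambda^+)$, and use Theorem~\ref{thmChar} to identify $[\eta,\mu]$ with the closed subsets of $\mc{P}_{T^+}$. You then solve a maximum-weight closure problem by min cut and recover $\lambda$ with Corollary~\ref{corFindMatching}. The gap is in the step you call the main obstacle: building a network for $\mc{P}_{T^+}$ with $O(n^2)$ arcs.

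Your premise that relations in $\mc{P}_{T^+}$ ``might pass through negative elements of $T^-$'' is false, and the paper's argument rests on exactly this point. Since $P^-=S_\eta$ is closed, no negative element has a positive predecessor. So every directed path in the DAG $H$ of Theorem~\ref{thm:sr-poset} that starts at a positive element stays inside $P^+$. Since $T=S_\mu$ is closed, every vertex on a path ending in $T^+$ lies in $T$. Hence the subgraph of $H$ induced by $T^+$ already has $\mc{P}_{T^+}$ as its transitive closure. It has $O(n^2)$ vertices and arcs, so Irving et al.'s min cut runs on it in $O(n^4\log n)$ time. Neither the transitive-closure detour nor the completion map is needed.

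Your proposed fix does not settle the issue as written. Suppose the elements of $S_\eta$ are ``forced out'' of the closure while the arcs leaving them are kept. Then every positive element with a negative predecessor becomes infeasible. In Figure~\ref{figsemis}, take $T=\{\rho_1^+,\rho_2^-,\rho_3^-,\rho_4^+,\rho_5^+\}$. The crossing-edge endpoints $\rho_4^+$ and $\rho_5^+$ are excluded, hence so is $\rho_1^+$, and the network can only ever return $\eta$. The interval actually has five elements.

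If instead the elements of $S_\eta$ are contracted into the source, i.e. forced into the closure, the network is correct. Its feasible sets are the closed sets $Y$ of $\mc{P}$ with $S_\eta\subseteq Y\subseteq S_\eta\cup T^+$, and these correspond to the closed subsets of $\mc{P}_{T^+}$. Proving that correspondence needs the same two closedness facts above, which your write-up never states. You should commit to the forced-in version, or simply use the induced subgraph, and supply that argument.
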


\begin{proof}  Recall that $\mc{R}'(I)$, the reduced rotation poset of $I$, is a mirror poset.  For ease of notation, let $\mc{R}'= \mc{R}'(I)$.  Orient $\mc{R}'$ at $S_\eta$ by partitioning its element set into $R^- \cup R^+$ where $R^- = S_\eta$. We will now can refer to the non-singular rotations of $I$ as {\it negative} or {\it positive} based on their membership in $R^-$ and $R^+$ respectively.

Given $\mc{R}' = (R^- \cup R^+, \leq)$, root the median graph $G(\mc{R}')$ at $S_\eta$ to create the
 the median semilattice $\mc{L}(\mc{R}')$.   Since $G(\mc{R}')$ is isomorphic to $G(I)$, it is also the case that $\mc{L}(\mc{R}')$ is isomorphic to $\mc{L}(I, \eta)$ since they are obtained by rooting  $G(\mc{R}')$ at $S_\eta$ and $G(I)$ at $\eta$ respectively.
 To prove the theorem, we shall move back and forth between $\mc{L}(\mc{R}')$ and $\mc{L}(I, \eta)$ so we can apply the results we proved earlier.  That is, for a stable matching $\lambda$, we will sometimes think of it as $S_\lambda$, the complete closed subset of $\mc{R}'$ that corresponds to $\lambda$.



Consider $\lambda \in [\eta, \mu]$.  From Theorem \ref{thmSREgal}, its cost is $$C(\lambda) \, = \, C(\eta) - \!\!\!\sum_{\rho \in S_\lambda-S_\eta} \!\!\!\!C(\rho) \, = \, C(\eta) - \!\!\!\sum_{\rho \in S^+_\lambda} \!C(\rho) \, = \, C(\eta) - C(S^+_\lambda)$$
where the second equality follows from the fact that $S_\eta$ consists of all the negative rotations.   Thus, $\lambda^*$ is a local optimum of $[\eta, \mu]$ if and only if  $C(S^+_\lambda) \leq C(S^+_{\lambda^*})$ for any $\lambda \in [\eta, \mu]$.

Now, $\lambda \in [\eta, \mu]$ in $\mc{L}(I, \eta)$ if and only if $S_\lambda \in [S_\eta, S_\mu]$ in $\mc{L}(\mc{R}')$. According to Theorem \ref{thmChar},  $S_\lambda \in [S_\eta, S_\mu]$ in $\mc{L}(\mc{R}')$ if and only if $S^+_\lambda$ is a closed subset of $\mc{R}'_{S^+_\mu}$. Thus, to find a local optimum of $[\eta, \mu]$, we simply have to find a maximum-cost closed subset of $\mc{R}'_{S^+_\mu}$.

Again, according to Theorem \ref{thmChar}, this maximum-cost closed subset of $\mc{R}'_{S^+_\mu}$ contains the positive elements of some complete closed subset $S^* \in  [S_\eta, S_\mu]$.  In turn, $S^*$ corresponds to some stable matching $\lambda^*$ in $[\eta, \mu]$, which has to be a local optimum of $[\eta, \mu]$ since $C(\lambda^*) \leq C(\lambda)$ for any $\lambda \in  [\eta, \mu]$.

 Let us now analyze the running time of the algorithm we have just outlined for finding a local optimum in the interval $[\eta, \mu]$.  Given SR instance $I$, the cost function $C$, $S_\eta$ and $S_\mu$,  we have to first compute a directed acyclic graph $H(I)$ whose transitive closure is $\mc{R}'(I)$.  From Corollary \ref{corFindMatching}, this can be done in $O(n^3 \log n)$ time.  Furthermore,  $H(I)$ has $O(n^2)$ vertices and $O(n^2)$ edges.  Next, make the rotations in $S_\eta$ be the negative elements and the rotations not in $S_\eta$ be the positive elements.  Extract the subgraph of $H(I)$ induced by the rotations in ${S^+_\mu}$.  The transitive closure of this subgraph is $\mc{R}_{S^+_\mu}$ because negative rotations cannot be successors of positive rotations.   Finally, compute a maximum-cost closed subset of $\mc{R}_{S^+_\mu}$ using the algorithm Irving et al.~used for Corollary \ref{corEgal}.   The running time analysis is based on the fact that when the representation of the rotation poset has $|V|$ vertices and $|E|$ edges, the minimum cut can be obtained in $O(|V| |E| \log n)$ time.  In our case $|V| = O(n^2)$ and $|E| = O(n^2)$ so a maximum-cost closed subset  of $\mc{R}_{S^+_\mu}$ can be obtained in $O(n^4 \log n)$ time.

Once the maximum-cost closed subset of $\mc{R}_{S^+_\mu}$ is found, we add the negative elements of the dual pairs missing from the subset to create $S^*$.  According to Corollary \ref{corFindMatching}, given $S^*$, the stable matching corresponding to $S^*$ can  be obtained in $O(n^2)$ time.
\end{proof}

%% file: MaximalElements.tex
\section{Maximal Elements in a Median Semilattice}
\label{sec:MaximalElements}

In the previous section, 
we showed how the technique for finding an optimal stable matching in SM instances can be extended to finding a {\it local} optimal stable matching in SR instances under certain conditions.  In particular, in the median semilattice $\mc{L}(I, \eta)$,
we can efficiently find  a local optimum of the interval $[\eta, \mu]$ for any $\mu$ given $S_\eta$ and $S_\mu$.    We take advantage of this fact as follows:
Consider the {\it maximal elements} of  $\mc{L}(I, \eta)$.  Assume they are $\mu_1, \mu_2, \hdots, \mu_r$.   Notice that $\cup_{i=1}^r [\eta, \mu_i]$ contains all the stable matchings of $I$.  Furthermore, an optimal stable matching of $I$ must lie in one of the intervals.  Suppose  $\mu_i^*$ is the local optimum of $ [\eta, \mu_i]$ for $i = 1, \hdots, r$.  Then  the stable matching with the least cost among $\{\mu^*_i, i = 1, \hdots, r\}$ has to be an optimal stable matching of $I$.  To implement this plan we must identify the maximal elements of $\mc{L}(I, \eta)$, their corresponding closed subsets and bound their number.


We will once more work with mirror posets and median semilattices to establish our results.  Suppose $\mc{P}$ has been oriented as  $(P^- \cup P^+, \leq)$
and  $HD(\mathcal{P})$ is its Hasse diagram.  
Recall that an edge of $HD(\mathcal{P})$ is a {\it crossing edge}  if one endpoint of the edge is in $P^-$ and another is in $P^+$.  Crossing edges come in pairs because $\mc{P}$ is a mirror poset. As we shall see, they  play an important role.

\begin{proposition}
	Let $\sigma^-, \rho^+$ be elements of the mirror poset $\mathcal{P}= (P^- \cup P^+, \leq)$ such that $\sigma^- < \rho^+$.  There must be a crossing edge $(\alpha^-, \beta^+)$  so that  $\sigma^- \leq \alpha^- < \beta^+ \leq \rho^+$.
	\label{propcrossing}

\end{proposition}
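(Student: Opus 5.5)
Since $\sigma^- < \rho^+$ in the finite poset $\mc{P}$, there is a saturated chain from $\sigma^-$ to $\rho^+$, that is, a sequence
\[
\sigma^- = \tau_0 < \tau_1 < \cdots < \tau_\ell = \rho^+ ,
\]
in which each consecutive pair $(\tau_{i-1}, \tau_i)$ is a cover relation and hence an edge of $HD(\mathcal{P})$. I will obtain it by refining any chain from $\sigma^-$ to $\rho^+$ until no element can be inserted between consecutive terms. Finiteness guarantees this terminates, and $\ell \ge 1$ because $\sigma^- \neq \rho^+$.

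Every element of $P$ is either negative or positive, so each $\tau_i$ lies in exactly one of $P^-$ or $P^+$. The sequence begins in $P^-$ (at $\tau_0 = \sigma^-$) and ends in $P^+$ (at $\tau_\ell = \rho^+$). Let $j$ be the smallest index with $\tau_j \in P^+$; then $j \ge 1$ and $\tau_{j-1} \in P^-$. Set $\alpha^- = \tau_{j-1}$ and $\beta^+ = \tau_j$. Their pair is a Hasse edge with one endpoint in $P^-$ and the other in $P^+$, so it is a crossing edge. It is directed from $\alpha^-$ to $\beta^+$, consistent with the earlier remark that crossing edges always point from $P^-$ to $P^+$ because $P^-$ is closed.

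Finally, the chain gives $\sigma^- = \tau_0 \le \tau_{j-1} = \alpha^-$, the cover relation gives $\alpha^- < \beta^+$, and the chain gives $\beta^+ = \tau_j \le \tau_\ell = \rho^+$. This is exactly the required $\sigma^- \le \alpha^- < \beta^+ \le \rho^+$. There is no real obstacle here; the only care needed is to ensure that the chain consists of cover relations, so that the pair $(\tau_{j-1}, \tau_j)$ is genuinely an edge of the Hasse diagram.
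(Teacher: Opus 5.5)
Your proposal is correct and follows essentially the same route as the paper: take a directed path of cover relations in $HD(\mathcal{P})$ from $\sigma^-$ to $\rho^+$, locate a step where it passes from $P^-$ to $P^+$, and read off $\sigma^- \le \alpha^- < \beta^+ \le \rho^+$ from the chain. Your explicit justification that the chain can be refined to cover relations, and your choice of the first index in $P^+$, are small added precisions over the paper's ``there is some $\tau_i$''.
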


\begin{proof}
	Since $\sigma^- < \rho^+$, there is a directed path $\tau_1, \tau_2, \hdots, \tau_m$ with $\tau_1 = \sigma^-$ and $\tau_m = \rho^+$ in $HD(\mathcal{P})$.   Since $\tau_1  \in P^-$ and $\tau_m \in P^+$,  the directed path must cross from $P^-$ to $P^+$; i.e.,  there is some $\tau_i$ so that $\tau_i \in P^-$ and $\tau_{i+1} \in P^+$.   Thus, $(\tau_i, \tau_{i+1})$ is a crossing edge of $\mathcal{P}$.   But, additionally, the path from $\tau_1$ to $\tau_m$ forms a chain in $\mathcal{P}$ so  $\tau_1 \leq \tau_i < \tau_{i+1} \leq \tau_m$.
\end{proof}

We say that the pair {\it $\{\alpha, \beta\}$ is involved in a crossing edge of $\mathcal{P}= (P^- \cup P^+, \leq)$} if $(\alpha^-, \beta^+)$ and $(\beta^-, \alpha^+)$ are crossing edges of the oriented mirror poset.   For $A \subseteq P^- \cup P^+$, let $Pred(A)$ contain all the predecessors of the elements of $A$ in $\mc{P}$.

\begin{lemma}
	Assume that $\mathcal{P} = (P^- \cup P^+, \leq)$ has $2k$ crossing edges involving the pairs  $\{\alpha_{i}, \beta_{i}\}$, $i = 1, \hdots, k$.  Every complete closed subset $S$ of $\mathcal{P}$ can be expressed as $S = A_S \cup Pred(A_S) \cup B_S$ where

	\noindent (i) $A_S =  \bigcup_{i=1}^k A_{S,i}$ and $A_{S,i} \in \{ \{\alpha_{i}^+, \beta_{i}^- \}, \{\alpha_{i}^-, \beta_{i}^+ \}, \{\alpha_{i}^-, \beta_{i}^- \} \}$ for each $i$ and

	\noindent (ii) $B_S$ has one element from each dual pair $\{\rho^-, \rho^+\}$ such that neither $\rho^-$ nor $\rho^+$ are in $A_S \cup Pred(A_S)$.
	\label{lemmarep}
\end{lemma}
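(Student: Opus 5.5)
The plan is to build the decomposition directly from $S$ rather than to argue existence abstractly. For each $i$, set $A_{S,i} = S \cap \{\alpha_i^-, \alpha_i^+, \beta_i^-, \beta_i^+\}$, and let $A_S = \bigcup_{i=1}^k A_{S,i}$. Then set $B_S = S - (A_S \cup Pred(A_S))$. With these choices we must check three things: that each $A_{S,i}$ is one of the three allowed sets, that $S$ is the union $A_S \cup Pred(A_S) \cup B_S$, and that $B_S$ has property (ii).

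First, $S$ is complete, so it contains exactly one of $\alpha_i^-, \alpha_i^+$ and exactly one of $\beta_i^-, \beta_i^+$. Hence $A_{S,i}$ is one of the four sets $\{\alpha_i^\pm, \beta_i^\pm\}$. The key step, and the only one that uses crossing edges, is to rule out $\{\alpha_i^+, \beta_i^+\}$.

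Since $\{\alpha_i, \beta_i\}$ is involved in a crossing edge, $(\alpha_i^-, \beta_i^+)$ is an edge of $HD(\mathcal{P})$, so $\alpha_i^- < \beta_i^+$. If $\beta_i^+ \in S$, closedness of $S$ gives $\alpha_i^- \in S$. Completeness then forbids $\alpha_i^+ \in S$. The symmetric edge $(\beta_i^-, \alpha_i^+)$ gives the same conclusion from the other side. This proves (i).

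Next, $A_S \subseteq S$ by construction, and $Pred(A_S) \subseteq S$ because $S$ is closed. Together with the definition of $B_S$, this gives $S = A_S \cup Pred(A_S) \cup B_S$.

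For (ii), observe that $A_S \cup Pred(A_S) \subseteq S$, so this set is partially complete. Now split the dual pairs $\{\rho^-, \rho^+\}$ into two kinds.
\begin{itemize}
\item If some element of the pair lies in $A_S \cup Pred(A_S)$, then that element is the unique member of the pair in $S$, since $S$ is complete. So $B_S$ contains nothing from this pair.
\item If neither element lies in $A_S \cup Pred(A_S)$, then the unique member of the pair in $S$ is not removed, so it lies in $B_S$.
\end{itemize}
Hence $B_S$ contains exactly one element from each dual pair disjoint from $A_S \cup Pred(A_S)$ and nothing else, which is (ii).

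I expect no real obstacle beyond the exclusion argument for $\{\alpha_i^+, \beta_i^+\}$. The only care needed is that different indices $i$ may share rotations, and that $A_S$ and $Pred(A_S)$ may overlap. Neither affects the argument, because every set involved is defined as a subset of $S$, so partial completeness is automatic.
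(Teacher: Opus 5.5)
Your proposal is correct and follows essentially the same argument as the paper. The paper likewise uses the crossing-edge relations $\alpha_i^- < \beta_i^+$ and $\beta_i^- < \alpha_i^+$, together with closedness and completeness, to exclude $\{\alpha_i^+,\beta_i^+\}$, and then takes $B_S = S - (A_S \cup Pred(A_S))$. Your explicit choice $A_{S,i} = S \cap \{\alpha_i^\pm, \beta_i^\pm\}$ and your two-case check of (ii) fill in details the paper leaves implicit.
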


\begin{proof}
	By assumption, for each pair $\{\alpha_i, \beta_i\}$,  $\alpha_{i}^- < \beta_{i}^+$ and  $\beta_{i}^- < \alpha_{i}^+$ in $\mc{P}$.  Since $S$ is a complete closed subset, if $\beta_{i}^+ \in S$, then so is $\alpha_{i}^- \in S$.  Similarly, if  $\alpha_{i}^+ \in S$, then so is $\beta_{i}^- \in S$.  It follows that for each $i$,  one of  $\{\alpha_{i}^+, \beta_i^- \}, \{\alpha_{i}^-, \beta_{i}^+ \}$ or $\{\alpha_{i}^-, \beta_{i}^- \}$ is a subset of $S$.  We shall call this particular subset $A_{S,i}$ and let $A_S =  \bigcup_{i=1}^k A_{S,i}$.


	Now, $A_S \subseteq S$ and $S$ is a closed subset of $\mathcal{P}$ so $ Pred(A_S) \subseteq S$ too.    Additionally, because $S$ is a complete closed subset,  $S - (A_S \cup Pred(A_S))$ must contain one element from each dual pair $\{\rho^-, \rho^+\}$ such that neither $\rho^-$ nor $\rho^+$ are in $A_S \cup Pred(A_S)$.  Let $B_S = S - (A_S \cup Pred(A_S))$.  Thus, $S = A_S \cup Pred(A_S) \cup B_S$.
\end{proof}



We now describe the maximal elements of $\mc{L(P)}$.

\begin{theorem}
	Assume that $\mathcal{P} = (P^- \cup P^+, \leq)$ has $2k$ crossing edges involving the pairs  $\{\alpha_{i}, \beta_{i}\}$, $i = 1, \hdots, k$.  Every maximal element $T$ of $\mc{L}(\mathcal{P})$ can be expressed as $T = A_T \cup Pred(A_T) \cup C_T$ where

	\noindent (i) $A_T =  \bigcup_{i=1}^k A_{T,i}$ and $A_{T,i} \in \{ \{\alpha_{i}^+, \beta_{i}^- \}, \{\alpha_{i}^-, \beta_{i}^+ \}, \{\alpha_{i}^-, \beta_{i}^- \} \}$ for each $i$ and

	\noindent (ii) $C_T$ contains $\rho^+$ of each dual pair $\{\rho^-, \rho^+\}$ such that neither $\rho^-$ nor $\rho^+$ are in $A_T \cup Pred(A_T)$.

	\label{thmmaximalrep}
\end{theorem}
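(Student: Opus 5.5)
Since a maximal element $T$ is in particular a complete closed subset of $\mc{P}$, I will start from Lemma~\ref{lemmarep}, which already gives $T = A_T \cup Pred(A_T) \cup B_T$ with $A_T$ satisfying condition (i). What is left is to show that $B_T$ consists only of positive elements, so that $B_T = C_T$. Concretely: let $\rho$ be a dual pair with neither $\rho^-$ nor $\rho^+$ in $A_T \cup Pred(A_T)$. I must show $\rho^+ \in T$.

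I will argue by contradiction, using Lemma~\ref{lemmasubset}. Among all dual pairs in $B_T$ whose negative element lies in $T$, pick $\rho$ so that $\rho^-$ is minimal in $\mc{P}$ among these negative elements. Set $T' = (T - \{\rho^-\}) \cup \{\rho^+\}$. This set is complete. If it is also closed, then $T'^+ = T^+ \cup \{\rho^+\} \supsetneq T^+$, so $T \sq T'$ with $T \neq T'$ by Lemma~\ref{lemmasubset}, contradicting the maximality of $T$. The choice of $\rho^-$ as minimal is intended to make the closedness check go through.

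Closedness of $T'$ splits into two checks.

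First, no element of $T'$ has $\rho^-$ as a predecessor. Suppose $\rho^- < \tau$ with $\tau \in T$.
\begin{itemize}
\item If $\tau$ lies in $A_T \cup Pred(A_T)$, then $\rho^- \in Pred(A_T)$, which is excluded.
\item So $\tau \in B_T$. If $\tau = \tau^+$, then $\rho^- < \tau^+$ and Proposition~\ref{propcrossing} gives a crossing edge $(\alpha_i^-, \beta_i^+)$ (or the reverse labelling) with $\rho^- \le \alpha_i^- < \beta_i^+ \le \tau^+$. Now $T$ must contain exactly one of $\alpha_i^\pm$, $\beta_i^\pm$ as prescribed by $A_{T,i}$. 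Either $\beta_i^+ \in A_T$, which puts $\tau^+$'s predecessor chain in $A_T$'s range and forces $\rho^- \in Pred(A_T)$; or $\alpha_i^- \in A_T$, which again gives $\rho^- \in Pred(A_T)$ unless $\rho^- = \alpha_i^-$, and the latter contradicts $\rho \notin A_T$. The step to watch is when $\beta_i^- \in T$ while $\beta_i^+ \le \tau^+ \in T$: closedness of $T$ then puts both $\beta_i^+$ and $\beta_i^-$ in $T$, which is impossible. Every case is therefore contradictory.
\item If $\tau = \tau^-$ is negative and lies in $B_T$, the successor relation $\rho^- < \tau^-$ alone conflicts with nothing. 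In this case I need to reverse the choice: pick $\rho^-$ \emph{maximal} rather than minimal among the negative elements of $B_T$ lying in $T$, so that no negative successor in $B_T$ exists.
\end{itemize}
I will therefore take $\rho^-$ maximal, and then the minimal choice above is not needed.

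Second, all predecessors of $\rho^+$ must lie in $T'$. A positive predecessor $\sigma^+ < \rho^+$ gives, by the mirror property, $\rho^- < \sigma^-$. If $\sigma^- \in T$, this contradicts either the maximality of $\rho^-$ (when $\sigma \in B_T$) or $\rho^- \notin Pred(A_T)$ (when $\sigma^- \in A_T \cup Pred(A_T)$). Hence $\sigma^+ \in T$. A negative predecessor $\sigma^- < \rho^+$ yields, through Proposition~\ref{propcrossing}, a crossing pair $\alpha_i^- < \beta_i^+ \le \rho^+$ with $\sigma^- \le \alpha_i^-$. The three allowed values of $A_{T,i}$ then force $\sigma^- \in T$. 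The one exception is $\alpha_i^+ \in T$; in that case the mirror edge $\beta_i^- < \alpha_i^+$ gives $\rho^- \le \beta_i^- \in A_T$, so $\rho^- \in Pred(A_T)$, contrary to assumption. Showing that $\sigma^-$ indeed lands in $T$ in the remaining cases, and handling the equalities $\sigma^- = \alpha_i^-$ and $\sigma^- \neq \rho^-$, is where the main effort lies.

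The main obstacle is exactly this case analysis with the crossing edges in both closedness checks; the key tool in each case is Proposition~\ref{propcrossing} combined with the three allowed forms of $A_{T,i}$.
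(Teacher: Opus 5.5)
Your proof is correct, but it is organized differently from the paper's.

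The paper does not exchange a single element. It replaces all of $B_T$ by $C_T$ at once and forms $U = A_T \cup Pred(A_T) \cup C_T$. It then checks closedness only at the elements $\rho^+ \in C_T$:
\begin{itemize}
\item a negative predecessor is handled via Proposition~\ref{propcrossing} and the three possible forms of $A_{T,i}$;
\item a positive predecessor $\sigma^+$ is handled by mirroring to $\rho^- < \sigma^-$ and using that $A_T \cup Pred(A_T)$ is down-closed.
\end{itemize}
Then $T^+ \subseteq U^+$, so $T \sqsubseteq U$ by Lemma~\ref{lemmasubset}, and $T = U$ by maximality.

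Your one-element flip $\rho^- \mapsto \rho^+$ must in addition exclude a negative element of $B_T$ strictly above $\rho^-$. This configuration appears in both of your closedness checks: once directly as a successor of $\rho^-$, and once, mirrored, as a positive predecessor $\sigma^+ < \rho^+$ with $\sigma^- \in B_T$. Choosing $\rho^-$ maximal among the negative elements of $B_T$ handles exactly this. The simultaneous replacement never meets the issue, because every such pair is flipped as well.

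What each approach buys:
\begin{itemize}
\item Your exchange exhibits an upper cover of $T$ in $\mathcal{L}(\mathcal{P})$, i.e.\ an edge of $G(\mathcal{P})$. Since $\rho$ is not one of the crossing pairs, $A_{T'} = A_T$, so iterating the flip climbs from $T$ to the paper's $U$.
\item The paper's closedness check never uses the maximality of $T$. It therefore proves the slightly stronger fact that every complete closed $S$ lies below the complete closed set $A_S \cup Pred(A_S) \cup C_S$.
\end{itemize}

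Two minor points. In the case $\alpha_i^+ \in T$, the inequality $\rho^- \le \beta_i^-$ comes from mirroring $\beta_i^+ \le \rho^+$, not from the mirror edge $\beta_i^- < \alpha_i^+$. The remaining cases you flag as the main effort are immediate: $\alpha_i^- \in A_T$ together with $\sigma^- \le \alpha_i^-$ places $\sigma^-$ in the down-closed set $A_T \cup Pred(A_T) \subseteq T$, which does not contain $\rho^-$.
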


\begin{proof}
	The elements of $\mc{L(P)}$, including its maximal elements, are complete closed subset of $\mathcal{P}$.  Thus, from Lemma \ref{lemmarep},  a maximal element $T = A_T \cup Pred(A_T) \cup B_T$ where
	$B_T$ has one element from each dual pair $\{\rho^-, \rho^+\}$ such that neither $\rho^+$ nor $\rho^-$ are in $A_T \cup Pred(A_T)$.

	Now, define set  $U$ as $U= A_T \cup Pred(A_T) \cup C_T$.    We shall argue next that $U$ is also a complete closed subset of $\mathcal{P}$.  Since $T$ is a complete subset of $\mathcal{P}$, the set $U$ is also one because both $B_T$ and $C_T$ contain exactly one element from each dual pair missing a representative in $A_T \cup Pred(A_T)$.   We just have that to show that $U$ is a closed subset as well.  Suppose not.  Then some $\rho^+ \in C_T$ is missing a predecessor in $U$, say $\sigma^-$ or $\sigma^+$.

	\noindent {\bf Case 1: $\sigma^-$ is missing from $U$.}  Since $\sigma^- < \rho^+$,  by Proposition \ref{propcrossing},  there is pair $\{\alpha_i, \beta_i\}$ involved in a crossing edge of $H(\mathcal{P})$ so that (i) $ \sigma^- \leq \alpha_i^- < \beta_i^+ \leq \rho^+$ or  (ii) $\sigma^- \leq \beta_i^- < \alpha_i^+ \leq \rho^+.$   By construction, $A_T$ contains $\alpha_i^-$ or $\beta_i^-$. We have four subcases to consider.  For each one, we will argue that some contradiction arises so case 1 cannot be true.

	If (i) is true and $\alpha_i^- \in A_T$ or (ii) is true and $\beta_i^- \in A_T$, then $\sigma^- \in A_T \cup Pred(A_T)$ and cannot be missing from $U$.   On the other hand,  if (i) is true and $\beta_i^- \in A_T$,  we know that $\rho^- \leq  \beta_i^- < \alpha_i^+ \leq \sigma^+$ because $\mathcal{P}$ is a mirror poset.  This implies that $\rho^- \in A_T \cup Pred(A_T)$, contradicting the fact that $\rho^+ \in C_T$.   Similarly,  if (ii) is true and  $\alpha_i^- \in A_T$, we know that $\rho^- \leq  \alpha_i^- < \beta_i^+ \leq \sigma^+$ so $\rho^- \in A_T \cup Pred(A_T)$, again contradicting the fact that $\rho^+ \in C_T$.

	\noindent {\bf Case 2: $\sigma^+$ is missing from $U$.}   Since $\sigma^+ < \rho^+$, we have $\rho^- < \sigma^-$.  Now, $\sigma^+ \not \in U$ implies $\sigma^+ \not \in C_T$.  This means that $\sigma^- \in A_T \cup Pred(A_T)$ so $\rho^- \in A_T \cup Pred(A_T)$;  otherwise, an element of $A_T$ is missing a predecessor.   But this contradicts the fact that $\rho^+ \in C_T$.  

	Our argument shows that $U = A_T \cup Pred(A_T) \cup C_T$ is a  complete closed subset of $\mathcal{P}$.  Now, $T^+$ consists of the positive elements in $A_T \cup Pred(A_T)$ and in $B_T$.  But the positive elements of $B_T$ are all in $C_T$.  Thus, $T^+ \subseteq U^+$ and therefore $T \sq U$ in $\mc{L(P)}$ by Lemma \ref{lemmasubset}.  But $T$ is a maximal element of $L(\mathcal{P})$. So it must be the case that $B_T = C_T$ and $T = A_T \cup Pred(A_T) \cup C_T$.
\end{proof}

Informally,  Theorem \ref{thmmaximalrep} implies that every maximal element $T$ of $\mc{L}(\mathcal{P})$ is completely determined by which of the crossing edges of $\mathcal{P}$ it uses.  For each pair $\{\alpha_{i}, \beta_{i}\}$, $i = 1, \hdots, k$, it has three options: use the edge  $(\alpha^+_{i}, \beta^-_{i})$ or the edge $(\alpha_{i}^-, \beta_{i}^+)$ or do not use any of them.  The options correspond to setting $A_{T,i}$ to $\{\alpha^+_{i}, \beta^-_{i}\}$, $\{\alpha^-_{i}, \beta^+_{i}\}$ and $\{\alpha^-_{i}, \beta^-_{i}\}$ respectively.  Once all the choices have been made, the rest of the elements in $T$ are set:  they consist of the missing predecessors of the elements in $A_T$ and the positive elements of dual pairs not represented in $A_T$.
As an example, consider the maximal element $T = \{\rho_1^+, \rho_2^-, \rho_3^-, \rho_4^+, \rho_5^+\}$ of the median semilattice in Figure \ref{figsemis}. It uses the crossing edges $(\rho_2^-, \rho_4^+)$, $(\rho_3^-, \rho_4^+)$, and $(\rho_3^-, \rho_5^+)$ of the corresponding mirror poset so $A_T = \{\rho_2^-, \rho_3^-, \rho_4^+, \rho_5^+\}$.  Now $A_T$ is already a closed subset and has no missing predecessors, but it has no  representatives from $\{\rho^-_1, \rho^+_1\}$ so $\rho^+_1$ is added to $A_T$ to obtain $T$.  We can now bound the number of maximal elements in $\mc{L(P)}$.

\begin{corollary}\label{cor:crossing-maximal}
	Assume that $\mathcal{P} = (P^- \cup P^+, \leq)$ has $2k$ crossing edges.  Then $\mc{L(P)}$ has at most $3^k$ maximal elements.
\end{corollary}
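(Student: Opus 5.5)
The plan is to define an injection from the set of maximal elements of $\mc{L(P)}$ into the set of choice vectors $(A_{T,1},\hdots,A_{T,k})$ allowed by Theorem~\ref{thmmaximalrep}. Since each coordinate $A_{T,i}$ has exactly three possible values, $\{\alpha_i^+,\beta_i^-\}$, $\{\alpha_i^-,\beta_i^+\}$ or $\{\alpha_i^-,\beta_i^-\}$, there are at most $3^k$ such vectors, and the bound follows.

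\textbf{Step 1: assign a vector to each maximal element.} Let $T$ be a maximal element of $\mc{L(P)}$. Theorem~\ref{thmmaximalrep} gives $T = A_T \cup Pred(A_T) \cup C_T$ with $A_T = \bigcup_{i=1}^k A_{T,i}$ and each $A_{T,i}$ of the prescribed form. I would first check that the vector is determined by $T$ itself, so that the map is well defined. Because $T$ is complete, it contains exactly one of $\alpha_i^\pm$ and exactly one of $\beta_i^\pm$. The crossing edges $\alpha_i^- < \beta_i^+$ and $\beta_i^- < \alpha_i^+$, together with closedness, rule out $\{\alpha_i^+,\beta_i^+\}$: that set would force $\beta_i^- \in T$ alongside $\beta_i^+$, violating completeness. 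Hence $A_{T,i} = T \cap \{\alpha_i^\pm,\beta_i^\pm\}$ is one of the three allowed sets, and it is read off from $T$, exactly as in the proof of Lemma~\ref{lemmarep}.

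\textbf{Step 2: show the vector determines $T$.} Once $A_T$ is fixed, $Pred(A_T)$ is fixed. The set of dual pairs with no representative in $A_T \cup Pred(A_T)$ is then fixed, and $C_T$ must be the set of positive elements of exactly those pairs. So $T = A_T \cup Pred(A_T) \cup C_T$ is a function of $(A_{T,1},\hdots,A_{T,k})$ alone. Two maximal elements with the same vector are therefore equal, and the map $T \mapsto (A_{T,i})_i$ is injective. This gives at most $3^k$ maximal elements.

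\textbf{Main point to handle.} The only place needing care is the indexing. I need the $2k$ crossing edges to group into exactly $k$ pairs $\{\alpha_i,\beta_i\}$, as assumed in Theorem~\ref{thmmaximalrep}. By mirror property (ii), a crossing edge $(\alpha^-,\beta^+)$ in the Hasse diagram has a mirror edge $(\beta^-,\alpha^+)$, which is also a cover relation and also crossing. Each edge and its mirror are distinct, since $\alpha \ne \beta$ by incomparability of duals. So the $2k$ edges split into $k$ mirror pairs, and the hypotheses of Theorem~\ref{thmmaximalrep} apply. I would not attempt to show that every choice vector yields a maximal element; only the upper bound is required.
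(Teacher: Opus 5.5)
Your proposal is correct and follows the paper's argument. By Theorem~\ref{thmmaximalrep}, each maximal element $T$ is determined by its choices $A_{T,1},\hdots,A_{T,k}$, and each choice has three options, so there are at most $3^k$ maximal elements. Your extra checks are sound details that the paper leaves implicit: that the vector can be read off from $T$, and that the $2k$ crossing edges split into $k$ mirror pairs. The first is not strictly needed, since the vector determining $T$ already gives the bound.
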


\begin{proof}
	From Theorem \ref{thmmaximalrep}, every maximal element  $T $ of $L(\mathcal{P})$ can be written as $T= A_T \cup Pred(A_T) \cup C_T$ where $A_T =  \bigcup_{i=1}^k A_{T,i}$.  Each $A_{T,i}$ can be one of three possible sets.  We already noted that once each $A_{T,i}$ is specified, the rest of the elements of $T$ are completely determined. 
    It follows that there can be at most $3^k$ distinct forms of $T$.
\end{proof}

We note that the converse of Theorem \ref{thmmaximalrep} does not hold.  That is,  by specifying the choices for each $A_{T,i}$, combining it with $Pred(A_T)$ and the positive elements of the dual pairs with missing representatives, the resulting set $T$ may not be a maximal element of $L(\mathcal{P})$.  One reason is because when  the sets for the $A_{T,i}$, $i = 1, \hdots, k$, are not chosen carefully, the positive and negative elements of a dual pair can appear in $A_T$ so $T$ is not a complete subset of $L(\mathcal{P})$.  But even if $T$ is a complete closed subset of $L(\mathcal{P})$, it may not be a maximal element of $L(\mathcal{P})$.  Thus, the upper bound in Corollary \ref{cor:crossing-maximal} is not necessarily tight.

\begin{corollary}\label{cor:optimal-matching}
	Let $I$ be an SR instance with $2n$ agents. Let $\mc{L}(I, \eta)$ be the median semilattice obtained by rooting $G(I)$ at the stable matching $\eta$. Suppose that when $\mc{R}'(I)$ is oriented at $S_\eta$, the orientation has at most $2k$ crossing edges.  Then $\mc{L}(I, \eta)$ has at most $3^k$ maximal elements.  Furthermore, given $I$, cost function $C$ and $S_\eta$,  computing an optimal stable matching of $I$ can be done in $3^k n^{O(1)}$ time.
\end{corollary}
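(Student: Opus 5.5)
The plan is to combine Corollary \ref{cor:crossing-maximal} with Theorem \ref{thmLocalOpt}. Algorithmically, we will enumerate a superset of candidate maximal elements using the representation in Theorem \ref{thmmaximalrep}, and then compute a local optimum below each candidate.

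First, build the directed acyclic graph $H(I)$ representing $\mc{R}'(I)$ in $O(n^3\log n)$ time (Theorem \ref{thm:sr-poset}). Orient it at $S_\eta$, marking rotations in $S_\eta$ negative and the rest positive. The bound of $3^k$ on the number of maximal elements of $\mc{L}(I,\eta)$ is immediate. The orientation of $\mc{R}'(I)$ at $S_\eta$ has at most $2k$ crossing edges, so Corollary \ref{cor:crossing-maximal} applies to $\mc{L}(\mc{R}')$, which is isomorphic to $\mc{L}(I,\eta)$.

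To identify the crossing pairs $\{\alpha_i,\beta_i\}$, we need the Hasse diagram of $\mc{R}'(I)$, not merely $H(I)$. So we compute the transitive closure and then the transitive reduction of $H(I)$. This takes polynomial time, since there are $O(n^2)$ vertices. We then read off the crossing edges and group them into their dual pairs.

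Next, for each of the at most $3^k$ choices of $(A_{T,1},\dots,A_{T,k})$, we form
\[
T = A_T\cup Pred(A_T)\cup C_T,
\]
where $C_T$ consists of the positive element of every dual pair not yet represented. We discard $T$ unless it is a complete closed subset; both conditions are checkable in polynomial time using the transitive closure. By Theorem \ref{thmmaximalrep}, every maximal element of $\mc{L}(\mc{R}')$ appears among the surviving sets, although some survivors may not be maximal. This does no harm: every survivor $T$ is a genuine complete closed subset, so it corresponds to a stable matching $\mu_T$, and the interval $[\eta,\mu_T]$ contains only stable matchings.

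For each survivor we apply Theorem \ref{thmLocalOpt} with $S_\mu=T$. This gives a local optimum of $[\eta,\mu_T]$ in $O(n^4\log n)$ time. Alternatively, we can reuse $H(I)$ and solve only the max-cost closure on $T^+$, then complete with negatives and recover the matching via Corollary \ref{corFindMatching}. We output the cheapest of these local optima.

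Correctness follows from two facts. Every stable matching lies in $[\eta,\mu]$ for some maximal $\mu$, because the semilattice is finite. Every maximal element is among the candidates. Hence some interval we examine contains a global optimum, and its local optimum has the global optimal cost. Conversely, every matching we output is stable, so its cost is at least the global optimum.

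The total running time is $3^k\cdot\mathrm{poly}(n)$ for enumeration, checking, and min-cut, plus a polynomial preprocessing cost, which gives $3^k n^{O(1)}$. The one step requiring care is ensuring that the crossing edges are taken in the true Hasse diagram rather than in $H(I)$, which may contain extra or transitive edges. Handling this through the transitive reduction stays polynomial, so I expect no real obstacle there.
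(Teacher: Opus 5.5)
Your proposal is correct and follows essentially the same route as the paper. You bound the maximal elements via Corollary~\ref{cor:crossing-maximal}, take the transitive reduction of $H(I)$ to identify the crossing pairs, enumerate the at most $3^k$ candidate forms from Theorem~\ref{thmmaximalrep}, discard those that are not complete closed subsets, and apply Theorem~\ref{thmLocalOpt} to each survivor, returning the cheapest local optimum. Your explicit remarks, that non-maximal survivors do no harm and that crossing edges must be taken from the true Hasse diagram, are implicit in the paper's proof and are handled the same way there.
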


\begin{proof}  
	The fact that $\mc{L}(I, \eta)$ has at most $3^k$ maximal elements follows from Corollary \ref{cor:crossing-maximal} immediately.  So we consider the problem of computing an optimal stable matching of $I$.

	Construct the directed acyclic graph $H(I)$ whose transitive closure is $\mc{R}'(I)$ as described in Theorem \ref{thm:sr-poset}.   Do a transitive reduction of $H(I)$ so that only the edges in the Hasse diagram of $\mc{R}'(I)$ are left.  Orient $\mc{R}'(I)$ at $R^- = S_\eta$ and then identify all the pairs  involved in a crossing edge of $(R^- \cup R^+, \leq)$.  These steps can be done in $n^{O(1)}$ time.


	Assume $\{\alpha_i, \beta_i\}$, $i = 1, \hdots,  k$,  are involved in a crossing edge of the oriented $\mc{R}'(I)$.   Consider the $3^k$ possible forms of a maximal element $T$ as described in Theorem \ref{thmmaximalrep}.  Discard those that are not complete closed subsets of $\mc{R}'(I)$. This step can be completed in $3^k n^{O(1)}$ time since $\mc{R}'(I)$ has only $O(n^2)$ elements.    For each valid $T$,  compute a locally optimal stable matching in $[\eta, \mu_T]$ where $\mu_T$ is the stable matching corresponding to $T$.
	According to Theorem \ref{thmLocalOpt}, this can be done in $O(n^4 \log n)$ time.    Finally, among the locally optimal stable matchings, output the one that has the least cost. It must be an optimal stable matching of $I$. This step also takes $3^k n^{O(1)}$ time.
\end{proof}

%% file: min-crossings.tex
\section{Minimum Crossing Edges}\label{sec:min-crossing}

In the previous sections, we showed that the structure of the median semilattice $\mc{L}(I, \eta)$ can be understood in terms of the reduced rotation poset $\mc{R}'(I)$ when the latter is oriented at $S_
	\eta$.  In particular, the {\it number} of crossing edges of the oriented $\mc{R}'(I)$ plays a very important role---it can be used to bound the number of maximal elements in $\mc{L}(I, \eta)$ and, consequently, bound the running time for finding an optimal stable matching of $I$.

But $\mc{L}(I, \eta)$ is one of many possible semilattices that can arise from the median graph $G(I)$. By choosing a different root for $G(I)$ and, in parallel, a different base for $\mc{R}'(I)$,  the resulting orientation of $\mc{R}'(I)$ can have far fewer crossing edges.  We can then substantially improve the running time for computing an optimal stable matching of $I$.  Thus, we consider the following optimization problem:

\noindent \textsc{Minimum Crossing Orientation (MCO):} {\it Given a mirror poset $\mc{P}$, find an orientation of $\mc{P}$ so that it has the least number of crossing edges among all the orientations of $\mc{P}$.}

Let $\mco(\mc{P})$ denote the number of crossing edges in an {\it optimal} orientation of $\mc{P}$, which we also refer to as   the {\it minimum crossing distance of $\mc{P}$}. Additionally,  when $\mc{P}$ is the reduced rotation poset of SR instance $I$, we  call $\mco(\mc{P})$ the {\it minimum crossing distance of $I$}.  We begin by characterizing the mirror posets with minimum crossing distance $0$.

\begin{lemma}
	Let $\mc{P}$ be a mirror poset.  Then $\mco(\mc{P}) = 0$ if and only if $\mc{P}$ has an orientation $(P^- \cup P^+, \leq)$ so that $\mc{L}(\mc{P})$ is a distributive lattice.  Furthermore, given a directed acyclic graph $H$ whose transitive closure contains exactly the relations in $\mc{P}$, we can determine if $\mco(\mc{P}) = 0$  in time linear in the size of $H$.
	\label{prop:MCO}
\end{lemma}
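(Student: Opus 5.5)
We prove the two directions of the equivalence via Corollary~\ref{cor:crossing-maximal} and Proposition~\ref{propmax}, and then reduce the algorithmic claim to a connectivity test on $H$.

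\textbf{Forward direction.} Suppose $\mco(\mc{P})=0$, and fix an orientation $(P^-\cup P^+,\leq)$ with no crossing edges. Applying Corollary~\ref{cor:crossing-maximal} with $k=0$, $\mc{L}(\mc{P})$ has at most $3^0=1$ maximal element. A finite poset with a unique maximal element has a maximum. So by Proposition~\ref{propmax}, $\mc{L}(\mc{P})$ is a distributive lattice.

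\textbf{Converse.} Suppose some orientation makes $\mc{L}(\mc{P})$ a distributive lattice, and let $T$ be its maximum. By Lemma~\ref{lemmasubset}, $S^+\subseteq T^+$ for every complete closed subset $S$. I will show that every $\rho^+$ lies in some complete closed subset, which forces $T^+=P^+$ and hence $T=P^+$.

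\begin{itemize}
\item Let $D$ be $\rho^+$ together with its predecessors. Then $D$ is partially complete. Indeed, if $\sigma,\ol{\sigma}\in D$, then $\sigma\le\rho^+$ gives $\rho^-\le\ol{\sigma}$ by the mirror property. Combined with $\ol{\sigma}\le\rho^+$, this gives $\rho^-\le\rho^+$, contradicting condition (i) of a mirror poset.
\item Let $S=D\cup\{\tau^-:\tau^+\notin D\}$. Then $S$ is complete, and I claim it is closed. This is checked as in Case~2 of the proof of Theorem~\ref{thmChar}.
\begin{itemize}
\item Elements of $D$ have all their predecessors in $D$.
\item An added $\tau^-$ has only negative predecessors, since $P^-$ is closed. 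Take such a predecessor $\sigma^-\le\tau^-$. If $\sigma^+\in D$, then $\tau^+\le\sigma^+$ would put $\tau^+\in D$, which is impossible. Hence $\sigma^+\notin D$.
\item Now either $\sigma^-\in D\subseteq S$, or $\sigma^-\notin D$. In the latter case, $\sigma^+\notin D$ means $\sigma^-$ was added, so again $\sigma^-\in S$.
\end{itemize}
\end{itemize}

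Having $T=P^+$, the set $P^+$ is closed. Therefore no positive element has a negative predecessor, and so there are no crossing edges. This gives $\mco(\mc{P})=0$.

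\textbf{Algorithm.} The main step is to recast "zero crossing edges" combinatorially. An orientation at $S$ has no crossing edges exactly when no Hasse edge joins $S$ to $P-S$ in either direction:
\begin{itemize}
\item an edge from $S$ to $P-S$ would itself be a crossing edge;
\item an edge from $P-S$ into $S$ is excluded because $S$ is closed.
\end{itemize}
Two elements lie in the same weak component of $H$ if and only if they lie in the same weak component of the Hasse diagram, because both graphs generate the same order. So the condition is that $S$ is a union of weakly connected components of $H$.

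The map $\rho\mapsto\ol{\rho}$ is an anti-automorphism of $\mc{P}$, so it sends weak components to weak components. I claim $\mco(\mc{P})=0$ if and only if no weak component contains both $\rho$ and $\ol{\rho}$ for any $\rho$.
\begin{itemize}
\item If some component contains a dual pair, then any union of components that is complete must both include and exclude that component, which is impossible.
\item Otherwise, the components pair up as $\{C,\ol{C}\}$ with $C\neq\ol{C}$. Choosing one component from each pair yields a complete set. It is closed because it is a union of components.
\end{itemize}

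The test therefore labels the weak components of $H$ by BFS/DFS in time $O(|V(H)|+|E(H)|)$, then checks each dual pair's labels in linear time. The only delicate point is verifying that weak components of $H$ coincide with those of the Hasse diagram, which holds because $H$ and the Hasse diagram have the same transitive closure.
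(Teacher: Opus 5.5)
Your proof is correct and follows the paper's proof essentially step for step: the forward direction goes through Corollary~\ref{cor:crossing-maximal} and Proposition~\ref{propmax}, the converse shows the maximum $T$ must equal $P^+$ via Lemma~\ref{lemmasubset}, and the algorithm is the connected-components test on the undirected version of $H$. The differences are minor. You build the complete closed superset of $\rho^+$ and its predecessors directly, by adding negatives and using that $P^-$ is closed, whereas the paper cites Lemma~\ref{lemmapartial} via Corollary~\ref{cor:pred}. You also explicitly justify that the weak components of $H$ coincide with those of the Hasse diagram, which the paper uses implicitly.
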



To prove Lemma \ref{prop:MCO}, we shall make use of the next two results.  Recall that in a mirror poset $\mc{P} = (P, \leq)$, a subset $U \subseteq P$ is {\it partially complete} if it contains {\it at most} one element of every dual pair of elements in $\mc{P}$.

\begin{lemma}
	[Lemma 3 in \cite{Cheng2011Stable}] Let $\mathcal{P} = (P, \leq)$ be a mirror poset, and let $U$ be one of its partially complete closed subset.  Then there is a complete closed subset $S$ so that $U \subseteq S$.  
	\label{lemmapartial}
\end{lemma}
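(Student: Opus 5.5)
The plan is to prove the statement by repeatedly enlarging $U$ one element at a time, keeping it both closed and partially complete at every step, until it contains a representative of every dual pair. Since $P$ is finite, this process terminates, and the final set is the required complete closed subset $S \supseteq U$.

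For the enlargement step, suppose $U$ is a partially complete closed subset that is not complete. Let $M$ be the set of elements $x \in P$ such that neither $x$ nor $\ol{x}$ lies in $U$. Then $M$ is nonempty, and it is closed under taking duals. Choose $x$ to be a \emph{minimal} element of $M$ with respect to $\leq$. I will show that $U' = U \cup \{x\}$ is again closed and partially complete.

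Partial completeness is immediate. Since $\ol{x} \notin U$, adding $x$ does not place both members of a dual pair in $U'$. Note that $x \neq \ol{x}$, because by condition (i) of a mirror poset these two elements are incomparable, hence distinct.

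For closedness, it suffices to check that every $y < x$ already lies in $U$. Take $y < x$. By minimality of $x$, we have $y \notin M$, so $y \in U$ or $\ol{y} \in U$. Suppose $\ol{y} \in U$. Since $y \leq x$, condition (ii) of a mirror poset gives $\ol{x} \leq \ol{y}$. Because $U$ is closed, this forces $\ol{x} \in U$, which contradicts $x \in M$. Hence $y \in U$, and $U'$ is closed.

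The only delicate point is this exclusion of $\ol{y} \in U$ via the mirror property. It is exactly where minimality of $x$ in $M$, rather than an arbitrary choice of an unrepresented pair, is needed. Each step strictly increases the number of represented dual pairs, so after finitely many steps we obtain a complete closed subset $S$ with $U \subseteq S$.
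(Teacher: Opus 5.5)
Your proof is correct. Choosing $x$ minimal among the elements whose dual pair is unrepresented in $U$ is what makes $U \cup \{x\}$ closed: any $y < x$ has $y \in U$ or $\ol{y} \in U$, and condition (ii) turns $\ol{y} \in U$ into $\ol{x} \le \ol{y}$, hence $\ol{x} \in U$, which contradicts the choice of $x$. Finiteness of $P$ then gives termination. The paper gives no proof of its own to compare against, since it imports this statement as Lemma~3 of Cheng and Lin. Your one-element greedy extension is therefore a self-contained proof of the cited fact, which the paper then uses for Corollary~\ref{cor:pred}.
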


\begin{corollary}
	Let $\mathcal{P}= (P, \leq)$ be a mirror poset.  For every element $\rho$ of $\mathcal{P}$,  there is a complete closed subset $S_\rho$ that contains $\rho$ and its predecessors.
	\label{cor:pred}
\end{corollary}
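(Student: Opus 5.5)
The plan is to apply Lemma~\ref{lemmapartial} to the set $U_\rho = \{\rho\} \cup \Pred(\{\rho\})$. This set is closed by construction: every predecessor of an element of $U_\rho$ is either $\rho$ itself or a predecessor of $\rho$, by transitivity. So the only real work is to check that $U_\rho$ is \emph{partially complete}. Once that is done, Lemma~\ref{lemmapartial} supplies a complete closed subset $S_\rho \supseteq U_\rho$, and $S_\rho$ contains $\rho$ and all its predecessors, as required.

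To show partial completeness, I argue by contradiction and suppose some dual pair $\{\sigma, \ol{\sigma}\}$ lies entirely in $U_\rho$. There are two cases.

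\emph{First case: one member of the pair is $\rho$.} Say the pair is $\{\rho, \ol{\rho}\}$ with $\ol{\rho} \in U_\rho$. Then $\ol{\rho} \leq \rho$, and since $\ol{\rho} \neq \rho$, in fact $\ol{\rho} < \rho$. This makes $\rho$ and $\ol{\rho}$ comparable, contradicting condition (i) in the definition of a mirror poset.

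\emph{Second case: both $\sigma$ and $\ol{\sigma}$ are strict predecessors of $\rho$.} From $\sigma < \rho$, condition (ii) of the mirror poset definition gives $\ol{\rho} < \ol{\sigma}$. Combining this with $\ol{\sigma} < \rho$ yields $\ol{\rho} < \rho$. Again $\rho$ and $\ol{\rho}$ are comparable, contradicting condition (i).

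This mirror-duality check is the only step needing thought; the rest is immediate from Lemma~\ref{lemmapartial}.
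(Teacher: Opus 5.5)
Your proposal is correct and matches the paper's proof: the same set $\{\rho\}\cup\Pred(\rho)$, closedness by transitivity, partial completeness by using the mirror property to derive $\ol{\rho}\le\rho$, and then Lemma~\ref{lemmapartial}. Your two-case split is valid but not needed. The paper works with non-strict $\le$: from $\sigma\le\rho$ and $\ol{\sigma}\le\rho$ it gets $\ol{\rho}\le\ol{\sigma}$ and $\ol{\rho}\le\sigma$, hence $\ol{\rho}\le\rho$, and this one argument covers both of your cases.
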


\begin{proof}
	Let $\Pred(\rho)$ contain all the predecessors of $\rho$.  Consider $U = \{\rho\} \cup \Pred(\rho)$.   We now prove that $U$ is a partially complete closed subset.  Let $\tau \in U$ and $\sigma < \tau$.  Then  $\sigma <  \tau \leq \rho$.  It follows that $\sigma \in \Pred(\rho)$.  That is, $U$ is a closed subset.   Suppose $U$ is not partially complete.  Then there is a pair of dual elements $\sigma$ and $\overline{\sigma}$ in $U$.  Thus, $\sigma \leq \rho$ and $\overline{\sigma} \leq \rho$.  But $\mathcal{P}$ is a mirror poset so $\overline{\rho} \leq \overline{\sigma}$ and $\overline{\rho} \leq \sigma$.  This implies that $\overline{\rho} \leq \rho$, violating the property that an element and its dual are incomparable in a mirror poset.  Hence, $U$ is partially complete.  From Lemma \ref{lemmapartial}, there must be a complete closed subset $S_\rho$ so that $U \subseteq S_\rho$.
\end{proof}


\begin{proof}[Proof of Lemma \ref{prop:MCO}.] Assume $\mco(\mc{P}) = 0$ and $(P^- \cup P^+, \leq)$ is an orientation of $\mc{P}$ that has no crossing edges.  By Corollary \ref{cor:crossing-maximal}, $\mc{L}(\mc{P})$ has at most $3^0 = 1$ maximal element; i.e., $\mc{L}(\mc{P})$ has a maximum element.  Thus, $\mc{L}(\mc{P})$ is a distributive lattice by Proposition \ref{propmax}.


	On the other hand, assume $(P^- \cup P^+, \leq)$ is an orientation of $\mc{P}$ and  $\mc{L}(\mc{P})$ is a distributive lattice. According to Proposition \ref{propmax},  $\mc{L}(\mc{P})$ has a maximum element $T$. We will now argue that $T = P^+$.  Suppose not and $\rho^- \in T$.  Let $U$ contain $\rho^+$ and its predecessors.  By Corollary \ref{cor:pred}, $\mc{P}$ has a complete closed subset $S_{\rho^+}$ so that $U \subseteq S_{\rho^+}$.  Since $T$ is the maximum element of $\mc{L}(\mc{P})$, $S^+_{\rho^+} \subseteq T^+$ by
	Lemma \ref{lemmasubset}.  It follows that $\rho^+ \in T$.  But $\rho^- \in T$ too, contradicting the fact that $T$ is  a complete closed subset of $\mc{P}$.  Hence, $T = P^+$. In other words, $P^+$ is also a complete closed subset of $\mc{P}$ so $(P^- \cup P^+, \leq)$ has no crossing edges.

	Next, we describe an algorithm for determining if $\mco(\mc{P}) = 0$.  Consider $H$ whose transitive closure contains exactly the relations in $\mc{P}$, and let $\hat{H}$ be its undirected version.  Run a graph traversal (BFS or DFS) on $\hat{H}$. For each vertex of $\hat{H}$, record the connected component that contains the vertex.  If some dual pair  $\rho$ and $\ol{\rho}$  lie in the same connected component of $\hat{H}$, return ``no"; otherwise, return ``yes."  Clearly, the algorithm runs in time linear in the size of $\hat{H}$ and therefore of $H$.  We now argue its correctness.

	Again, assume $\mco(\mc{P}) = 0$ and $(P^- \cup P^+, \leq)$ is an orientation of $\mc{P}$ that has no crossing edges. For every dual pair $\{\rho, \ol{\rho}\}$, one element lies in $P^-$ and another in $P^+$.  Since there are no edges between $P^-$ and $P^+$ in the Hasse diagram of $\calP$, there is no path  between $\rho$ and $\ol{\rho}$ in $\hat{H}$ so the two elements have to lie in different connected components of $\hat{H}$. It follows that the algorithm will return ``yes."

	Conversely, assume the algorithm returned ``yes."   Pick an element $\rho$.  Notice that in $\hat{H}$,  there is a path  from $\rho$ to another element $\sigma$ if and only if there is a path from $\ol{\sigma}$ to $\ol{\rho}$ because $\mc{P}$ is a mirror poset.  Assume $\rho$ lies in connected component $C_1$ and $\ol{\rho}$ lies in connected component $C_2$.  Since the algorithm returned ``yes,'' $C_1 \neq C_2$.  Furthermore,
	the vertices in $C_1$ are duals of the vertices in $C_2$ according to our discussion.
	Put the vertices of $C_1$ into $P^-$ and the vertices of $C_2$ into $P^+$.  Do this again until all the elements of $\mc{P}$ have been assigned to $P^-$ or $P^+$.   Then the duals of the elements in $P^-$ are in $P^+$ and vice versa.  There are also no edges between $P^-$ and $P^+$ in the Hasse diagram of $\calP$ since their respective elements lie in different connected components of $\hat{H}$.  Thus, $P^-$ and $P^+$ are both complete closed subsets of $\mc{P}$ so $(P^- \cup P^+, \leq)$ is an orientation of $\mc{P}$ with no crossing edges.
\end{proof}

Recall that $G(\mc{P})$ is the median graph formed by the complete closed subsets of $\mc{P}$.  The parameter $\mco(\calP)$ gives a quantitative measure of how structurally similar $G(\mc{P})$ is to the covering graph of a distributive lattice.  According to Proposition \ref{prop:MCO},
$\mco(\calP) = 0$ if and only if $G(\mc{P})$ can be rooted at some complete closed subset $S$ and the result is a distributive lattice. The said $S$ is the base that should be used so that the orientation of $\mc{P}$ has no crossing edges.
More generally, Corollary~\ref{cor:crossing-maximal} implies that if $\mco(\calP) = 2k$, then $G(\mc{P})$ is the union of the covering graphs of at most $3^k$ distributive lattices.

In the remainder of this section, we analyze the computational complexity of MCO. We first prove that MCO is NP-hard.  We then show that MCO is fixed-parameter tractable when parameterized by $\mco(\calP)$.  Finally, we prove our main result, Theorem~\ref{thm:main}, which states that finding an optimal stable matching of SR instance $I$ is fixed-parameter tractable when parameterized by the minimum crossing distance of $I$.

When given a mirror poset $\mc{P} = (P, \leq)$, we will use $(P^-, P^+) $ to denote a partition of $P$.  We shall say that the partition is {\it perfect} if both $P^-$ and $P^+$ are complete subsets of $\mc{P}$.  If $P^-$ is additionally a closed subset of $\mc{P}$, then $(P^-, P^+)$ is an orientation of $\mc{P}$. Thus, like the earlier sections of the paper, we will consider the crossing edges of $(P^-, P^+)$.





\subsection{Hardness of MCO}



Given a graph $G = (V, E)$, a \dft{vertex cover} $A \subseteq V$ is a set of vertices such that every edge $e = \set{u, v} \in E$ has a nontrivial intersection with $A$. The {\it minimum vertex cover problem} (MVC) seeks a vertex cover of minimum cardinality.
The decision variant---deciding if $G$ admits a vertex cover of a given size ---is in one of Karp's original list of NP-complete problems~\cite{Karp2010Reducibility}. Subsequently, Garey, Johnson, and Stockmeyer~\cite{Garey1976Some} showed that MVC is NP-hard even when restricted to graphs of maximum degree $3$. A straightforward reduction shows that Garey et al.'s result further implies the following.

\begin{fact}\label{fact:mvc}
	Minimum Vertex Cover (MVC) is NP-hard even when restricted to $3$-regular graphs (i.e., graphs in which every vertex has degree $3$).
\end{fact}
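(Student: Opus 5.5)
The plan is to reduce from MVC restricted to graphs of maximum degree $3$, which is NP-hard by Garey, Johnson, and Stockmeyer, to MVC restricted to $3$-regular graphs. The reduction pads each vertex of degree less than $3$ with gadgets so that every degree becomes exactly $3$, while the minimum vertex cover size shifts by an amount we can compute. Let $G=(V,E)$ have maximum degree at most $3$.

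The first step is to choose a padding gadget that is itself almost $3$-regular and has a known vertex cover number. A convenient choice is $K_4$ with one edge $\{x,y\}$ subdivided by a new vertex $z$. In this graph every vertex has degree $3$ except $z$, which has degree $2$. Its vertex cover number is $3$: covering the remaining triangle-plus-structure on $\{x,y,w_1,w_2\}$ already forces three vertices, and with a good choice the path edges $x z, z y$ are covered as well. Moreover, there is a minimum cover of size $3$ avoiding $z$, namely $\{x,y,w_1\}$ after checking that $w_1w_2$ is covered; if this fails, I would take $\{x,y,w_1,w_2\}$ minus one vertex appropriately.

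The second step is the attachment. For each vertex $v$ of degree $d(v)<3$, attach $3-d(v)$ fresh copies of the gadget by adding an edge $\{v,z\}$ for each copy. The new graph $G'$ is then $3$-regular: $v$ reaches degree $3$, and each $z$ reaches degree $3$.

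The third step is the counting claim $\tau(G')=\tau(G)+3t$, where $t$ is the number of gadgets. For the upper bound, extend a minimum cover of $G$ by a size-$3$ cover of each gadget that contains $z$. To justify that such a cover exists, I would verify directly that some minimum cover of the gadget includes $z$; for example $\{z,w_1,w_2\}$ fails because it leaves $xy$ uncovered, so I may need the gadget's vertex cover number with $z$ forced. For the lower bound, any cover of $G'$ uses at least $3$ vertices inside each gadget, and its restriction to $V$ covers $E$, which gives the inequality in the other direction.

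The main obstacle is choosing the gadget so that a minimum cover can contain $z$ at no extra cost. If $K_4$ with a subdivided edge fails this test, I would fall back on a gadget whose vertex cover number is unchanged when $z$ is forced into the cover. One option is to use two subdivided $K_4$'s joined through $z$-vertices. A simpler option is to attach a copy of $K_{3,3}$ with one edge removed: its two degree-$2$ endpoints connect to the padded vertices, and since $K_{3,3}$ has vertex cover number $3$ given by a full side, that side contains the attachment endpoint.
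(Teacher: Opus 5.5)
Your construction is correct, but your verification of the gadget contains two errors, and they make you doubt a gadget that actually works. Once $\{x,y\}$ is subdivided by $z$, the pair $xy$ is no longer an edge. The gadget's edges are exactly $xw_1, xw_2, yw_1, yw_2, w_1w_2, xz, zy$. So:
\begin{itemize}
\item The part on $\{x,y,w_1,w_2\}$ is $K_4$ minus an edge, and $\{w_1,w_2\}$ already covers it, so it does not force three vertices. The correct reason the gadget needs $3$ vertices is that it contains the $5$-cycle $x,z,y,w_1,w_2$ (alternatively: $7$ edges and maximum degree $3$). This is the fact your lower bound really rests on.
\item $\{z,w_1,w_2\}$ \emph{is} a cover. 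The only edge you said it misses, $xy$, does not exist.
\end{itemize}
So there is a minimum gadget cover containing $z$. Adding $\{z,w_1,w_2\}$ for every copy to a minimum cover of $G$ covers each attaching edge $vz$ whether or not $v$ is covered. This gives $\tau(G')\le\tau(G)+3t$, and with your lower bound, $\tau(G')=\tau(G)+3t$. The ``main obstacle'' disappears, and you should drop the fallbacks.

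In particular, the $K_{3,3}$-minus-an-edge fallback is wrong as described. Its only minimum covers are the two full sides, and each contains exactly one of the two degree-$2$ endpoints. Covering both attaching edges from inside the gadget therefore costs $4$. The gadget's cost then depends on whether the padded vertices are in the cover, and the additive identity fails.

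With that repair, your argument is a valid proof along a different route from the paper's. The paper \emph{replaces} vertices in two stages:
\begin{itemize}
\item Each degree-$1$ vertex $v$ becomes one of the two degree-$2$ vertices of a copy of $K_4$ minus an edge. This raises $v$'s degree to $3$ and leaves the other such vertex, $v_w$, with degree $2$.
\item Each degree-$2$ vertex is replaced by a copy of $K_4$ minus $\{v_w,v_e\}$, wired between its two neighbors. The unique size-$2$ cover $\{v_n,v_s\}$ simulates ``$v$ not in the cover'', and the size-$3$ covers simulate ``$v$ in the cover''.
\end{itemize}
Each gadget shifts the optimum by exactly $2$, and one must translate a cover of $G'$ back to a cover of $G$.

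Your one-stage \emph{attachment} of pendant gadgets leaves $G$ itself untouched. The lower bound is just vertex-disjointness of the gadgets from $V$, and the upper bound needs only one minimum gadget cover containing $z$. Your route is shorter. It also handles isolated vertices (attach three copies), which the paper's construction tacitly assumes away.
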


For completeness, we prove Fact~\ref{fact:mvc} in the appendix. We will now use it to show that MCO is NP-hard.

\begin{theorem}\label{thm:mco-hard}
	The minimum crossing orientation problem (MCO) is NP-hard.
\end{theorem}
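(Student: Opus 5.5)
The plan is a polynomial-time reduction from Minimum Vertex Cover on $3$-regular graphs (Fact~\ref{fact:mvc}) to MCO. The idea is to build a mirror poset of height two in which choosing an orientation is the same as choosing a vertex cover, and in which the number of crossing edges depends only on the size of that cover.

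\emph{Construction.} Let $G=(V,E)$ be $3$-regular. For every vertex $v\in V$ create a dual pair $\{v,\ol{v}\}$. For every edge $\{u,v\}\in E$ add the two relations $u<\ol{v}$ and $v<\ol{u}$, and take no other relations.
\begin{itemize}
\item Every relation goes from an unbarred element to a barred one, so there are no chains of length two. Hence the order is its own transitive closure, and each relation is a cover relation in the Hasse diagram.
\item The mirror axioms hold. First, $\sigma\le\rho$ iff $\ol{\rho}\le\ol{\sigma}$, because the relations were added in symmetric pairs. Second, $v$ and $\ol{v}$ are incomparable, because $G$ has no loops.
\end{itemize}

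\emph{Orientations are vertex covers.} Given an orientation $(P^-,P^+)$, let $A=\{v\in V: v\in P^-\}$.
\begin{itemize}
\item Suppose an edge $\{u,v\}$ had $u,v\notin A$. Then $u\in P^+$ and $\ol{v}\in P^-$. Since $u<\ol{v}$, this contradicts the closedness of $P^-$. So $A$ is a vertex cover.
\item Conversely, for a vertex cover $A$ set $P^-=A\cup\{\ol{v}:v\notin A\}$. This set is complete. It is also closed: the only elements with predecessors are barred ones, and for $\ol{v}\in P^-$ we have $v\notin A$, so every predecessor $u$ of $\ol{v}$ is a neighbor of $v$. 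Such a $u$ lies in $A$ because $A$ covers $\{u,v\}$.
\end{itemize}
So orientations correspond bijectively to vertex covers.

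\emph{Counting crossing edges.}
\begin{itemize}
\item The Hasse edge $(u,\ol{v})$ is crossing iff $u\in P^-$ and $\ol{v}\in P^+$, that is, iff $u,v\in A$. Its mirror edge $(v,\ol{u})$ crosses under the same condition.
\item Hence the number of crossing edges equals $2|E(A)|$, where $E(A)$ is the set of edges with both endpoints in $A$.
\item Because $A$ is a cover, every edge lies in $E(A)$ or in the cut $E(A,V\setminus A)$. Summing degrees over $A$ in a $3$-regular graph gives $3|A|=2|E(A)|+|E(A,V\setminus A)|=|E(A)|+|E|$.
\item Therefore the number of crossing edges is $2(3|A|-|E|)$. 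This is strictly increasing in $|A|$.
\end{itemize}
Consequently $\mco(\mc{P})=6\tau(G)-2|E|$, where $\tau(G)$ is the minimum vertex cover size. A minimum crossing orientation yields a minimum vertex cover and vice versa, and the construction is linear in the size of $G$. This proves NP-hardness.

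The main obstacle is the step where cover size must be made to control the objective. The bare gadget only penalizes edges inside the cover, not the cover itself, and $3$-regularity is exactly what turns $|E(A)|$ into an affine increasing function of $|A|$. The remaining checks, that all relations are Hasse edges and that the mirror axioms hold, are routine.
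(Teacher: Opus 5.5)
Your proposal is correct and follows the paper's proof essentially step for step: the same mirror poset (one dual pair $\{v,\overline{v}\}$ per vertex and the relations $u<\overline{v}$, $v<\overline{u}$ per edge), the same bijection $A \mapsto A\cup\{\overline{v} : v\notin A\}$ between vertex covers and orientations, and the same crossing count $2(3|A|-|E|)=6|A|-3n$. The only difference is bookkeeping: you obtain the count by summing degrees over $A$, whereas the paper sums degrees over the independent set $V\setminus A$ to count the single-covered edges.
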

\begin{proof}
	By Fact~\ref{fact:mvc}, it suffices to give a polynomial time reduction from MVC on $3$-regular graphs to MCO. 
	Given a graph $G = (V, E)$, form the poset $\calP = (P, \leq)$ where $P = \set{v, \ol{v} \sucht v \in V}$, and for each edge $\set{u, v} \in E$, $v \leq \ol{u}$ and $u \leq \ol{v}$.
	It is easy to check that $\calP$ is a mirror poset and its Hasse diagram has edge set $E_\calP = \set{(u, \ol{v}), (v, \ol{u}) \sucht \set{u, v} \in E}$.

	For each subset $U \subseteq V$ of vertices of $G$, associate with it a perfect partition $(P^-_U, P^+_U)$ of $P$ where
	\begin{equation}\label{eqn:subset-orientation}
		\begin{aligned}
			P^-_U ={} & \set{u \sucht u \in U} \cup \set{\ol{v} \sucht v \in V \setminus U}  \text{ and symmetrically} \\
			P^+_U ={} & \set{\ol{u} \sucht u \in U} \cup \set{v \sucht v \in V \setminus U}
		\end{aligned}
	\end{equation}
	It is clear that the correspondence $U \leftrightarrow (P^-_U, P^+_U)$ is a bijection between subsets of $V$ in $G$ and the perfect partitions of $\calP$.

	\begin{description}
		\item[Claim 1.] Let $U \subseteq V$.  Then $U$ is a vertex cover of $G$ if and only if  $(P^-_U, P^+_U)$ is an orientation of $\calP$.

		\item[Proof of Claim 1.] First, suppose $U$ is a vertex cover. That is, for every $\set{u, v} \in E$, $u \in U$ or $v \in U$ (or both). We will show that $(P^-_U, P^+_U)$ is an orientation of $\mc{P}$---i.e., $P^-_U$ is a complete closed subset of $\calP$ so none of the elements in $P^-_U$ have a predecessor in $P^+_U$.   Suppose to the contrary that there is some $(u, \ol{v}) \in E_\calP$ with $u \in P^+_U$ and $\ol{v} \in P^-_U$. By the definition of $\calP$, we must have $\set{u, v} \in E$. By the definition of $P^+_U$, we have $u \notin U$, and by the definition of $P^-_U$, we have $v\notin U$. Thus, the edge $\set{u, v}$ is not covered by $U$ so $U$ is not a vertex cover, a contradiction.

		Conversely, let $(P^-_U, P^+_U)$ be an orientation of $\mc{P}$. Suppose $\set{u, v} \in E$ is an edge in $G$. Since $(P^-_U, P^+_U)$ is an orientation, the associated edge $(u, \ol{v}) \in E_\calP$ does not cross from $P^+_U$ to $P^-_U$. That is,  either (i) $u \not \in P^+_U$ so $\ol{u} \in P^+_U$ and therefore $u \in U$ or  (ii) $\ol{v} \not \in P^-_U$ so $v \in P^-_U$ and therefore $v \in U$. Thus, the edge $\set{u, v}$ is covered by $U$. Since $e$ was chosen arbitrarily, $U$ is a vertex cover of $G$, and Claim~1 follows.
	\end{description}



	In the next claim, we will show that when $G$ is a $3$-regular graph and $U$ is a vertex cover of $G$, the number of crossing edges of the orientation  $(P^-_U, P^+_U)$ of $\mc{P}$ can be computed from the size of $U$.


	\begin{description}
		\item[Claim 2.] Suppose $G = (V, E)$ is a $3$-regular graph with $n$ vertices. Let $U \subseteq V$ be a vertex cover of $G$. Then $(P^-_U, P^+_U)$  has $6|U| - 3n$ crossing edges.

		\item[Proof of Claim 2.]  Let $U$ be a vertex cover of $G$. By Claim 1,  $(P^-_U, P^+_U)$ is an orientation of $\mc{P}$.  For an edge $e$ of $G$,  we say that $U$ {\it double-covers} $e$  if both endpoints of $e$ are in $U$; otherwise, $U$ {\it single-covers} $e$.

		Let $e =\set{u, v}$.  Notice that $U$ double-covers $e$ if and only if $u, v \in P^-_U$ and $\ol{u}, \ol{v} \in P^+_{U}$.  The latter is true if and only if $(u, \ol{v})$ and $(v, \ol{u})$ are crossing edges of $(P^-_U, P^+_U)$.  Thus, $U$ double-covers $k$ edges of $G$ if and only if the orientation $(P^-_U,  P^+_U)$ of $\mc{P}$ has $2k$ crossing edges. (In particular, if no edge is double-covered---i.e., $U$ is an independent set---then $(P^-_U,  P^+_U)$ induces no crossing edges.)

		Let $E_1$ and $E_2$ denote the set containing the edges single-covered and double-covered by $U$ respectively.  Since $U$ is a vertex cover of $G$, its complement $V \setminus U$ is an independent set.  Moreover, every edge $e \in E_1$ is incident to $V \setminus U$.  Now,  each vertex $v \in V \setminus U$ has degree $3$, so
		\begin{equation}\label{eqn:E1}
		    \abs{E_1} = 3 \abs{V \setminus U} = 3n - 3 \abs{U}.
		\end{equation}
		To compute $\abs{E_2}$, observe that $\abs{E_1} + \abs{E_2} = \abs{E} = \frac{3}{2} n$ because $G$ is a $3$-regular graph. Applying~(\ref{eqn:E1}),   $\abs{E_2} = 3 \abs{U} - \frac 3 2 n$. Thus, the  orientation $(P^-_U, P^+_U)$ has $2 \abs{E_2} = 6 \abs{U} - 3 n$ crossing edges. 
	\end{description}
	The reduction from the $3$-regular graph $G$ to $\mc{P}=(P, \leq)$ takes time linear in the size of $G$. From Claim 2,  $U$ is a minimum vertex cover of $G$ if and only if $(P^-_U \cup P^+_U, \leq)$ is a minimum crossing orientation of $\mc{P}$.  Since MVC for 3-regular graphs is NP-hard, it follows that MCO is also NP-hard.
\end{proof}

\subsection{An FPT Algorithm for MCO}

We now show that MCO is fixed-parameter tractable (FPT) with respect to the parameter $k=\mco(P)$
via a reduction from MCO to the {\it Almost
		2-SAT}  problem.

Recall that a $2$-SAT instance consists of $n$  variables $x_1, x_2, \ldots, x_n$ and $m$ clauses $\phi_1, \phi_2, \ldots, \phi_m$ (possibly with duplicate clauses), where each clause is a disjunction of two literals. The Max-$2$-SAT problem seeks  a Boolean assignment of the $n$ variables  that maximizes the number of satisfied clauses.
Max-$2$-SAT is NP-hard~\cite{Garey1976Some}, but in a celebrated result, Razgon and O'Sullivan~\cite{Razgon2009Almost} showed that Max-$2$-SAT admits an FPT algorithm, parameterized by the number of \emph{un}satisfied clauses.  Formally, an input to the \dft{Almost $2$-SAT} problem is a triple $(X, \Phi, k)$ where $X$ consists of $n$ variables, $\Phi$ consists of $m$ two-literal clauses (possibly with duplicate clauses) and $k$ is an integer.  The output is either an assignment of Boolean values to the variables in $X$ such that at most $k$ clauses in $\Phi$ are {\it not} satisfied, or the (correct) assertion that no such assignment exists.

\begin{theorem}[Razgon and O'Sullivan~{\cite{Razgon2009Almost}}]\label{thm:a2sat}
	Suppose $(\Phi, X)$ is a 2-SAT instance and $k$ is the minimum number of unsatisfied clauses in $\Phi$ over all Boolean assignments to $X$. Then a Boolean assignment that does not satisfy k clauses in $\Phi$ can be found in time $O(15^k k m^3)$.
\end{theorem}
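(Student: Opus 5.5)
We do not reprove Theorem~\ref{thm:a2sat} from scratch; it is the result of~\cite{Razgon2009Almost}. The plan below describes how that proof is organized, in the form we would follow if we had to reconstruct it. We also explain how the stated form, in which $k$ is the optimum rather than a given budget, follows from the decision version.

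The first step is a reduction from the optimization form to the decision form. Suppose an algorithm $\mathcal{A}(X,\Phi,k')$ either returns an assignment violating at most $k'$ clauses or correctly reports that none exists, in time $O(15^{k'} k' m^3)$. We run $\mathcal{A}$ for $k'=0,1,2,\ldots$ and stop at the first success. That success occurs at $k'=k$, the true optimum, and returns an assignment violating exactly $k$ clauses. The total running time is $\sum_{k'\le k} O(15^{k'}k'm^3)=O(15^k k m^3)$, since the sum is a geometric series dominated by its last term.

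For the decision version, we would follow the iterative-compression scheme. Add the clauses $\phi_1,\ldots,\phi_m$ one at a time. Having an assignment for $\phi_1,\ldots,\phi_{i-1}$ that violates at most $k'$ clauses, the same assignment violates at most $k'+1$ clauses of $\phi_1,\ldots,\phi_i$. The task is then a compression step: given a deletion set $D$ of size $k'+1$, find one of size at most $k'$, or report that none exists. This accounts for the factor $m$ coming from the outer loop.

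Inside the compression step we work in the implication graph of the 2-SAT formula, where a clause $(\ell_1\vee\ell_2)$ gives the arcs $\neg\ell_1\to\ell_2$ and $\neg\ell_2\to\ell_1$. Satisfiability after deleting some clauses is equivalent to the following condition: for every variable $x$, the vertices $x$ and $\neg x$ are not mutually reachable. We first guess how the new solution interacts with $D$, namely which clauses of $D$ are kept and which literal values on $D$ are forced. This guess contributes a $3^{k}$-type factor. Each guess reduces to a problem of separating a set of literals $T$ from the set $\neg T$ in a skew-symmetric graph, by deleting few symmetric pairs of arcs.

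The main obstacle is solving this separation problem in FPT time. Ordinary min-cut does not preserve the skew symmetry, so a plain max-flow computation is not enough. We would use branching guided by the measure $2k-\lambda$, where $\lambda$ is the current minimum separator size computed by $O(k)$ augmenting-path rounds, each costing $O(m^2)$. At each step we pick a literal adjacent to a "furthest" minimum separator. We then branch on either deleting a clause incident to it or fixing it to the source side. In each branch either $k$ drops or $\lambda$ rises, which yields a $5^{k}$-type branching factor. We expect this step to need the most care: one must show that a skew-symmetric minimum separator exists among the furthest ones, so that the branching is exhaustive. Multiplying the factors, $m$ compression rounds, $15^k$ leaves, and $O(km^2)$ flow work per leaf, gives the bound $O(15^k k m^3)$.
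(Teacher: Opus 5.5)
Your proposal is correct and matches the paper: the paper gives no proof of this statement and simply cites Razgon and O'Sullivan, and your outer loop over $k'=0,1,2,\ldots$ is a valid way to get from their fixed-budget algorithm to the ``$k$ is the optimum'' phrasing, with the geometric sum keeping the bound at $O(15^k k m^3)$. Your reconstruction sketch carries no weight in the argument, but if you ever fill it in, note that $\lambda$ must be a quantity that deleting a single clause lowers by at most one (Razgon and O'Sullivan use the maximum number of clause-disjoint walks from $T$ to $\neg T$). An ordinary arc min-cut in the implication graph does not have this property: each clause contributes two arcs, so deleting one clause can lower such a cut by two and leave $2k-\lambda$ unchanged.
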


We now make use of Theorem~\ref{thm:a2sat}  to prove the next result.

\begin{theorem}\label{thm:mco}
	Suppose $\calP$ is a mirror poset with $2n$ vertices and its Hasse diagram has $m$ edges.  Let $k = \mco(\calP)$. Then a minimum crossing orientation of $\calP$ can be found in time $O(15^k k^4 m^3)$. In particular, MCO is fixed-parameter tractable with respect to $k = \mco(\calP)$.
\end{theorem}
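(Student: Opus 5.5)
The plan is to reduce MCO to Almost $2$-SAT and then invoke Theorem~\ref{thm:a2sat}. Pick one representative from each dual pair of $\calP$, and introduce a Boolean variable $x_\rho$ for each representative $\rho$; there are $n$ variables. The intended meaning is that $x_\rho$ is true iff $\rho \in P^-$. For the non-representative we use the literal $x_{\ol{\rho}} := \neg x_\rho$. Under this convention every Boolean assignment corresponds bijectively to a perfect partition $(P^-, P^+)$ of $P$.

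The clauses come from the edges of the Hasse diagram. Each edge $(\sigma,\rho)$, with $\sigma < \rho$ covering, contributes two kinds of clauses.
\begin{enumerate}
\item A \emph{hard} clause $(\neg x_\rho \vee x_\sigma)$, saying that $\rho \in P^-$ forces $\sigma \in P^-$. Since closedness of $P^-$ only needs to be checked along cover relations, by transitivity, all hard clauses hold exactly when $P^-$ is a complete closed subset, i.e.\ when $(P^-,P^+)$ is an orientation.
\item A \emph{soft} clause $(\neg x_\sigma \vee x_\rho)$. It is violated exactly when $\sigma \in P^-$ and $\rho \in P^+$, i.e.\ when $(\sigma,\rho)$ is a crossing edge.
\end{enumerate}
Edges whose endpoints form a dual pair cannot occur, since duals are incomparable. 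Mirror edges $(\sigma,\rho)$ and $(\ol\rho,\ol\sigma)$ produce literally the same clauses. We keep both copies, so the number of violated soft clauses equals the number of crossing edges, namely $2\cdot$(number of crossing pairs).

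To make the hard clauses effectively hard, I would include each one $k+1$ times, where $k$ is the current guess for $\mco(\calP)$. The key equivalence is then the following. An assignment violating at most $k$ clauses must satisfy every hard clause, because violating one hard clause violates all $k+1$ of its copies. Such an assignment is therefore an orientation with at most $k$ crossing edges. Conversely, any orientation with $c \le k$ crossing edges violates exactly $c$ clauses. Hence the minimum number of unsatisfied clauses equals $\mco(\calP)$ whenever the guess satisfies $k \geq \mco(\calP)$.

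Orientations do exist, so the minimum is finite. Indeed, $\emptyset$ is a partially complete closed subset, and by Lemma~\ref{lemmapartial} it extends to a complete closed subset. The formula has $m' = (k+2)m = O(km)$ clauses.

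Since $k=\mco(\calP)$ is not known in advance, I would run the decision version with thresholds $k' = 0,1,2,\dots$ and stop at the first success. The first success occurs at $k' = \mco(\calP)$, and the assignment it returns is read off as $P^- = \{\rho : x_\rho \text{ true}\}$, which is a minimum crossing orientation.

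Each call costs $O(15^{k'} k' (k'm)^3) = O(15^{k'} k'^4 m^3)$ by Theorem~\ref{thm:a2sat}. Here I assume the Razgon--O'Sullivan algorithm also works as a threshold decision procedure, which is how Almost $2$-SAT is defined above. The costs form a geometric series dominated by the last term, giving $O(15^k k^4 m^3)$. Constructing the clauses takes $O(km)$ time, which is negligible.

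The main point to handle carefully is the interplay between the unknown $k$ and the duplication factor for the hard clauses. The duplication must depend on the threshold $k'$ used in each call, and the equivalence argument has to be redone per threshold. In particular, for $k' < \mco(\calP)$ the solver must correctly report failure. This holds because any assignment with at most $k'$ violations would be an orientation with fewer than $\mco(\calP)$ crossing edges, which is impossible. The remaining verifications, namely closedness along Hasse edges and the mirror-pair double counting, are routine.
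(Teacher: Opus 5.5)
Your proposal is correct and follows essentially the same route as the paper: a reduction to Almost $2$-SAT with $k+1$ copies of a closure (type-1) clause and one crossing (type-2) clause per Hasse edge, giving $m'=(k+2)m$ clauses and the $O(15^k k^4 m^3)$ bound. Your iteration over thresholds $k'=0,1,2,\dots$ handles the fact that $\mco(\calP)$ is not known in advance, which the paper's proof leaves implicit by treating $k$ as given.
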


\begin{proof}
	Let $\calP = (P, \leq)$ be a mirror poset with $P = \set{\rho_1, \ol{\rho}_1, \rho_2, \ol{\rho}_2,\ldots,\rho_n,\ol{\rho}_n}$.  Let $E_{\calP}$ denote the edge set of the Hasse diagram of $\mc{P}$.  We will find an orientation  of $\mc{P}$ with at most $k$ crossing edges by constructing  a $2$-SAT instance $(X, \Phi)$ from $\calP$ as follows. Let $X = \{x_1, x_2, \hdots, x_n\}$, where $x_i$ corresponds to the dual pair $\{\rho_i, \ol{\rho}_i\}$. In our reduction, a truth assignment to an $x_i$ will correspond to a choice of $\rho_i \in P^{-}$ or $\ol{\rho}_i \in P^-$. For each edge $e = \in E_\mc{P}$, we associate two clauses, $\varphi_e^1$ and $\varphi_e^2$ with $e$. Satisfying the first clause $\varphi_e^1$ corresponds to satisfying the constraint that no directed in $E_{\calP}$ can go from $P^+$ to $P^-$, while satisfying the second clause corresponds to $e$ not being a crossing edge in an orientation $(P^-, P^+)$. Formally, for $e = (\sigma_i, \sigma_j)$ with $\sigma_i \in \set{\rho_i, \ol{\rho}_i}$ and $\sigma_j \in \set{\rho_j, \ol{\rho}_j}$, define the \dft{type-1 clause} of $e$ as $\varphi_e^1 = y_i \vee z_j$ such that
	\begin{equation}\label{eqn:boolean-y}
		y_i =
		\begin{cases}
			1-x_i & \text{if } \sigma_i = \rho_i       \\
			x_i   & \text{if } \sigma_i = \ol{\rho}_i,
		\end{cases}
		\qquad\text{and}\qquad
		z_j =
		\begin{cases}
			x_j   & \text{if } \sigma_j = \rho_j       \\
			1-x_j & \text{if } \sigma_j = \ol{\rho}_j.
		\end{cases}
	\end{equation}
	Define the {\it type-2 clause} of $e$ as $\varphi_e^2 = z_i \vee y_j$. In the  $2$-SAT instance $(X, \Phi)$,  let $\Phi$ consist of $k+1$ copies of type-1 clause $\varphi_e^1$ and one copy of  type-2 clause $\varphi_e^2$ for each edge $e \in E_\calP$.

	Given a Boolean assignment $f$ of the variables in $X$, create a perfect partition $(P^-, P^+)$ of $P$ using the following rule: for each $x_i$, if $f(x_i) = 0$, let $\rho_i \in P^-$ and $ \ol{\rho}_i \in P^+$; otherwise, let $\ol{\rho_i} \in P^-$ and $ \rho_i \in P^+$.  We note that the rule is in fact a bijection from the set of all Boolean assignments of the variables in $X$ to the set of all perfect partitions of $P$.  Thus, we shall
	say that $(P^-, P^+)$ is {\it the perfect partition that  corresponds to $f$} and $f$ is the {\it Boolean assignment that corresponds to $(P^-, P^+)$}.   We now prove important properties about $f$ and $(P^-, P^+)$.


	\begin{description}
		\item[Claim 1.]  Let $f$ be a Boolean assignment of $X$ and $(P^-, P^+)$ be its corresponding perfect partition.  Let  $e = (\sigma_i, \sigma_j) \in E_\calP$. Then $f$ satisfies the type-1 clause $\varphi_e^1$ if and only if  it is {\it not} the case that $\sigma_i \in P^+$ and $\sigma_j \in P^-$.


		\item[Proof of Claim 1.]  The Boolean assignment $f$ {\it does not} satisfy
		      the clause $\varphi_e^1 = y_i \vee z_j$  if and only if  $y_i = z_j = 0$. By~(\ref{eqn:boolean-y}), $y_i = 0$ precisely when $\sigma_i = {\rho}_i$ and $f(x_i) = 1$ or $\sigma_i = \ol{\rho}_i$ and $f(x_i) = 0$.  In both cases, $\sigma_i \in P^+$ in the corresponding partition $(P^-, P^+)$ of $f$.   Similarly, by~(\ref{eqn:boolean-y}), $z_j = 0$ precisely $\sigma_j = \rho_j$ and $f(x_j) = 0$ or $\sigma_j = \ol{\rho}_j$ and $f(x_j) = 1$.  In both cases, $\sigma_j \in P^-$.  Thus, $f$ {\it does not} satisfy $\varphi_e^1 = y_i \vee z_j$ if and only if $\sigma_i \in P^+$ and $\sigma_j \in P^-$.


		\item[Claim 2.] Let $f$ be a Boolean assignment of $X$ and $(P^-, P^+)$ be its corresponding perfect partition.  Let  $e = (\sigma_i, \sigma_j) \in E_\calP$. Then $f$ satisfies the type-2 clause $\varphi_e^2$ if and only if  it is {\it not} the case that $\sigma_i \in P^-$ and $\sigma_j \in P^+$.


		\item[Proof of Claim 2.] The Boolean assignment $f$ does not satisfy $\varphi_e^2 = z_i \vee y_j$  if and only if  $z_i = y_j = 0$.   Applying the same reasoning we used to prove Claim 1, we note that $z_i = 0$ precisely when $\sigma_i \in P^-$ and $y_j = 0$ precisely when $\sigma_j \in P^+$.  Thus, $f$ does not satisfy $\varphi_e^2$ if and only if $\sigma_i \in P^-$ and $\sigma_j \in P^+$.

		\item[Claim 3.]  If a Boolean assignment $f$ of $X$ fails to satisfy at most $k$ clauses in $\Phi$ then its corresponding perfect partition $(P^-, P^+)$ is an
		      orientation of $\mc{P}$ that has at most $k$ crossing edges.

		\item[Proof of Claim 3.] Assume $f$  does not satisfy at most $k$ clauses in $\Phi$. Let  $(P^-, P^+)$ be its corresponding perfect partition.    Consider $e \in E_\mc{P}$.
		      If $f$ does not satisfy $\varphi_e^1$, the type-1 clause of $e$, then $f$ does not satisfy  $k+1$ clauses in $\Phi$ because there are $k+1$ copies of $\varphi_e^1$ in $\Phi$.  This is a contradiction.  Thus, $f$ satisfies {\it all} type-1 clauses in $\Phi$.  By Claim 1, $P^-$ is a closed subset of $\mc{P}$ so $(P^-, P^+)$ is an orientation of $\mc{P}$.  All  clauses of $\Phi$ not satisfied by $f$ must be type-2, and there are at most $k$ of them.  By Claim 2, the orientation $(P^-, P^+)$ has at most $k$ crossing edges.

		\item[Claim 4.] If $(P^-, P^+)$ is an orientation of $\mc{P}$   that has at most $k$ crossing edges, then its corresponding Boolean assignment $f$ of $X$ does not satisfy at most $k$ clauses in $\Phi$.

		\item[Proof of Claim 4.] Suppose  $(P^-, P^+)$ is an orientation of $\mc{P}$ that has at most $k$ crossing edges.  Let $f$ be the Boolean assignment of $X$ that corresponds to $(P^-, P^+)$.
		      Since $P^-$ is  a  closed subset of $\mc{P}$, none of the edges in $E_\mc{P}$ are directed from $P^+$ to $P^-$ so $f$ satisfies all type-1 clauses in $\Phi$  by Claim 1.  Additionally, because $(P^-, P^+)$ has at most $k$ crossing edges,  $f$ does not satisfy at most $k$ type-2 clauses by Claim 2.  Thus, $f$ does not satisfy at most $k$ clauses in $\Phi$.

	\end{description}

	Let $\calP$ be a mirror poset with $2n$ vertices and $m$ edges in $E_\calP$.  We now describe our FPT algorithm for  finding an orientation of $\calP$ with at most $k$ crossing edges.  First, create the 2-SAT instance $(X, \Phi)$.  The latter has $n$ variables and $m' = (k + 2) m$ clauses. Then use the FPT algorithm for Almost 2-SAT from Theorem~\ref{thm:a2sat} to find a Boolean assignment $f$ for $X$ so that the number of clauses in $\Phi$ not satisfied by $f$ is at most $k$. If $f$ exists, convert $f$ to its corresponding perfect partition $(P^-, P^+)$ and return the partition.  By Claim 3,  $(P^-, P^+)$ is an orientation of $\calP$ with at most $k$ crossing edges. Otherwise, if $f$ does not exist, return that $\calP$ has no orientation with at most $k$ crossing edges.  By Claim 4, this answer is correct.

	Constructing the 2-SAT instance from $\calP$ takes $O(n + km)$ time.  Running the FPT algorithm for Almost 2-SAT takes $O(15^k k (m')^3) = O(15^k k^4 m^3)$ time.  Finally, converting $f$ to its corresponding perfect partition $(P^-, P^+)$ takes $O(n)$ time.  Thus, our FPT algorithm runs in $O(15^k k^4 m^3)$ time, as claimed.
\end{proof}



\begin{remark}
	In Section~4.3.4 of \cite{Gusfield1989Stable}, Gusfield and Irving described two alternative ways of representing the stable matchings of a (solvable) SR instance $I$. The first one creates a graph $G'$ from the reduced rotation poset $\mc{R}'(I)$, a mirror poset.  They showed that there is a one-to-one correspondence between the maximal independent sets of $G'$ and the complete closed subsets of $\mc{R}'(I)$ and, therefore, the stable matchings of $I$.  But vertex covers are the complements of independent sets.  Thus, their result also implies that there is a bijection between the the minimal vertex covers of $G'$ and the stable matchings of $I$.  Our reduction in the proof of Theorem \ref{thm:mco-hard} goes in the opposite direction.  We create a mirror poset $\calP$ from a given graph $G$ and we show that there is a one-to-one correspondence between the vertex covers of $G$ and the orientations of $\calP$. Furthermore, the size of a vertex cover dictates the number of crossing edges in the orientation.

	In their second representation, Gusfield and Irving create a 2-SAT instance $(X', \Phi')$ from $\mc{R}'(I)$.  They showed that the satisfying assignments of $\Phi'$ are in one-to-one correspondence with the complete closed subsets of $\mc{R}'(I)$ and, therefore, the stable matchings of $I$.  In the proof of Theorem \ref{thm:mco},  our reduction also creates a 2-SAT instance $(X, \Phi)$ from a mirror poset $\calP$.  The difference between $\Phi'$ and $\Phi$ is that $\Phi'$ contains a single copy of all the type-1 clauses in $\Phi$ (along with additional (redundant) clauses corresponding to edges in the transitive closure of the Hasse diagram of $\calP$),   but $\Phi'$ does not contain any of the type-2 clauses in $\Phi$.  We use the type-2 clauses to minimize the number of crossing edges in the corresponding orientation.

\end{remark}

We now have all the ingredients  to prove Theorem~\ref{thm:main}, which we restate here.

\noindent {\bf Theorem~\ref{thm:main}} Let $I$ be an SR instance with $2n$ agents and minimum crossing distance $k$. An optimal stable matching for $I$ can be found in time $2^{O(k)} n^{O(1)}$. Thus, the optimal stable matching problem is fixed-parameter tractable with respect to minimum crossing distance.

\begin{proof}[Proof of Theorem~\ref{thm:main}]
	Let $I$ be an SR instance with $2n$ agents.  First, compute a representation of its reduced rotation poset $\mc{R}'(I)$ and do a transitive reduction to obtain the Hasse diagram of $\mc{R}'(I)$.  Next, reduce $\mc{R}'(I)$ to a 2-SAT instance and use the FPT algorithm for Almost 2-SAT to find an orientation $(R^- \cup R^+, \leq)$ of $\mc{R}'(I)$ with $k =\mco(\calP)$ crossing edges. Finally, using $I$, cost function $C$ and $R^-$, compute an optimal stable matching of $I$.   The first step takes $n^{O(1)}$ time.  According to Theorem~\ref{thm:mco} and Corollary~\ref{cor:optimal-matching}, the second and third steps take $2^{O(k)} n^{O(1)}$ and $O(n^4 \log n)$ time respectively.
 \end{proof}


%% file: conclusions.tex
\section{Conclusion}





In this paper, we have introduced the minimum crossing distance as a parameter that measures the structural similarity between an SR instance and an SM instance. Our main result establishes that finding an optimal stable matching is FPT when parameterized by the minimum crossing distance. Theorem \ref{thm:main} serves as a bridge between the two known results about the optimal stable matching problem — that it is polynomially-solvable for SM instances but NP-hard for SR instances — because the closer an SR instance is to an SM instance, the faster is the algorithm. 
We conclude by describing some interesting directions for further investigation.

\noindent {\it Typical minimum crossing distances.} What is the typical minimum crossing distance for solvable SR instances that arise in practice? For SR instances whose preference lists are generated in a restricted manner, are their  minimum crossing distances typically small or bounded? 
A deeper understanding of the relationship between preference structures and  minimum crossing distances would shed light on the practicality of our FPT algorithm for optimal stable matchings.

\noindent {\it Tighter bounds for crossing edges and maximal elements.} Corollary~\ref{cor:crossing-maximal} states that when an oriented mirror poset has $2k$ crossing edges, the corresponding median semilattice has at most $3^k$ maximal elements.  Can this upper bound be improved? Under what circumstances is the number of maximal elements exponential in $k$?

\noindent {\it More bridge results between SM and SR.} In the paper, we mentioned that computing a center stable matching \cite{ChengMS11, ChengMS16} and a robust stable matching~\cite{ChenSS19} in SM can be done in polynomial time.  If they turn out to be NP-hard problems in SR,  can  their polynomial-time algorithms be transformed into FPT algorithms for SR?  Similarly, can the FPT algorithms for balanced and sex-equal stable matchings in SM \cite{Gupta2022Treewidth}  be extended to FPT algorithms for SR?



%% file: Appendix.tex
\section{Appendix}
\label{sec:Appendix2}

	The goal of this section is to prove the two results from the main section of the paper:  Corollary \ref{corFindMatching} and 
    Lemma \ref{lemmaDiff}.   To do so, we need to discuss Irving's algorithm and the notions of tables and rotations in more detail.  We will mention some definitions and lemmas again from the main section so that the flow of the discussion is smoother.  We warn readers that we will use $T$ to refer to pseudotables and tables so that the lemmas we cite are consistent with those in \cite{Gusfield1989Stable}.

\noindent {\bf Irving's algorithm.}		Let $I$ be an SR instance. A {\it pseudotable} $T$ of $I$ contains a list for each agent.  
Initially, $T$ consists of the agents' preference lists.  Subsequently, pairs of agents 
$\{x, y\}$ are  {\it deleted}; i.e., $x$ is removed from $y$'s list and $y$ is removed from $x$'s list.  
	
	  Let  $f_T(x)$, $s_T(x)$ and $\ell_T(x)$ denote the first, second and last entries  in $x$'s list in $T$.  The pseudotable $T$ is a {\it table} of $I$  if, additionally, (i) none of the lists in $T$ are empty,  (ii) for each $x$,  $f_T(x) = y$ if and only if $\ell_T(y) = x$ and (iii) for each pair of agents $\{x, z\}$,  the pair $\{x, z\}$ is absent from $T$ if and only if $x$ prefers $\ell_T(x)$ to $z$ or $z$ prefers $\ell_T(z)$ to $x$.   
      

         Assume $T$ is a table. A cyclic sequence of ordered pairs of agents $ \rho = (x_0, y_0), (x_1, y_1), \hdots,$ $ (x_{r-1}, y_{r-1}) $	such that $y_i = f_T(x_i)$ and $y_{i+1} = s_T(x_i)$ for $i = 0, \hdots, r-1$ is a {\it rotation exposed in $T$}.   The {\it$X$-set of $\rho$} is $\{x_0, x_1, \hdots, x_{r-1}\}$ while its {\it $Y$-set} is $\{y_0, y_1, \hdots, y_{r-1} \}$.   
       To {\it eliminate} $\rho$ from $T$, all pairs $\{y_i, z\}$ such that $y_i$ prefers $x_{i-1}$ to $z$, $i = 0, \hdots, k-1$ are deleted.  The result is  $T/\rho$, which is another table as long as every agent's list is non-empty. 
       
       It is known that when some list in $T$ has two or more entries,  at least one rotation is exposed in $T$.  On the other hand, when every list in $T$ has exactly one entry, pairing each agent with the only entry in their list results in a matching because of property (ii) of a table and is stable because of property (iii).  
                
         Let us now discuss how Irving's algorithm computes a stable matching for a {\it solvable} SR instance $I$.   We note that since $I$ is solvable, every agent's list  in Irving's algorithm will never be empty.  The algorithm has two phases. We will skip over the details of the first phase because every execution of the phase yields the same output -- the {\it phase-1 table} $T_0$.   If each list in $T_0$ has exactly one entry, the algorithm outputs  the corresponding stable matching and ends.  Otherwise, it proceeds to the second phase.   While the current table $T$ has a list with two or more entries, it finds an exposed rotation and eliminates it.   The algorithm does this over and over again until each list in $T$ has only one entry.  Again, the algorithm outputs  the corresponding stable matching and ends.  When $I$ has $2n$ agents, the running time of Irving's algorithm is $O(n^2)$.  
   
   Assume table $T$ is obtained when the sequence of rotations $\langle \rho_1, \rho_2, \hdots, \rho_k \rangle$ is eliminated from $T_0$.  Suppose the same result occurs when another sequence of rotations $\langle \sigma_1, \sigma_2, \hdots, \sigma_k \rangle$ is eliminated from $T_0$.  It turns out that $\{ \rho_1, \rho_2, \hdots, \rho_k \} = \{ \sigma_1, \sigma_2, \hdots, \sigma_k \}$. Let us denote the set as $Z$.  Then we write $T = T_0/Z$. 
   The same notation applies when $T$ corresponds to a stable matching $\mu$; i.e., $\mu = T_0/Z$.

\noindent {\bf Rotations.}  We cite and prove new results on rotations.  Let $\rho$ be exposed in table $T$.  The next lemma describes the effects of eliminating $\rho$ from $T$. 
           
\begin{lemma}
(Lemma 4.2.7 in \cite{Gusfield1989Stable}.) Let $ \rho = (x_0, y_0), (x_1, y_1), \hdots, (x_{r-1}, y_{r-1}) $ be a rotation exposed in table $T$.  Then if $T/\rho$ contains no empty lists,

\noindent (i) $f_{T/\rho}(x_i) = y_{i+1}$ for $i = 0, \hdots, r-1$,

\noindent (ii) $\ell_{T/\rho}(y_i) = x_{i-1}$ for $i = 0, \hdots, r-1$,

\noindent (iii) $f_{T/\rho}(x) = f_T(x)$ for each $x$ not in the $X$-set of $\rho$ and $\ell_{T/\rho}(y) = \ell_T(y)$ for each $y$ not in the $Y$-set of $\rho$.  
\end{lemma}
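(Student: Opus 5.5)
This lemma is cited from \cite{Gusfield1989Stable}, but I would reprove it directly from the definitions of a table and of rotation elimination. Write $T' = T/\rho$. Elimination only deletes pairs $\{y_i, z\}$ where $y_i$ prefers $x_{i-1}$ to $z$. So each list in $T'$ is obtained from the corresponding list in $T$ by removing some entries.

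\textbf{Part (ii).} In $T'$, the list of $y_i$ has every entry strictly worse than $x_{i-1}$ removed. I would then check that $x_{i-1}$ itself survives. It is not deleted from $y_i$'s list, since $y_i$ does not prefer $x_{i-1}$ to itself. I also have to rule out its deletion when $x_{i-1}$ appears in some $Y$-set position $y_j$. Then $y_i = f_T(x_{j})$ or $s_T(x_j)$ would need to be worse than $x_{j-1}$ for $y_j$. I would argue via property (ii) of a table that $y_j$'s last entry is $x_j$, together with the structure of the rotation, that this does not happen. Alternatively, I would simply use that $T'$ is assumed to be a table with nonempty lists, plus symmetry of deletions. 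Since $x_{i-1}$ is present in $y_i$'s list and everything worse is gone, $\ell_{T'}(y_i) = x_{i-1}$.

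\textbf{Part (i).} In $T$, $f_T(x_i) = y_i$ and $\ell_T(y_i) = x_i$. Since $y_i$ prefers $x_{i-1}$ to $x_i$ (the rotation's $y_i$ is $s_T(x_{i-1})$, so $x_{i-1}$ is in $y_i$'s list above its last entry $x_i$), the pair $\{y_i, x_i\}$ is deleted. The next entry of $x_i$'s list is $s_T(x_i) = y_{i+1}$. I would show that $\{x_i, y_{i+1}\}$ survives: by (ii), $x_i$ is the new last entry of $y_{i+1}$, so the pair is not deleted. Hence $f_{T'}(x_i) = y_{i+1}$.

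\textbf{Part (iii).} Lists change only through deleted pairs $\{y_i, z\}$ with $z$ worse than $x_{i-1}$ for $y_i$. For $y$ outside the $Y$-set, $y$'s last entry $\ell_T(y) = w$ would change only if the pair $\{y, w\}$ were deleted. That requires $y = z$ for some deleted pair $\{y_i, y\}$, so $y$ is in $y_i$'s list below $x_{i-1}$. I would argue the deleted partner lies at the top of $y$'s list rather than at the bottom: since $\ell_T(y_i) = x_i$ and table property (ii) gives $f_T(x_i) = y_i$, I need $z \neq \ell_T(y)$. For $x$ outside the $X$-set, similarly, $f_T(x) = y$ means $\ell_T(y) = x$. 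If $\{x, y\}$ were deleted, we would need $y = y_i$ with $x$ worse than $x_{i-1}$. But $x_i = \ell_T(y_i)$ is already y's last entry, so $x = x_i$, a contradiction.

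The main obstacle is the bookkeeping in (ii)–(iii), where agents may lie in both the $X$- and $Y$-sets. I would handle this by consistently using table property (ii) ($f_T(x) = y \iff \ell_T(y) = x$) to pin down the first and last entries.
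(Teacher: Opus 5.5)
The paper gives no proof of this lemma; it quotes it from Gusfield and Irving. Your outline is the natural direct proof, but its central step has a genuine gap.

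In (ii) you must show that the pair $\{y_i,x_{i-1}\}$ is not deleted. You propose to show, from the table axioms and the shape of the rotation, that the dangerous case ``does not happen.'' It can happen. Since $y_i$ never prefers $x_{i-1}$ to itself, the pair is deleted only if $x_{i-1}=y_j$ for some $j$ and $y_j$ prefers $x_{j-1}$ to $y_i$. Here $y_i=s_T(x_{i-1})=s_T(y_j)$, not $f_T(x_j)$ or $s_T(x_j)$ as you wrote. So this happens exactly when $x_{j-1}=f_T(y_j)=y_{i-1}$.

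This configuration occurs in genuine tables. Take $a{:}\,[b,c]$, $b{:}\,[c,a]$, $c{:}\,[a,b]$, padded by two agents $d,e$ who rank each other first and are ranked last by $a,b,c$. The rotation $(a,b),(b,c),(c,a)$ is exposed, and for every $i$ the agent $x_{i-1}$ is lost from $y_i$'s list.

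What rules this out is precisely the hypothesis that $T/\rho$ has no empty list, and you never actually use it. Your fallback assumes $T/\rho$ is a table, which is not given. Nonemptiness of $y_i$'s list together with symmetry of deletions does not by itself keep $x_{i-1}$ on that list. Because your (i) derives the survival of $\{x_i,y_{i+1}\}$ from (ii), the gap carries over to (i).

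The missing argument is short. In the bad configuration, $y_{i-1}=x_{j-1}$ has list exactly $[y_{j-1},x_{i-1}]$ in $T$: its first entry is $f_T(x_{j-1})=y_{j-1}$, its second is $s_T(x_{j-1})=y_j=x_{i-1}$, and its last is $\ell_T(y_{i-1})=x_{i-1}$. Since $x_{i-2}\neq x_{i-1}$ lies on that list, $x_{i-2}=y_{j-1}$, so elimination deletes $\{y_{i-1},x_{i-1}\}$. Meanwhile $y_j=x_{i-1}$ loses every entry after $x_{j-1}=y_{i-1}$, which is its first entry. Hence the list of $x_{i-1}$ is empty in $T/\rho$, contradicting the hypothesis. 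With this inserted, (ii) and then (i) go through as you describe.

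Part (iii) is essentially right but needs tidying. In the $f$-part you skip the case $x=y_i$; it is harmless, since $y_i$ cannot prefer $x_{i-1}$ to its own first entry. In the $\ell$-part the point is that $\ell_T(y)=y_i$ forces $f_T(y_i)=y$. So $y$ sits at the top of $y_i$'s list and cannot lie below $x_{i-1}$, which is the same reasoning the paper gives in Remark~\ref{remark1}.
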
 

\begin{remark} 
As noted in \cite{Gusfield1989Stable},  when $\rho$ is eliminated from $T$,  the front of each $x_i$'s list (or simply $x_i$) is {\it moved one step down} from  $y_i$ to $y_{i+1}$ while the back of each $y_i$'s list (or simply $y_i$) is {\it moved one or more steps up} from $x_i$ to $x_{i-1}$.   
While each $x_i$'s list is modified from the front and each $y_i$'s list is modified from the back,  it is important to remember that there may be some agents $z$ whose lists are modified in the {\it middle} when $\rho$ is eliminated from $T$.  These are the agents that lie between $x_{i-1}$ and $x_i$ in $y_i$'s list for $i = 0, 1, \hdots, k-1$.   For such an agent $z$, $y_i$ is removed from their list.   We know $y_i$ is in the middle of $z$'s list because if $f_T(z) = y_i$ then $\ell_T(y_i) = z$ but $z$ is ahead of $x_{i}$ in $y_i$'s list.   Similarly, if $\ell_T(z) = y_i$ then $f_T(y_i) = z$ but $x_{i-1}$ is ahead of $z$ in $y_i$'s list. 
\label{remark1}
\end{remark}

Let $R(I)$ contain all the exposed rotations that can be eliminated during an execution of phase 2 of Irving’s algorithm. There are two types of rotations in $R(I)$.  A rotation  $ \rho = (x_0, y_0), (x_1, y_1), \hdots,$ $(x_{r-1}, y_{r-1}) $ is {\it non-singular} if $\overline{\rho} = (y_1, x_0), (y_2, x_1), \hdots, (y_0, x_{r-1})$ is also a rotation.  Otherwise, $\rho$ is {\it singular}.   If $\rho$ and $\overline{\rho}$ are rotations then $\overline{\rho}$ is called the {\it dual} of $\rho$ and vice versa.   Notice that the $X$-set of $\rho$ is the $Y$-set of $\overline{\rho}$ and the $Y$-set of $\rho$ is the $X$-set of $\overline{\rho}$.

\begin{theorem} (Theorem 4.3.1 in \cite{Gusfield1989Stable}.)  
Let $\mu$ be a stable matching of $I$.  If $\mu = T_0/Z$,  then $Z$ contains every singular rotation and exactly one of each dual pair of non-singular rotations of $I$. 
\label{app:thmZcontent}
\end{theorem}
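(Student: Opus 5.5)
This statement is Theorem 4.3.1 of \cite{Gusfield1989Stable}, restated in the appendix for reference. I would reprove it from the table machinery just introduced, in two halves: first show that every singular rotation must lie in $Z$, then show that $Z$ meets each dual pair exactly once.

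\textbf{At most one of each dual pair.} Suppose both $\rho = (x_0,y_0),\hdots,(x_{r-1},y_{r-1})$ and $\ol{\rho}$ were eliminated on the way to $\mu = T_0/Z$, with $\rho$ eliminated first.
\begin{itemize}
\item By part (i) of Lemma 4.2.7, eliminating $\rho$ moves the front of each $x_i$'s list down to $y_{i+1}$.
\item By part (ii), it moves the back of each $y_{i+1}$'s list up to $x_i$.
\item Fronts only move down and backs only move up afterwards. So once $\rho$ is eliminated, no $y_{i+1}$ can hold $x_i$ in first position while $x_i$ is not yet last.
\item Exposing $\ol{\rho}$ requires exactly this: $f(y_{i+1}) = x_i$ and $s(y_{i+1}) = x_{i-1}$.
\end{itemize}
The contradiction comes from the second entry. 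In the table after eliminating $\rho$, $x_i$ is already the last entry of $y_{i+1}$'s list, so the entry $x_{i-1}$ that $\ol{\rho}$ requires to follow it has been deleted. Hence $\ol{\rho}$ can never be exposed later. The case where $\ol{\rho}$ comes first is symmetric.

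\textbf{At least one of each pair, and all singular rotations.} This is the main obstacle, since it requires a global argument about phase 2. I would use the fact that the final table has singleton lists, so it equals $\mu$. I would also use that the set of rotations eliminated is determined by the final table; this is the uniqueness fact already stated for $T = T_0/Z$.

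For a singular rotation $\tau$, I would show that every maximal elimination sequence must eventually contain it. The argument: if $\tau$ is exposed in some table $T$ and a different rotation $\sigma$ is eliminated, then $\tau$ remains exposed in $T/\sigma$, unless $\sigma = \ol{\tau}$ or $\sigma$ and $\tau$ share agents in a way that forces a dual. Since $\tau$ has no dual, it stays exposed until it is eliminated.

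For a non-singular pair $\{\rho,\ol{\rho}\}$, I would argue similarly that once $\rho$ becomes exposed, it stays exposed until either $\rho$ or $\ol{\rho}$ is eliminated. I would then show that every rotation of $R(I)$ becomes exposed along every complete run. If that fails, I would fall back on the classical proof. That proof shows that every table $T_0/Z'$ contains all stable pairs, and that a pair $\{x_i,y_i\}$ from an unresolved rotation would survive into $\mu$ only if the pair were stable. A counting argument on the matching then forces one of $\rho$ or $\ol{\rho}$ to have been eliminated.

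This persistence/exposure step is where I expect the real work to be. If my attempt stalls, I would simply cite the original proof in \cite{Gusfield1989Stable}, since the paper already uses this result as given (Theorem \ref{thmZcontent}).
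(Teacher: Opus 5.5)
The paper does not prove this statement; it quotes Theorem~4.3.1 of \cite{Gusfield1989Stable}. Your ``at most one'' half is sound, up to an index slip. Exposure of $\ol{\rho}$ requires $f(y_{i+1})=x_i$ and $s(y_{i+1})=x_{i+1}$, not $x_{i-1}$. The argument still works: after $\rho$ is eliminated, $x_i$ is last on $y_{i+1}$'s list. Since lists only shrink, $x_i$ can later be first only if it is the sole entry, so no second entry exists.

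The genuine gap is in the ``at least one'' half.
\begin{itemize}
\item Persistence of \emph{exposure}, which you only assert, cannot deliver it. Even if an exposed singular rotation stays exposed until it is eliminated, you never show that it becomes exposed at some point of every run. That is the entire content of the claim.
\item Your proposed strengthening, that every rotation of $R(I)$ becomes exposed along every complete run, is false. If $\sigma<\rho$ are non-singular, then $\ol{\rho}<\ol{\sigma}$. In any run that eliminates $\rho$, the rotation $\ol{\sigma}$ is never exposed. Its predecessor $\ol{\rho}$ is never eliminated in that run, by your own ``at most one'' half. So exposing and eliminating $\ol{\sigma}$ there would contradict the definition of $\leq$.
\item The fallback premise is also false. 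Not every table $T_0/Z'$ contains all stable pairs: eliminating a non-singular $\rho$ deletes each $\{x_i,y_i\}$, which is a stable pair by Lemma~\ref{lemmaStablePartners}.
\item The ``counting argument'' is never given.
\end{itemize}

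The missing idea is to track \emph{embedding} instead of exposure, using Lemma~\ref{lemmaEmbed}.
\begin{itemize}
\item Every $\rho\in R(I)$ is embedded in $T_0$, since it is exposed in some table $T_0/Z'$, which is a subtable of $T_0$.
\item No rotation is embedded in the final table $\mu$, since every subtable of a table with singleton lists has singleton lists.
\item So along any sequence $\sigma_1,\dots,\sigma_k$ producing $\mu$, take the first step $j$ at which $\rho$ stops being embedded. Lemma~\ref{lemmaEmbed} forces $\sigma_j\in\{\rho,\ol{\rho}\}$.
\item If $\rho$ is singular, this forces $\sigma_j=\rho$, because a singular rotation is never the dual of a rotation.
\end{itemize}
Applying the same lemma with $\sigma=\rho$ shows that $\ol{\rho}$ is not embedded in $T/\rho$, nor in any later subtable. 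This gives your ``at most one'' half in one line.
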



Just like in the SM setting,  a poset can be created from $R(I)$.  For $\sigma, \rho \in R(I)$,  let $\sigma \leq \rho$ if for every sequence of eliminations in the phase 2 of Irving's algorithm in which $\rho$ appears, $\sigma$ appears before $\rho$.  The poset $\mc{R}(I) = (R(I), \leq)$ is the {\it rotation poset of $I$}.  Here are some properties about the ordering $\leq$:

\begin{lemma} (Lemma 4.3.7 in \cite{Gusfield1989Stable}.)
Let $\rho$ and $\sigma$ be non-singular rotations and $\tau$ be a singular rotation of $I$.  Then 

\noindent (i) $\rho$ and $\ol{\rho}$ are incomparable elements in $\mc{R}(I)$; 

\noindent (ii) $\sigma \leq \rho$ if and only if $\ol{\rho} \leq \ol{\sigma}$;

\noindent (iii) any predecessor of $\tau$ in $\mc{R}(I)$ is a singular rotation.
\label{app:lemmaRotations}
\end{lemma}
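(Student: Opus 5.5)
The statement is Lemma 4.3.7 of \cite{Gusfield1989Stable}, repeated here for convenience. My plan is to derive it from the definition of $\leq$ as precedence in every elimination sequence, together with Theorem~\ref{app:thmZcontent}. I would take the three parts in the order (i), (iii), (ii), because (ii) is the hard one.

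For (i), suppose $\rho < \ol{\rho}$. Theorem~\ref{app:thmZcontent} says every complete elimination sequence from $T_0$ to a stable matching contains exactly one of $\rho,\ol{\rho}$. Moreover, $\ol{\rho}$ does appear in some execution of phase 2, since it belongs to $R(I)$. Consider a prefix of such an execution ending with $\ol{\rho}$. By the definition of $\rho\leq\ol{\rho}$, this prefix must also contain $\rho$. Extending the prefix to a stable matching then produces a set $Z$ containing both $\rho$ and $\ol{\rho}$, which contradicts Theorem~\ref{app:thmZcontent}. The case $\ol{\rho}<\rho$ is symmetric.

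For (iii), suppose a non-singular $\rho$ precedes a singular $\tau$. Take any execution that eliminates $\ol{\rho}$ first among the pair and then continues to a stable matching. By Theorem~\ref{app:thmZcontent}, $\tau$ appears in this execution, but $\rho$ never does. That violates $\rho\leq\tau$.

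For (ii), it suffices to prove one direction, $\sigma\leq\rho\Rightarrow\ol{\rho}\leq\ol{\sigma}$. Applying that direction to the pair $\ol{\rho},\ol{\sigma}$ gives the converse, since $\ol{\ol{\rho}}=\rho$. My approach would use the explicit structure of rotations. The $X$-set of $\rho$ is the $Y$-set of $\ol{\rho}$. Eliminating $\rho$ moves the fronts of the $x_i$ down and the backs of the $y_i$ up, and by Remark~\ref{remark1} it also deletes middle entries.

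Assume $\sigma\leq\rho$ and, for contradiction, that some execution eliminates $\ol{\sigma}$ without first eliminating $\ol{\rho}$. By Theorem~\ref{app:thmZcontent}, that execution must eventually contain $\rho$, because it never contains $\ol{\rho}$ before $\ol{\sigma}$. I then need to show it would have to contain $\sigma$ before $\rho$, which is incompatible with already having $\ol{\sigma}$.

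Spelling this out: $\rho$ appears in the execution, so $\sigma\leq\rho$ forces $\sigma$ to appear as well. But $\ol{\sigma}$ has also been eliminated, so the resulting set $Z$ contains both $\sigma$ and $\ol{\sigma}$, contradicting Theorem~\ref{app:thmZcontent}.

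The main obstacle is the step claiming that the execution must eventually contain $\rho$. What the argument actually needs is this: if $\ol{\sigma}$ has been eliminated and $\ol{\rho}$ has not, the execution can still be completed so that $\rho$ appears rather than $\ol{\rho}$. Completing arbitrarily might instead select $\ol{\rho}$ later. So I need some completion avoiding $\ol{\rho}$. I would try to obtain it by showing that $\{\ol{\sigma}\}$ together with its predecessors, the singular rotations, and $\rho$ with its predecessors forms a partially complete closed set. I would then invoke an extension result in the style of Lemma~\ref{lemmapartial}, via the correspondence of Theorem~\ref{mainthm3}.

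The risk is that this argument is circular, since those results rely on (ii). If so, I would fall back on the table-level argument of \cite{Gusfield1989Stable}, which proves (ii) directly by comparing the agents' positions in the lists.
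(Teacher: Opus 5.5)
The paper gives no proof of this lemma; it is quoted from \cite{Gusfield1989Stable}, so I am judging your argument on its own terms. Parts (i) and (iii) are fine given Theorem~\ref{app:thmZcontent}. Because $\rho$ is non-singular, some execution contains $\ol{\rho}$, and that execution contains every singular rotation but never $\rho$. Your reduction of (ii) to one implication via $\ol{\ol{\rho}}=\rho$ is also correct.

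The gap is in (ii), and it is worse than the circularity you flag. If $\sigma\leq\rho$, then \emph{no} execution containing $\ol{\sigma}$ can contain $\rho$. Otherwise $\sigma$ would have to appear before $\rho$, putting $\sigma$ and $\ol{\sigma}$ into the same $Z$. Hence, by Theorem~\ref{app:thmZcontent}, every completion of your hypothetical bad execution contains $\ol{\rho}$, eliminated after $\ol{\sigma}$. So the step ``that execution must eventually contain $\rho$'' is false, not merely unjustified, and the completion avoiding $\ol{\rho}$ that you look for never exists. The proposed repair fails for the same reason. The set $\{\ol{\sigma}\}\cup\Pred(\ol{\sigma})\cup\{\rho\}\cup\Pred(\rho)\cup\{\text{singular rotations}\}$ contains both $\sigma\in\Pred(\rho)$ and $\ol{\sigma}$, so it is never partially complete. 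Lemma~\ref{lemmapartial} therefore could not be applied, even setting aside that it is a statement about mirror posets and so presupposes (ii).

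The underlying issue is that set-membership facts only give you $\ol{\sigma}\in Z\Rightarrow\ol{\rho}\in Z$. What (ii) adds is the \emph{order}: $\ol{\rho}$ comes before $\ol{\sigma}$. No argument from Theorem~\ref{app:thmZcontent} and the definition of $\leq$ alone can produce it. For instance, take abstract dual pairs $\{a,\ol{a}\}$ and $\{b,\ol{b}\}$. Let the admissible elimination sequences be all orderings of $\{a,b\}$, $\{\ol{a},b\}$ and $\{\ol{a},\ol{b}\}$ in which $\ol{a}$ precedes $\ol{b}$. Every final set has exactly one element of each pair, (i) holds, and $\ol{a}\leq\ol{b}$. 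Yet $b\not\leq a$, because of the sequence $(a,b)$.

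So the missing step must use the tables. You need to show that $\ol{\sigma}$ cannot become exposed while $\ol{\rho}$ is still uneliminated. The natural ingredients are:
how an elimination moves the fronts of the $X$-agents' lists and the backs of the $Y$-agents' lists (Lemma~4.2.7 of \cite{Gusfield1989Stable} and Remark~\ref{remark1});
the necessary-elimination description of predecessors in Lemma~\ref{lemmaPred};
and the fact that the $X$-set of $\ol{\rho}$ is the $Y$-set of $\rho$.
Your fallback to the table-level argument is the right instinct, but that argument is the entire content of (ii), and the proposal does not supply it.
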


\begin{theorem} (Theorems 5.1 and 5.2 in \cite{Gusfield1988Structure}.)
There is a one-to-one correspondence between the stable matchings of $I$ and the closed subsets of $\mc{R}(I)$ that contain every singular rotation and exactly one of each dual pair of non-singular rotations of $I$. In particular, $\mu$ corresponds to the closed subset $Z$ if and only if $\mu = T_0/Z$. 
\label{mainthm3-Gusfield}
\end{theorem}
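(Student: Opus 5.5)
This theorem is cited from Gusfield and Irving~\cite{Gusfield1988Structure}, so the plan is to reconstruct it from Theorem~\ref{app:thmZcontent} and the order-independence of Irving's phase~2. We must show two things. First, each stable matching $\mu = T_0/Z$ yields a set $Z$ that is closed in $\mc{R}(I)$ and has the stated content. Second, each such closed set $Z$ arises from exactly one stable matching.

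\textbf{The forward direction.} Let $\mu$ be stable. By the appendix discussion, there is an elimination sequence from $T_0$ to $\mu$, and its set of rotations $Z$ is independent of which sequence is used. Theorem~\ref{app:thmZcontent} gives that $Z$ contains every singular rotation and exactly one rotation from each dual pair. For closedness, take $\rho \in Z$ and $\sigma \le \rho$. By the definition of $\le$, every phase-2 elimination sequence in which $\rho$ appears has $\sigma$ appearing before it. Our sequence contains $\rho$, so $\sigma \in Z$.

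\textbf{Injectivity.} If $T_0/Z = \mu$ and $T_0/Z' = \mu'$ with $Z = Z'$, then $\mu = \mu'$: the table obtained depends only on the set of eliminated rotations. This is the order-independence fact quoted in the appendix. Conversely, distinct stable matchings give distinct sets, since $\mu = \mu'$ iff $Z = Z'$.

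\textbf{Surjectivity, the main obstacle.} Given a closed set $Z$ of the stated form, we must show it can be eliminated in some order from $T_0$ to reach a table with singleton lists.
\begin{enumerate}
\item Take any linear extension $\rho_1, \ldots, \rho_t$ of $Z$ restricted to $\mc{R}(I)$. Argue inductively that $\rho_j$ is exposed in $T_0/\{\rho_1, \ldots, \rho_{j-1}\}$.
\item For the inductive step, try the following. Some elimination sequence $\pi$ contains $\rho_j$, and every rotation preceding $\rho_j$ in every such sequence lies below it, hence in $\{\rho_1, \ldots, \rho_{j-1}\}$. Using the Gusfield--Irving lemma that an exposed rotation remains exposed after eliminating a different exposed rotation, commute eliminations so that $\rho_j$ becomes exposed after exactly its predecessors. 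Elements of $\{\rho_1, \ldots, \rho_{j-1}\}$ that are not below $\rho_j$ must be shown not to destroy exposure; the danger is eliminating $\overline{\rho_j}$, but $Z$ contains only one of each pair.
\item After all of $Z$ is eliminated, no list is empty, because $I$ is solvable and we never eliminate both members of a dual pair.
\item If some list still had two or more entries, an exposed rotation $\tau \notin Z$ would exist. That $\tau$ is either singular, which contradicts $Z$ containing all singular rotations, or non-singular with $\overline{\tau} \in Z$. In the latter case we need the known fact that $\tau$ and $\overline{\tau}$ can never both be exposed along one sequence. So the final table is a stable matching $\mu$ with $\mu = T_0/Z$.
\end{enumerate}
The delicate point is step~2 of this list. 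If the commutation argument becomes unwieldy, I would simply invoke Gusfield's Theorems~5.1--5.2 directly, as the statement already attributes it.
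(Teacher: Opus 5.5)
The paper does not prove this theorem; it imports it from Gusfield. Your forward direction (closedness of $Z$, plus Theorem~\ref{app:thmZcontent}) and your injectivity argument are fine. One caveat: that every stable matching is reached by \emph{some} phase-2 elimination sequence is itself a separate theorem of Irving and Gusfield--Irving. The appendix does not establish it, so you need to cite it.

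The genuine gap is Step~2 of surjectivity, which is the real content of the theorem. The commutation lemma applies only to two rotations exposed in the same table, and it fails when one is the dual of the other (Lemma~\ref{lemmatwins}). So it lets you swap adjacent eliminations only when the later rotation was already exposed before the earlier one. It gives you no way to move out of $\pi$ a rotation $\sigma$ that precedes $\rho_j$ in $\pi$ but is not below $\rho_j$. If $\sigma$ performed one of the deletions needed to expose $\rho_j$, then $\rho_j$ is not exposed until $\sigma$ is gone. Knowing that some other sequence avoids $\sigma$ does not tell you which rotation performs that deletion there, or that it lies in $\{\rho_1,\dots,\rho_{j-1}\}$.

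Commutation plus the ``precedes in every sequence'' definition of $\leq$ cannot suffice, as this abstract system shows. Take three ``rotations'' $a,b,c$: $a$ and $b$ are exposed initially, and $c$ becomes exposed as soon as either $a$ or $b$ has been eliminated. The diamond property holds everywhere. Yet no rotation precedes $c$ in every sequence, so $\{a,b,c\}$ is an antichain under $\leq$, and the linear extension $c,a,b$ fails at its first step. Structurally, this is an antimatroid that is not the family of order ideals of a poset.

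What rules this out for SR is Gusfield's uniqueness lemma (Lemma~\ref{lemmaPred}). Every agent $z\neq y_i$ ahead of $y_{i+1}$ on $x_i$'s list is removed by one and the same rotation $\sigma_{x_i,z}$ in every sequence containing $\rho_j$. Hence $\sigma_{x_i,z}\leq\rho_j$ (Corollary~\ref{corPred}), and by closedness $\sigma_{x_i,z}\in\{\rho_1,\dots,\rho_{j-1}\}$. Combine this with Lemma~\ref{lemmaEmbed}: $\rho_j$ stays embedded because $\ol{\rho_j}$ is never eliminated. Together these show that $x_i$'s list in $T_0/\{\rho_1,\dots,\rho_{j-1}\}$ begins with $y_i,y_{i+1}$, so $\rho_j$ is exposed. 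That is exactly Theorem~\ref{thmExposedRotation}, and your induction over a linear extension is Corollary~\ref{corLinearExtension}. Your fallback of invoking Gusfield's Theorems~5.1--5.2 directly would be circular, since that is the statement itself.

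Two smaller fixes:
\begin{itemize}
\item In Step~3, the reason no list empties is solvability alone: phase~2 succeeds for every choice of exposed rotations on a solvable instance. It has nothing to do with dual pairs.
\item In Step~4, you cannot use ``$\tau$ and $\ol{\tau}$ are never both exposed along one sequence''; Lemma~\ref{lemmatwins} shows they can be exposed in the same table. What you need is that once $\ol{\tau}$ has been eliminated, $\tau$ is no longer embedded (Lemma~\ref{lemmaEmbed}), so it can never be exposed afterwards.
\end{itemize}
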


 Let $\rho = (x_0, y_0), (x_1, y_1), \hdots,$ $(x_{r-1}, y_{r-1})$ be a rotation. In \cite{Gusfield1988Structure}, Gusfield notes that for $\rho$ to be exposed in some table, all agents $z \neq y_i$ in $x_i$'s list  that appear {\it before} $y_{i+1}$ must be removed.  (He refers to them as {\it necessary eliminations}.)
 The removal of such a $z$ from $x$'s list can only happen when some rotation is eliminated. Gusfield describes this rotation in the next lemma.
 

 \begin{lemma} (Lemma 5.2 in \cite{Gusfield1988Structure}.)
  Let $\rho = (x_0, y_0), (x_1, y_1), \hdots,$ $(x_{r-1}, y_{r-1})$ be a rotation. Assume $z$ is an agent on $x_i$'s list such that $z \neq y_i$ and appears before $y_{i+1}$. 
  There exists a unique rotation $\sigma_{x_i,z}$ so that, for every sequence $\Pi$ of rotations eliminated from $T_0$ that contains $\rho$, the rotation $\sigma_{x_i,z}$ appears before $\rho$.  Moreover, $\sigma_{x_i,z}$ is the only rotation in $\Pi$ whose elimination removes $z$ from $x_i$'s list. 
  \label{lemmaPred}
 \end{lemma}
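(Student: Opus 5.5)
The plan is to reduce the lemma to a statement about tables determined by sets of rotations. The basic observations are these. Along any sequence $\Pi$ of eliminations from $T_0$, each elimination only deletes pairs, so once $\{x_i,z\}$ is deleted it never reappears. Also, by the table property, the rotation $\rho$ is exposed in a table $T$ only if $f_T(x_i)=y_i$ and $s_T(x_i)=y_{i+1}$. Hence any $z \neq y_i$ lying before $y_{i+1}$ on $x_i$'s list in $T_0$ is absent from $x_i$'s list in the table at the moment $\rho$ is eliminated. So in any $\Pi$ containing $\rho$ there is exactly one rotation of $\Pi$ whose elimination deletes $\{x_i,z\}$, and it occurs before $\rho$. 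This gives the ``moreover'' part and existence within each fixed $\Pi$. What remains is to show that this rotation does not depend on $\Pi$.

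Next I would identify what the deleting rotation looks like. By the elimination rule, a pair $\{a,b\}$ is deleted when eliminating $\sigma=(u_0,v_0),\dots$ only if $a$ or $b$ lies in the $Y$-set of $\sigma$ and the back of that agent's list moves up past the other. Since $z$ lies at or before $f(x_i)$ on $x_i$'s list, the property $f_T(x)=y \iff \ell_T(y)=x$ suggests the following. I expect that in our situation $z$ is always in the $Y$-set, say $z=v_j$, and that $\ell(z)$ moves from an agent at or below $x_i$ to $x_{j-1}$ ranked above $x_i$ on $z$'s list. I would verify this case by case, using Lemma~4.2.7 and Remark~\ref{remark1}, including the possibility that $x_i$ is deleted from the middle of $z$'s list. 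This characterization is useful because it describes the deleting rotation intrinsically: it is the rotation that moves $\ell(z)$ across $x_i$.

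For independence from $\Pi$, I would show that the deleting rotation lies in $\Pred(\rho)$, the set of rotations preceding $\rho$ in every sequence. The key claim is that $\{x_i,z\}$ is already absent from $T_0/W$ for every closed set $W$ of rotations that can precede $\rho$; in particular it is absent from $T_0/\Pred(\rho)$. To argue this, take two sequences $\Pi,\Pi'$ with prefix sets $W,W'$ before $\rho$. Both are closed subsets of $\mc{R}(I)$, and $\rho$ is exposed in both $T_0/W$ and $T_0/W'$. I would then show that $T_0/(W\cap W')$ is again a table in which $\rho$ is exposed, and that each agent's last entry in it is the worse of the last entries in the two tables. Iterating over all prefix sets gives this for $\Pred(\rho)$ itself.

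Granting this, the argument finishes quickly. Some rotation $\sigma\in\Pred(\rho)$ deletes $\{x_i,z\}$. Every $\Pi$ containing $\rho$ eliminates all of $\Pred(\rho)$ before $\rho$, and deletions are unique within $\Pi$. So $\sigma$ is the deleting rotation in every such $\Pi$; this defines $\sigma_{x_i,z}$ and proves its uniqueness. The main obstacle is the intersection step, that is, showing $T_0/(W\cap W')$ is a table with $\rho$ exposed. I would first try to prove it via the characterization above: since the back of $z$'s list only moves up, and moves past $x_i$ via a single rotation, the ``meet'' table is obtained agentwise. If that fails, I would instead appeal to the order-theoretic structure of reachable tables from Theorem~\ref{mainthm3-Gusfield}.
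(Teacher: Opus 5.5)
\paragraph{A gap at the intersection step.}
Your first steps are sound. In a fixed $\Pi$, exactly one rotation $\tau$ deletes $\{x_i,z\}$, and it comes before $\rho$. Also $z$ must lie in the $Y$-set of $\tau$: if $x_i$ did, the back of $x_i$'s list would move above $z$, hence above $y_{i+1}$, which must survive until $\rho$ is exposed.

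The gap is the intersection step. It carries essentially all of the content of the lemma, and you do not prove it. Eliminating $W\cap W'$ in the order inherited from $\Pi$ works only if, for each $\kappa\in W\cap W'$, the rotations that perform $\kappa$'s necessary eliminations in $\Pi$ also lie in $W'$. That is exactly the present lemma applied to $\kappa$. Your ``agentwise meet'' heuristic presupposes the same thing: that the back of each list is moved across a given agent by the same rotation in every sequence.

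Your fallback does not help either. The paper derives exposure of $\rho$ in $T_0/Z$ for closed $Z$, and eliminability of closed sets in any linear order (Theorem~\ref{thmExposedRotation}, Corollary~\ref{corLinearExtension}), from Corollary~\ref{corPred}, i.e.\ from this very lemma. Theorem~\ref{mainthm3-Gusfield} concerns only complete sets, so it says nothing about $W\cap W'$ or $\Pred(\rho)$.

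Your closing step is also under-justified. That $\sigma$ deletes the pair along one ordering of $\Pred(\rho)$ does not show it deletes the pair in $\Pi$. You need table property (iii) to see that the pair is still present when $\sigma$ is eliminated in $\Pi$.

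\paragraph{The missing idea.}
Lemma~\ref{lemmaEmbed} lets you compare two sequences directly, with no meet construction: a rotation embedded in $T_0$ stays embedded until it or its dual is eliminated.

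Suppose $\tau\neq\tau'$ delete $\{x_i,z\}$ in $\Pi$ and $\Pi'$, with prefix sets $W$ and $W'$. By your characterization:
\begin{itemize}
\item $\tau$ moves the back of $z$'s list from $u_j$ up to $u_{j-1}$;
\item $\tau'$ contains consecutive pairs $(a,\cdot),(b,z)$ and moves the back of $z$'s list from $b$ up to $a$.
\end{itemize}
Here $z$ prefers both $u_{j-1}$ and $a$ to $x_i$, and likes $x_i$ at least as much as $u_j$ and $b$.

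First, $\tau'\notin W$. If $\tau'$ came after $\tau$, then $b$ would already be gone from $z$'s list, since $\tau$ removed everything below $u_{j-1}$. If it came before, then by (iii) the pair would still be present when $\tau'$ is eliminated: the back of $z$'s list is $b$, and the back of $x_i$'s list is still at or below $y_{i+1}$. So $\tau'$ would delete it, contradicting that $\tau$ is the unique deleting rotation in $\Pi$.

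Next, $\tau'$ is not embedded in $T_0/W$, because $b$ is no longer on $z$'s list there. So Lemma~\ref{lemmaEmbed} forces $\overline{\tau'}\in W$.

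But $\overline{\tau'}$ contains $(z,a)$ followed by a pair whose second entry is $b$. So when $\overline{\tau'}$ is exposed, $z$'s list begins $a,b$, and afterwards $z$'s front is $b$. This gives a contradiction either way:
\begin{itemize}
\item If $\tau$ precedes $\overline{\tau'}$ in $\Pi$, then $b$ cannot still be on $z$'s list at that point.
\item If $\tau$ follows $\overline{\tau'}$, then $u_{j-1}$ must be on $z$'s list when $\tau$ is exposed. Yet everything above $b$ was removed by $\overline{\tau'}$.
\end{itemize}
Hence $\tau=\tau'$, and this common rotation occurs before $\rho$ in every sequence containing $\rho$. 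That is the lemma, with no need to analyze $\Pred(\rho)$.
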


\begin{corollary} 
     Let $\rho = (x_0, y_0), (x_1, y_1), \hdots,$ $(x_{r-1}, y_{r-1})$. Assume $z$ is an agent on $x_i$'s list such that $z \neq y_i$ and appears before $y_{i+1}$.  Then the unique rotation $\sigma_{x_i,z}$ that can remove $z$ from $x_i$'s list is a predecessor of $\rho$ in  $\mc{R}(I)$.
     
\label{corPred}
\end{corollary}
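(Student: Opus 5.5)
The aim is to show that the rotation $\sigma_{x_i,z}$ given by Lemma~\ref{lemmaPred} satisfies $\sigma_{x_i,z} \leq \rho$ in $\mc{R}(I)$. Since the order is defined as ``$\sigma \leq \rho$ iff $\sigma$ appears before $\rho$ in every elimination sequence of phase 2 in which $\rho$ appears,'' the statement is essentially a restatement of Lemma~\ref{lemmaPred}. The plan is to unpack that definition and check that its hypotheses match.

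First, fix an arbitrary sequence $\Pi$ of rotations eliminated from $T_0$ during some execution of phase 2, and suppose $\rho$ appears in $\Pi$. The agent $z$ lies on $x_i$'s list, is different from $y_i$, and precedes $y_{i+1}$. So Lemma~\ref{lemmaPred} applies and yields a single rotation $\sigma_{x_i,z}$ that does not depend on $\Pi$. The lemma says $\sigma_{x_i,z}$ appears in $\Pi$ before $\rho$, and it is the unique rotation of $\Pi$ whose elimination deletes $z$ from $x_i$'s list. Because this holds for every such $\Pi$, the definition of $\leq$ gives $\sigma_{x_i,z} \leq \rho$. Moreover $\sigma_{x_i,z} \neq \rho$, since it appears strictly before $\rho$, so it is a genuine predecessor.

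Two side points need a sentence each.

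\textbf{Membership in $R(I)$.} $\sigma_{x_i,z}$ must belong to $R(I)$. This holds because $\rho \in R(I)$ means some phase-2 execution eliminates $\rho$, and in that execution $\sigma_{x_i,z}$ is eliminated.

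\textbf{Well-defined phrasing.} ``The unique rotation that can remove $z$ from $x_i$'s list'' is justified by the uniqueness clause of Lemma~\ref{lemmaPred}, together with its independence from $\Pi$.

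The only mild obstacle is the vacuous case: $\rho$ must appear in at least one elimination sequence, so the ``before'' relation is nonvacuous. This is guaranteed by $\rho \in R(I)$.
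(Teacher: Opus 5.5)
Your proposal is correct and matches the paper. The paper gives no separate proof of this corollary: as you observe, the conclusion of Lemma~\ref{lemmaPred} (a single $\sigma_{x_i,z}$ that appears before $\rho$ in every elimination sequence from $T_0$ containing $\rho$) is exactly the definition of $\sigma_{x_i,z} \leq \rho$ in $\mc{R}(I)$, and your remarks on membership in $R(I)$ and on $\sigma_{x_i,z} \neq \rho$ are harmless extra care.
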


Given a table $T$, another table $T'$ is a {\it subtable} of $T$ if for each agent $x$, its list in $T'$ is a sublist of the corresponding one in $T$.    A rotation $\rho$ is {\it embedded} in $T$ if there is a subtable $T'$ of $T$ so that $\rho$ is exposed in $T'$. 


\begin{lemma} (Corollary 4.2 in \cite{Gusfield1988Structure}.)
Suppose $\sigma$ is exposed in table $T$ and $\rho \neq \sigma$ is embedded in $T$.  Then $\rho$ is still embedded in $T/\sigma$ if and only if $\rho \neq \overline{\sigma}$.  
\label{lemmaEmbed}
\end{lemma}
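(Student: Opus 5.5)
The plan is to prove the two directions separately, working directly with the definitions of table, exposed rotation and elimination from this appendix. Write $\sigma = (x_0,y_0),\ldots,(x_{r-1},y_{r-1})$ and $\rho = (x'_0,y'_0),\ldots,(x'_{s-1},y'_{s-1})$, with $\rho \neq \sigma$ embedded in $T$ through a subtable $T'$ in which $\rho$ is exposed. The parts resting on results not stated in this paper are marked as such below.

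\textbf{The case $\rho = \ol{\sigma}$.} Here $\ol{\sigma}$ contains the consecutive pairs $(y_{i+1},x_i),(y_{i+2},x_{i+1})$. If it were exposed in a subtable of $T/\sigma$, we would need $x_{i+1} = s(y_{i+1})$ there. By Lemma 4.2.7(ii), however, $\ell_{T/\sigma}(y_{i+1}) = x_i$, so $x_i$ is the last entry of $y_{i+1}$'s list in $T/\sigma$. Since $\ell_T(y_{i+1}) = x_{i+1}$ (because $f_T(x_{i+1}) = y_{i+1}$), the elimination of $\sigma$ deletes the pair $\{y_{i+1},x_{i+1}\}$. Lists in a subtable only shrink, so $x_{i+1}$ is absent from $y_{i+1}$'s list in every subtable of $T/\sigma$. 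Hence $\ol{\sigma}$ cannot be exposed in any subtable, i.e.\ it is not embedded in $T/\sigma$. This direction is routine.

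\textbf{The case $\rho \neq \ol{\sigma}$.} I would first use (from \cite{Gusfield1988Structure}; not stated here) that every subtable of $T$ is obtained from $T$ by eliminating a sequence of rotations. So $T' = T/\Pi$ for some sequence $\Pi$. Two subcases follow.

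\emph{Subcase $\sigma \in \Pi$.} Elimination of rotations commutes as long as each stays exposed, so I would reorder $\Pi$ to eliminate $\sigma$ first. Then $T'$ is itself a subtable of $T/\sigma$, and we are done.

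\emph{Subcase $\sigma \notin \Pi$.} I claim $\sigma$ is still exposed in $T'$. The fronts of the $x_i$ can only be moved by a rotation containing some $x_i$ in its $X$-set, and (by Lemma \ref{lemmaPred}/Corollary \ref{corPred}-style uniqueness) the only rotations able to remove $y_i$ or $y_{i+1}$ from $x_i$'s list in a way that destroys $\sigma$ are $\sigma$ and $\ol{\sigma}$. Since $\ol{\sigma}$ cannot be eliminated before $\sigma$ without excluding $\sigma$ altogether, it should not occur in $\Pi$ in a way that destroys $\sigma$; this point needs checking. Granting the claim, form $T'/\sigma$, which is a subtable of $T/\sigma$. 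It remains to show $\rho$ is still exposed in $T'/\sigma$.

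\textbf{Main obstacle: $\rho$ survives in $T'/\sigma$.} By Lemma 4.2.7(iii), $f(x'_j)$ is unchanged unless $x'_j$ lies in the $X$-set of $\sigma$. The pairs deleted by $\sigma$ are exactly the pairs $\{y_i,z\}$ with $z$ worse than $x_{i-1}$ for $y_i$. So I must rule out two things: that $\sigma$ deletes some pair $\{x'_j,y'_j\}$ or $\{x'_j,y'_{j+1}\}$, and that it shifts a front $f(x'_j)$.

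\emph{Shared $X$-agent.} Suppose $x'_j = x_i$. Exposure of both rotations in $T'$ forces $y'_j = y_i$ and $y'_{j+1} = y_{i+1}$. Propagating around the cycle then forces $\rho = \sigma$, which is excluded.

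\emph{Deleted pair.} Suppose a pair of $\rho$ involving $y_i$ is deleted, so $y_i$ equals $y'_j$ or $y'_{j+1}$ and the relevant $x'$ is worse than $x_{i-1}$ on $y_i$'s list. Table property (ii), $\ell_{T'}(y'_j) = x'_j$, together with $y'_{j+1} = s_{T'}(x'_j)$, pins down where $x'_{j-1}$ and $x'_j$ sit on $y'_j$'s list. Tracing this around both cycles should force $\rho$'s pairs to coincide with the reversed pairs $(y_{i+1},x_i)$, i.e.\ $\rho = \ol{\sigma}$, a contradiction.

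This cycle-chasing argument is where I expect the real difficulty. If the commutation and ``$\sigma$ stays exposed'' claims prove awkward, I would first fall back on building the required subtable directly as the pairwise intersection of $T'$ and $T/\sigma$, and verifying the table axioms for it.
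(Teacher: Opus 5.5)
The paper does not prove this lemma; it imports it from Gusfield, so I have judged your argument on its own. Your treatment of the case $\rho=\overline{\sigma}$ is correct. There is a genuine gap in the subcase $\sigma\notin\Pi$. You claim that $\sigma$ is still exposed in $T'$, but this is false whenever $\overline{\sigma}\in\Pi$. Your justification does not apply, because nothing forces $\sigma$ to be eliminated later in $\Pi$. For example, if $T'$ is a subtable of $T/\overline{\sigma}$, then by your own first paragraph $\sigma$ is not even embedded in $T'$, so $T'/\sigma$ does not exist. This is not a corner case: a witness for $\rho$ reached through $\overline{\sigma}$ is exactly what the lemma has to handle. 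Your fallback fails in the same case. If $\sigma$ and $\overline{\sigma}$ are both exposed in $T$ and $T'\subseteq T/\overline{\sigma}$, then by Lemma~\ref{lemmatwins} the list of $x_i$ is $\{y_i\}$ in $T'$ but $\{y_{i+1}\}$ in $T/\sigma$. The pairwise intersection therefore has empty lists.

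Here is one way to close the gap.
\begin{itemize}
\item Let $\pi_k=\overline{\sigma}$, and let $T_l$ be the intermediate tables.
\item Since $\pi_1,\dots,\pi_{k-1}\notin\{\sigma,\overline{\sigma}\}$, the exposed-case statement (which your last paragraph is proving) keeps $\sigma$ exposed up to $T_{k-1}$. There both twins are exposed.
\item By Lemma~\ref{lemmatwins}, $T_{k-1}/\sigma$ and $T_k$ agree outside $A=\{x_i,y_i\}$, and both give singleton lists on $A$.
\item Every $X$- and $Y$-agent of an exposed rotation has at least two entries. Also, an agent of $A$ cannot be the deleted partner of an agent outside $A$, since its singleton partner lies in $A$. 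Hence neither $\pi_l$ for $l>k$ nor $\rho$ touches $A$.
\item Replacing $\overline{\sigma}$ by $\sigma$ in $\Pi$ therefore gives a valid sequence. Its final table agrees with $T'$ outside $A$ and still exposes $\rho$.
\item That table lies inside $T/\sigma$. Eliminating $\sigma$ from any table contained in $T$ removes everything below the fixed threshold $x_{i-1}$ on $y_i$'s list.
\end{itemize}
The same threshold observation settles your subcase $\sigma\in\Pi$ directly, with no commutation needed.

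There is also a smaller problem in the cycle-chasing. The case you set up, with $y_i$ a $Y$-agent of $\rho$, collapses into the shared-$X$-agent case, because $\ell_{T'}(f_{T'}(x))=x$. The case that actually needs the chase is $y_i=x'_j$ with $f_{T'}(y_i)=x_{i-1}$, where eliminating $\sigma$ deletes $\{y_i,y'_{j+1}\}$. Then $\ell_{T'}(x_{i-1})=y_i$ forces $x_{i-1}$'s list to be $\{y_{i-1},y_i\}$. Hence $x'_{j-1}=y_{i-1}$ and $f_{T'}(y_{i-1})=x_{i-2}$. Propagating backwards around $\sigma$ then gives $\rho=\overline{\sigma}$.
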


\begin{theorem}
  Let $T = T_0/Z$.  Suppose $Z$ contains all the predecessors of $\rho$ in $\mc{R}(I)$ but not $\rho$.  Furthermore, if $\rho$ is a non-singular rotation, $\ol{\rho}$ is not in $Z$ too.  
  Then $\rho$ is exposed in $T$. 
\label{thmExposedRotation}
\end{theorem}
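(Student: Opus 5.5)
Write $\rho = (x_0,y_0),\ldots,(x_{r-1},y_{r-1})$. To show $\rho$ is exposed in $T$, I would check the definition directly: for each $i$, $f_T(x_i) = y_i$ and $s_T(x_i) = y_{i+1}$. The plan has two parts. First, show that every entry of $x_i$'s list strictly before $y_{i+1}$ other than $y_i$ has already been removed in $T$. Second, show that $y_i$ and $y_{i+1}$ themselves are still on $x_i$'s list in $T$.

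\emph{First part.} Let $z \neq y_i$ lie on $x_i$'s original list (or phase-1 list) before $y_{i+1}$. By Lemma~\ref{lemmaPred} and Corollary~\ref{corPred} there is a unique rotation $\sigma_{x_i,z}$ that is a predecessor of $\rho$ in $\mc{R}(I)$, and its elimination removes $z$ from $x_i$'s list. By hypothesis $\sigma_{x_i,z} \in Z$, so $z$ is absent from $x_i$'s list in $T = T_0/Z$. Hence, in $T$, the only entries of $x_i$'s list that can precede $y_{i+1}$ are $y_i$ and $y_{i+1}$.

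\emph{Second part.} Here I would use embedding. Since $\rho \in R(I)$, it is exposed in some table reachable from $T_0$, so $\rho$ is embedded in $T_0$. Now fix an elimination order $\langle \sigma_1,\ldots,\sigma_t\rangle$ of $Z$ producing $T$, and argue by induction along it using Lemma~\ref{lemmaEmbed}. If $\rho$ is embedded in the current table and $\sigma_j$ is exposed there, then $\rho$ stays embedded after eliminating $\sigma_j$ provided $\sigma_j \neq \rho$ and $\sigma_j \neq \ol{\rho}$. Both hold, because neither $\rho$ nor $\ol{\rho}$ (when $\rho$ is non-singular) lies in $Z$. If $\rho$ is singular, $\ol{\rho}$ is not a rotation, so the condition is automatic. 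Therefore $\rho$ is embedded in $T$: there is a subtable $T'$ of $T$ in which $\rho$ is exposed. In $T'$, $x_i$'s list starts $y_i, y_{i+1}$. Because $T'$'s lists are sublists of $T$'s, both $y_i$ and $y_{i+1}$ appear on $x_i$'s list in $T$, with $y_i$ before $y_{i+1}$.

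Combining the two parts, $x_i$'s list in $T$ begins exactly $y_i, y_{i+1}$, so $\rho$ is exposed in $T$. The main obstacle is the second part. Lemma~\ref{lemmaEmbed} applies only to a rotation exposed in a \emph{table}, so I need the intermediate pseudotables to be genuine tables (nonempty lists), which holds since $I$ is solvable. I also need to confirm that "embedded in $T_0$" holds for every $\rho \in R(I)$. I would justify this by taking the table in which $\rho$ is exposed during some phase-2 execution, which is a subtable of $T_0$.
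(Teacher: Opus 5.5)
Your proposal is correct and matches the paper's proof. Both use the same two ingredients: Lemma~\ref{lemmaEmbed}, applied along an elimination sequence for $Z$, keeps $y_i$ and $y_{i+1}$ on $x_i$'s list in $T$, and Lemma~\ref{lemmaPred}/Corollary~\ref{corPred} show that every other entry before $y_{i+1}$ has been removed; you simply present the two steps in the opposite order and make explicit the side conditions the paper takes for granted (intermediate tables, $\rho$ embedded in $T_0$).
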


\begin{proof} Let $\rho = (x_0, y_0), (x_1, y_1), \hdots,$ $(x_{r-1}, y_{r-1})$.
  Suppose $\langle \sigma_1, \sigma_2, \hdots, \sigma_k \rangle$ is a sequence of rotations eliminated from $T_0$ to obtain $T$.  Hence, $Z = \{\sigma_1, \sigma_2, \hdots, \sigma_k\}$.  Let $T_i = T_{i-1}/\sigma_i$ for $i = 1, 2, \hdots, k$.  Since $\rho$ is a rotation of $I$, $\rho$ is embedded in $T_0$.  Furthermore, since none of the elements in $Z$ are equal to $\rho$ nor $\ol{\rho}$ (if $\rho$ is non-singular), it follows from Lemma \ref{lemmaEmbed} that 
  $\rho$ is embedded in $T_1, T_2, \hdots, T_{k}$.  Since $T_{k} = T$, this means that for each $x_i$, the agents $y_i$ and $y_{i+1}$ are still on $x_i$'s list in $T$.

   Now, all the predecessors of $\rho$ are in $Z$, including the ones mentioned in Corollary \ref{corPred}. Since they were eliminated from $T_0$ to obtain $T$,  it follows that for each $x_i$, $f_T(x_i) = y_i$ and $s_T(x_i) = y_{i+1}$.  Thus,  $\rho$ is exposed in $T$.  
\end{proof}

\begin{corollary}
    Let $\mu$ be a stable matching of $I$ such that $\mu = T_0/Z$. Let $\mc{R}_Z$ be the subposet of  $\mc{R}(I)$ induced by the rotations in $Z$.  Then any linear extension (or topological ordering) $\Pi$ of $\mc{R}_Z$ is a  sequence of rotations that can be eliminated from $T_0$ to produce $\mu$.
\label{corLinearExtension}
\end{corollary}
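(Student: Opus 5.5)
The plan is induction on the position in $\Pi$, using Theorem~\ref{thmExposedRotation} at each step. Write $\Pi = \langle \sigma_1, \sigma_2, \hdots, \sigma_k\rangle$ as a linear extension of $\mc{R}_Z$. Set $Z_0 = \emptyset$ and $Z_j = \{\sigma_1, \hdots, \sigma_j\}$, with $T_j = T_0/Z_j$. I will show by induction on $j$ that $\sigma_{j+1}$ is exposed in $T_j$. Then $T_{j+1} = T_j/\sigma_{j+1}$ is a well-defined table, and eliminating all of $\Pi$ in order is a legitimate execution of phase 2.

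For the inductive step, I need to verify the hypotheses of Theorem~\ref{thmExposedRotation} for $\rho = \sigma_{j+1}$ and $Z = Z_j$.

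\textbf{Predecessors.} By Theorem~\ref{mainthm3-Gusfield}, $Z$ is a closed subset of $\mc{R}(I)$, so every predecessor of $\sigma_{j+1}$ in $\mc{R}(I)$ lies in $Z$. Because $\Pi$ is a linear extension of $\mc{R}_Z$, each such predecessor appears before $\sigma_{j+1}$ in $\Pi$, so it lies in $Z_j$.

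\textbf{$\rho$ and its dual.} $\sigma_{j+1} \notin Z_j$ because the entries of $\Pi$ are distinct. If $\sigma_{j+1}$ is non-singular, then $\ol{\sigma_{j+1}} \notin Z$, since by Theorem~\ref{app:thmZcontent} $Z$ contains exactly one of each dual pair. In particular $\ol{\sigma_{j+1}} \notin Z_j$.

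One must also know that $T_j$ is a genuine table with $T_j = T_0/Z_j$. This follows from the induction itself, since each step eliminates an exposed rotation. I also need that no list becomes empty. This holds because $I$ is solvable: by the discussion of Irving's algorithm, lists never become empty during a phase-2 execution, and the sequence we are building is such an execution.

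After all $k$ eliminations we have the table $T_0/Z$. By the notation of the appendix, this table corresponds to $\mu$, since $\mu = T_0/Z$ and the outcome depends only on the set of eliminated rotations. Equivalently, Theorem~\ref{mainthm3-Gusfield} identifies the result with $\mu$. Note also that Theorem~\ref{thmExposedRotation} is stated with $T = T_0/Z$ for an arbitrary set $Z$ reachable by eliminations, and $Z_j$ is reachable by the induction.

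The main obstacle is the bookkeeping in applying Theorem~\ref{thmExposedRotation}. That theorem requires all predecessors in the whole poset $\mc{R}(I)$, not merely in $\mc{R}_Z$, and this is exactly where closedness of $Z$ is used. The second point needing care is that the dual of $\sigma_{j+1}$ is absent from $Z_j$, which comes from completeness of $Z$.
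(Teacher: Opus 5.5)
Your proof is correct and follows the paper's argument essentially step for step. You induct along $\Pi$, use closedness of $Z$ in $\mc{R}(I)$ together with $\Pi$ being a linear extension to place every predecessor of $\sigma_{j+1}$ in $Z_j$, use completeness of $Z$ to exclude $\ol{\sigma_{j+1}}$, and invoke Theorem~\ref{thmExposedRotation} to get exposure, concluding that the final table is $T_0/Z=\mu$; your remark that lists never become empty for a solvable instance is a point the paper leaves implicit.
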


\begin{proof}
     By assumption, $\mu = T_0/Z$ so $Z$ is a closed subset of $\mc{R}(I)$ that contains every singular rotation and exactly one element from each dual pair of rotations by Theorem \ref{mainthm3-Gusfield}. 
    Let $\Pi = \langle \sigma_1, \sigma_2, \hdots, \sigma_k \rangle$ be a linear extension of $\mc{R}_Z$.   That $Z$ is a closed subset implies every predecessor of $\sigma_i$ in $\mc{R}(I)$ is in the set $Z_{i-1} = \{\sigma_1, \sigma_2, \hdots, \sigma_{i-1}\}$.  That $Z$ contains exactly one element from each dual pair of rotations of $\mc{R}(I)$ implies that when $\sigma_i$ is non-singular, $\overline{\sigma_i}$ is not in $Z_i$.  By Theorem \ref{thmExposedRotation}, if $T_{i-1}$ is a table so $T_{i-1} = T_0/Z_{i-1}$, the rotation $\sigma_i$ is exposed in $T_{i-1}$. It immediately follows that $T_i = T_{i-1}/\sigma_i$ is also a table.

     But $T_0$ is a table. By the argument above, this implies $\sigma_1$ is exposed in $T_0$ so $T_1$ is a table. Applying this reasoning over and over again, it follows that $\Pi$ is a valid sequence of rotations that can be eliminated from $T_0$.  Since $\Pi$ contains exactly the elements of $Z$, the final table corresponds to $\mu$. 
    \end{proof}

Denote a pair of agents $\{x, y\}$ a {\it stable pair} of $I$ if   $\{x, y\} \in \mu$ for some stable matching $\mu$ of $I$.  It is a {\it fixed pair} if it is in {\it every} stable matching of $I$.  The next lemma describes the relationship between the pairs that can be found in the rotations  and the stable pairs of $I$.

\begin{lemma}
(Lemma 4.4.1 in \cite{Gusfield1989Stable}.) In SR instance $I$, $\{x, y\}$ is a stable pair but not a fixed pair if and only if the pair $(x, y)$ or the pair $(y,x)$ is in a non-singular rotation of $I$.
\label{lemmaStablePartners}
\end{lemma}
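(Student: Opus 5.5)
The plan is to prove both directions using the median-graph structure $G(I)$ together with Lemma~\ref{lemmaDiff}. That lemma tells us exactly which pairs change when we move along an edge of $G(I)$: if $S_{\mu'}-S_\mu=\{\rho\}$ with $\rho=(x_0,y_0),\dots,(x_{r-1},y_{r-1})$, then $\mu$ contains the pairs $\{x_i,y_i\}$ and $\mu'$ replaces them by the pairs $\{x_i,y_{i+1}\}$.

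\emph{($\Rightarrow$).} Suppose $\{x,y\}$ is stable but not fixed. Then some stable matching $\mu$ contains $\{x,y\}$ and some stable matching $\nu$ does not. By Corollary~\ref{corSRMeta}, $G(I)$ is connected, so there is a path $\mu=\mu_0,\mu_1,\dots,\mu_\ell=\nu$ in $G(I)$. Take the first index $j$ with $\{x,y\}\in\mu_{j-1}$ and $\{x,y\}\notin\mu_j$, and let $\rho$ be the non-singular rotation with $S_{\mu_j}-S_{\mu_{j-1}}=\{\rho\}$. By Lemma~\ref{lemmaDiff}, the only pairs of $\mu_{j-1}$ that are destroyed in passing to $\mu_j$ are the pairs $\{x_i,y_i\}$ of $\rho$. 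Hence $\{x,y\}=\{x_i,y_i\}$ for some $i$, so $(x,y)$ or $(y,x)$ lies in $\rho$, which is non-singular because it belongs to $\mc{R}'(I)$.

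\emph{($\Leftarrow$).} Suppose $(x,y)=(x_i,y_i)$ lies in a non-singular rotation $\rho$; the case $(y,x)$ is symmetric. It suffices to produce two stable matchings $\mu,\mu'$, adjacent in $G(I)$, with $S_{\mu'}-S_\mu=\{\rho\}$. Lemma~\ref{lemmaDiff} then shows that $\{x,y\}\in\mu$ and $\{x,y\}\notin\mu'$, so the pair is stable and not fixed.

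To build them, let $\mathrm{Succ}(\rho)$ denote the successors of $\rho$ in $\mc{R}'(I)$, and set
\[
U=\{\rho\}\cup\Pred(\rho)\cup\{\ol{\sigma}:\sigma\in\mathrm{Succ}(\rho)\}.
\]
I will check three facts about $U$.
\begin{enumerate}
\item $U$ is closed. The first two parts are handled as in Corollary~\ref{cor:pred}. For the third part, if $\tau<\ol{\sigma}$ with $\sigma>\rho$, then $\ol{\tau}>\sigma>\rho$ by the mirror property, so $\tau$ is again the dual of a successor of $\rho$.
\item $U$ is partially complete. A clash between $\Pred(\rho)$ and the duals of successors would force $\alpha<\rho<\alpha$. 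A clash inside $\Pred(\rho)$ is excluded exactly as in Corollary~\ref{cor:pred}. A clash inside the duals of successors would give $\rho<\alpha$ and $\rho<\ol{\alpha}$, hence $\ol{\alpha}<\ol{\rho}$ and so $\rho<\ol{\rho}$, which is impossible.
\item $\ol{\rho}\notin U$, by the same incomparability argument.
\end{enumerate}

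By Lemma~\ref{lemmapartial}, $U$ extends to a complete closed subset $S'$. Because $S'$ contains the dual of every successor of $\rho$, it contains no successor of $\rho$. Because every predecessor of $\ol{\rho}$ is the dual of a successor of $\rho$, all predecessors of $\ol{\rho}$ lie in $S'$. Hence $S=S'-\{\rho\}\cup\{\ol{\rho}\}$ is again complete and closed.

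Let $\mu,\mu'$ be the stable matchings with $S_\mu=S$ and $S_{\mu'}=S'$, which exist by Theorem~\ref{mainthm3}. They are adjacent in $G(I)$ with $S_{\mu'}-S_\mu=\{\rho\}$, as required.

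The main obstacle is this construction of an adjacent pair in which $\rho$ is the flipped rotation, specifically verifying that $U$ is partially complete. I expect that to go through using only the mirror axioms, as sketched above. The forward direction is essentially immediate once Lemma~\ref{lemmaDiff} is available.
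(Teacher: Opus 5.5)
Your proof is correct, but it takes a different route from the paper. The paper gives no argument for this lemma and simply imports Lemma~4.4.1 from Gusfield and Irving. That proof predates the median-graph picture of Cheng and Lin and works at the level of Irving's tables and rotation eliminations. You instead derive the statement from the paper's own structural results. The forward direction needs only connectivity of $G(I)$ (Corollary~\ref{corSRMeta}) and the exchange formula of Lemma~\ref{lemmaDiff}. The backward direction reduces to a purely order-theoretic fact: every non-singular rotation $\rho$ labels some edge of $G(I)$. You prove this from the mirror axioms and Lemma~\ref{lemmapartial}, by extending $U=\{\rho\}\cup\Pred(\rho)\cup\{\overline{\sigma}:\sigma>\rho\}$ to a complete closed set $S'$ and swapping $\rho$ for $\overline{\rho}$. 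Your checks are all sound: $U$ is closed and partially complete, $S'$ contains no successor of $\rho$, and every predecessor of $\overline{\rho}$ lies in $S'$. There is no circularity, because the appendix proof of Lemma~\ref{lemmaDiff} rests on Theorem~\ref{mainthm3-Gusfield}, Corollary~\ref{corLinearExtension} and Lemma~\ref{lemmatwins}; only Corollary~\ref{corFindMatching} depends on the present lemma. Your route buys self-containment within the paper and a reusable structural statement about $\mathcal{R}'(I)$. It costs heavier machinery: Lemma~\ref{lemmaDiff} is itself proved via tables, and you also need Theorem~\ref{mainthm3} and the Cheng--Lin extension lemma, whereas the cited argument is more direct. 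One step should be stated explicitly. The facts $\{x_i,y_i\}\in\mu$ and $\{x_i,y_i\}\notin\mu'$ are what the proof of Lemma~\ref{lemmaDiff} establishes, beyond its displayed formula. The second also uses $y_{i+1}\neq y_i$, which holds because these are the first and second entries of $x_i$'s list in the table where $\rho$ is exposed.
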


In Section 6.2.2 of \cite{Gusfield1988Structure},  Gusfield sketches a procedure for computing the stable matching that corresponds to a closed subset of $\mc{R}(I)$ that contains all the singular rotations and one element of each dual pair of rotations in $O(n^2)$ time.  Corollary \ref{corFindMatching} states that a similar result holds even when only the non-singular rotations are given. 


 \noindent {\bf Corollary \ref{corFindMatching}.}
{\it Let $I$ be a solvable SR instance with $2n$ agents and let  $\mc{R}'(I)$ its reduced rotation poset.  Given a complete closed subset $S$ of $\mc{R}'(I)$,  the stable matching of $I$ that corresponds to $S$ can be obtained in $O(n^2)$  time. }

\begin{proof}
Assume $\mu$ is the stable matching that corresponds to $S$, and $\Pi$ is a sequence of rotations whose elimination from $T_0$ results in $\mu$.  For each agent $a$,  let $\mc{X}_a$ and $\mc{Y}_a$  contain the  rotations in $S$ so that $a$ is in their $X$-sets and $Y$-sets  respectively.

Suppose $\mc{X}_a \neq \emptyset$.   Let $\rho_1, \rho_2, \hdots, \rho_r \in \mc{X}_a$.  For $i = 1, \hdots, r$, let $(a, b_i) \in \rho_i$.  Without loss of generality, assume the $b_i$'s are ordered from $a$'s most preferred  to least preferred. By Corollary \ref{corPred}, $\rho_1, \rho_2, \hdots, \rho_{r-1}$ are predecessors of $\rho_r$ in $\mc{R}(I)$.  Thus, in $\Pi$, $\rho_r$ is the last rotation in $\mc{X}_a$ that is eliminated.  Let $T$ be the table just before $\rho_r$ is eliminated, and let $b_{r+1} = s_T(a)$.   Then in $T/\rho_r$, the first person on $a$'s list is $b_{r+1}$.  Moreover, as the rotations after $\rho_r$ in $\Pi$ are eliminated, $b_{r+1}$ is never removed from $a$'s list because none of the remaining rotations have $a$ is their $X$-sets.  Thus, $(a, b_{r+1}) \in \mu$.  We emphasize that $b_{r+1}$ can be obtained from $\rho_r$ itself; if $(a, b_r) = (x_i, y_i)$ in $\rho_r$, then $b_{r+1} = y_{i+1}$.   

On the other hand, suppose $\mc{Y}_a \neq \emptyset$ and $\sigma \in\mc{Y}_a$. Then $a$ in the $X$-set of $\ol{\sigma}$.  That is,  $\mc{X}_a \neq \emptyset$ so we can determine $a$'s partner in $\mu$ as outlined above.   Thus, we just have to consider agents $a$ such that $\mc{X}_a = \mc{Y}_a = \emptyset$. According to Lemma \ref{lemmaStablePartners}, for such an agent $a$, $(a,b) \in \mu$ if and only if $(a,b)$ is in {\it every} stable matching of $I$.  Run Irving's algorithm to produce some stable matching $\eta$.  Then $a$'s partners in $\mu$ and $\eta$ are the same.

There are $O(n^2)$ total pairs in the rotations in $S$.   By examining them, we can 
simultaneously determine the $\mu$-partners of all the agents that appear in some rotation in $S$ 
using the procedure above. This can be done in $O(n^2)$ time.
 For agents that are part of a fixed pair, run Irving's algorithm to determine their partners. This takes $O(n^2)$ time.  Thus, $\mu$ can be computed in $O(n^2)$ time. 
\end{proof}

\begin{lemma}
(Lemma 4.3.4 in \cite{Gusfield1989Stable}.)  If both rotations
\[
\rho = (x_0, y_0), (x_1, y_1), \hdots, (x_{r-1}, y_{r-1})
\]
and
\[
\overline{\rho} = (y_1, x_0), (y_2, x_1), \hdots, (y_0, x_{r-1})
\]
are exposed in the table $T$, then for $i = 0, \hdots, r-1$, the list of $x_i$ in $T$ contains only $y_i$ and $y_{i+1}$, and the list of $y_i$ in $T$ contains only $x_{i-1}$ and $x_i$.  The elimination of $\rho$ or $\overline{\rho}$ from $T$ reduces the list of each $x_i$ and each $y_i$ to a single entry, but affects no other list.
\label{lemmatwins}
\end{lemma}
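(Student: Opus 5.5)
The plan is to read off the lists of the $x_i$ and $y_i$ directly from the two exposure conditions together with table property (ii), and then to apply the definition of elimination to these very short lists.

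\textbf{Step 1 (what exposure gives).} Since $\rho$ is exposed in $T$, we have $f_T(x_i) = y_i$ and $s_T(x_i) = y_{i+1}$ for every $i$. The dual $\overline{\rho}$ consists of the pairs $(y_{i+1}, x_i)$. Its exposure therefore says $f_T(y_{i+1}) = x_i$ and $s_T(y_{i+1}) = x_{i+1}$. Re-indexing, this becomes $f_T(y_i) = x_{i-1}$ and $s_T(y_i) = x_i$, with indices taken mod $r$.

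\textbf{Step 2 (the lists have exactly two entries).} By table property (ii), $f_T(x_i) = y_i$ forces $\ell_T(y_i) = x_i$. The list of $y_i$ then starts with $x_{i-1}$, has second entry $x_i$, and ends with $x_i$. Hence it is exactly $x_{i-1}, x_i$.

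Symmetrically, $f_T(y_{i+1}) = x_i$ forces $\ell_T(x_i) = y_{i+1}$. Since $s_T(x_i) = y_{i+1}$, the list of $x_i$ is exactly $y_i, y_{i+1}$. This gives the first assertion.

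\textbf{Step 3 (eliminating $\rho$).} Eliminating $\rho$ deletes every pair $\{y_i, z\}$ such that $y_i$ prefers $x_{i-1}$ to $z$. By Step 2, the only such $z$ still present in $T$ is $x_i$. So exactly the pairs $\{x_i, y_i\}$ are deleted.

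As a result, each $x_i$'s list becomes $y_{i+1}$ and each $y_i$'s list becomes $x_{i-1}$, a single entry each. A deletion of $\{a,b\}$ only changes the lists of $a$ and $b$, and every deleted pair has both endpoints among the $x$'s and $y$'s. Hence no other list changes.

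\textbf{Step 4 (eliminating $\overline{\rho}$).} The case of $\overline{\rho}$ is the same argument with the roles swapped. Its $Y$-set is $\{x_i\}$, and the predecessor of $x_i$ in the cyclic order of $\overline{\rho}$ is $y_i$. So the deleted pairs are $\{x_i, z\}$ with $x_i$ preferring $y_i$ to $z$, which by Step 2 means $z = y_{i+1}$.

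After this, $x_i$ keeps only $y_i$ and $y_i$ keeps only $x_i$, and again nothing else is touched.

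\textbf{Main obstacle.} The only delicate point is Step 2. There I must check that the first, second and last entries really pin the list down to two distinct entries, i.e.\ that $x_{i-1} \neq x_i$ and $y_i \neq y_{i+1}$. This holds because each of these is a first/second-entry pair in a list, and those entries are distinct by definition.

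I would also remark that an agent lying in both the $X$-set and the $Y$-set causes no trouble. Its list is then determined by both descriptions, and the deletions in Steps 3 and 4 are computed pair by pair.
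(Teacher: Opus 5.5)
Your Steps 1 and 2 are correct, and this is the natural argument. The paper gives no proof of its own; it imports the lemma from Gusfield and Irving. The gap is in Steps 3 and 4, and it comes from your closing remark that an agent in both the $X$-set and the $Y$-set ``causes no trouble.'' That remark is false.

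To conclude that $x_i$'s list becomes exactly $y_{i+1}$ (and $y_i$'s exactly $x_{i-1}$), you need that no deleted pair $\{x_k,y_k\}$ removes the surviving entry. This amounts to the $X$-set and $Y$-set of $\rho$ being disjoint. Suppose instead that $a = x_i = y_j$. Apply your Step 2 to $a$ in both roles: its list is exactly $y_i, y_{i+1}$ and also exactly $x_{j-1}, x_j$, so $y_i = x_{j-1}$ and $y_{i+1} = x_j$. Eliminating $\rho$ then deletes both $\{x_i,y_i\} = \{a, y_i\}$ and $\{x_j,y_j\} = \{y_{i+1}, a\}$, so $a$'s list becomes empty. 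Eliminating $\overline{\rho}$ likewise deletes $\{a,y_{i+1}\}$ and $\{y_i,a\}$. So the lemma's conclusion genuinely fails in the overlap case, and ``computing the deletions pair by pair'' does not rescue it.

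What is missing is an argument that this case cannot occur. The quickest route is the paper's standing assumption that $I$ is solvable. Then eliminating an exposed rotation never produces an empty list; this is the proviso in the quoted Lemma 4.2.7 of Gusfield and Irving. It also holds where the lemma is actually used, in the proof of Lemma~\ref{lemmaDiff}, since there $T/\rho$ is the stable matching $\mu'$.

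Alternatively, the identities above propagate around the cycle to give $x_{i+m} = y_{j+m}$ for all $m$. This forces $\overline{\rho}$ to coincide with $\rho$ as a cyclic sequence. So overlap can only happen for a self-dual rotation, never for a genuine dual pair of distinct rotations.

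Once disjointness is established, your Steps 3 and 4 go through as written.
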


We are now ready to prove Lemma \ref{lemmaDiff}.

 \noindent {\bf Lemma \ref{lemmaDiff}.} {\it Let $\mu$ and $\mu'$ be two stable matchings of SR instance $I$ that are adjacent in $G(I)$. 
   Let $\rho = (x_0, y_0), (x_1, y_1), \hdots,$ $(x_{r-1}, y_{r-1})$ be the (non-singular) rotation so that $S_{\mu'} - S_\mu =  \{\rho\}$.  Then
\begin{center}     
     $\mu' = \mu - \{ \{x_0, y_0\}, \{x_1, y_1\}, \hdots, \{x_{r-1}, y_{r-1}\} \} \cup \{ \{x_0, y_1\}, \{x_1, y_2\}, \hdots, \{x_{r-1}, y_{0}\} \}.$
 \end{center}
 }

 \begin{proof}
Let $Z_0$ consist of the singular rotations in $\mc{R}(I)$.  
Let $Z_\mu = Z_0 \cup S_\mu$ and $Z_{\mu'} = Z_0 \cup S_{\mu'}$ 
so $\mu = T_0/Z_\mu$ and $\mu' = T_0/Z_{\mu'}$.
Both $S_\mu$ and $S_{\mu'}$ contain exactly one element of each dual pair of rotations of $I$.  Moreover, they differ by one element with $\rho \in S_{\mu'}$.  It follows that $\ol{\rho} \in S_{\mu}$ and $S_{\mu'} - \rho = S_\mu - \ol{\rho}$.  Consequently, $Z_{\mu'} - \rho = Z_\mu - \ol{\rho}$.


In the poset $\mc{R}_{Z_{\mu'}}$, the rotation $\rho$ must be a maximal element.  Otherwise, it precedes another rotation $\sigma$.  But $\sigma \in Z_{\mu}$ while $\rho \not \in Z_{\mu}$, contradicting the fact that $Z_{\mu}$ is a closed subset of $\mc{R}(I)$.  It follows that $\mc{R}_{Z_{\mu'}}$ has a linear extension $\Pi_{\mu'}$ which has $\rho$ as its maximum element.  Similarly,  $\mc{R}_{Z_{\mu}}$ has a linear extension $\Pi_{\mu}$ with $\ol{\rho}$ as its maximum element. 
By Corollary \ref{corLinearExtension},  both $\Pi_{\mu'}$ and $\Pi_{\mu}$ are valid sequences of rotations that can be eliminated from $T_0$.  

Starting at $T_0$, eliminate the rotations in $\Pi_{\mu'}$. Denote as $T$ the table just before $\rho$ is eliminated. Notice that $T$ is the same table that is obtained when all the rotations in $\Pi_{\mu}$ except for $\ol{\rho}$ are eliminated from $T_0$. Thus, both $\rho$ and $\ol{\rho}$ are exposed in $T$. 

From Lemma \ref{lemmatwins}, the list of $x_i$ in $T$ contains $y_i, y_{i+1}$ while the list of $y_i$ contains $x_{i-1}, x_i$, for $i = 0, \hdots, r-1$.  Eliminating $\rho$ from $T$ results in $\mu'$;   it also deletes $x_i$ and $y_i$ from each other's lists.   
 Thus,  $\{\{x_i, y_{i+1}\}, i = 0, \hdots, r-1\} \subseteq \mu'$.   On the other hand, $\overline{\rho} = (y_1, x_0), (y_2, x_1), \hdots, (y_0, x_{r-1})$.  Applying the same reasoning, eliminating  $\overline{\rho}$ from $T$ results in $\mu$ and $\{\{y_i, x_i \},  i = 0, \hdots, r-1\} \subseteq \mu$.   Now, for agents $v \not \in \{x_0, x_1, \hdots, x_{r-1}, y_0, y_1, \hdots, y_{r-1}\}$, the elimination of $\rho$ nor $\overline{\rho}$ did not affect $v$'s list so it must be the case that $v$'s list consists of just one entry and it is $v$'s partner in both $\mu$ and $\mu'$.  Thus, \[\mu' = \mu -   \{\{y_i, x_i \},  i = 0, \hdots, r-1\}  \cup \{\{x_i, y_{i+1}\}, i = 0, \hdots, r-1\}.\]
\end{proof}
     


%% file: mvc-hardness.tex
\section{Hardness of \texorpdfstring{$3$}{3}-Regular MVC}

In this appendix, we provide a proof of the NP-hardness of minimum vertex cover on $3$-regular graphs (Fact~\ref{fact:mvc}). This fact appears to be well-known (cf.~\cite{Garey1979Computers}), yet we are not aware of any written proof of the fact in the literature. For completeness, we present a proof here. Our starting point is the following theorem due to Garey, Johnson, and Stockmeyer~\cite{Garey1976Some}

\begin{theorem}[\cite{Garey1976Some}]\label{thm:bounded-mvc}
	Minimum Vertex Cover is NP-hard when restricted to graphs $G = (V, E)$ where each vertex has maximum degree at most $3$.
\end{theorem}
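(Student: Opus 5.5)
We reduce from Minimum Vertex Cover on general graphs, which is NP-hard by Karp~\cite{Karp2010Reducibility}. The idea is to replace every vertex by a path gadget that spreads its incident edges over distinct gadget vertices, so that every vertex of the new graph has degree at most $3$. The gadget is designed so that putting the original vertex into the cover costs exactly one more gadget vertex than leaving it out.

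\textbf{Construction.} Let $G=(V,E)$ be given, and write $d_v$ for the degree of $v$. We may assume there are no isolated vertices, since these can be deleted. For each $v$ we build a path $p^v_0,p^v_1,\ldots,p^v_{2d_v}$ on $2d_v+1$ vertices. The $d_v$ edges incident to $v$ are assigned bijectively to the even vertices $p^v_0,p^v_2,\ldots,p^v_{2d_v-2}$, which we call the \emph{attachment points} of $v$. For each edge $\{u,v\}\in E$ we add an edge between the attachment point of $u$ assigned to $\{u,v\}$ and the attachment point of $v$ assigned to $\{u,v\}$. In the resulting graph $G'$, every attachment point has degree at most $2+1=3$ and every other vertex has degree at most $2$. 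The construction is clearly polynomial. Our claim is that $G$ has a vertex cover of size at most $k$ if and only if $G'$ has one of size at most $k+\sum_v d_v = k+2|E|$.

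\textbf{Forward direction.} Let $A$ be a vertex cover of $G$. For $v\in A$ we take all $d_v+1$ even vertices of $v$'s path, which cover the path and all attachment points of $v$. For $v\notin A$ we take the $d_v$ odd vertices $p^v_1,\ldots,p^v_{2d_v-1}$, which cover the path. Each cross edge $\{u,v\}$ has an endpoint in $A$, and that endpoint's attachment point has been chosen, so the cross edge is covered. The total size is $|A|+\sum_v d_v$.

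\textbf{Backward direction.} This is the step needing care. Let $C'$ be a vertex cover of $G'$. The path of $v$ has $2d_v$ edges and contains a matching of size $d_v$, so $|C'\cap \text{path}_v|\ge d_v$. Set $S=\{v : |C'\cap\text{path}_v|\ge d_v+1\}$; then $|S|\le |C'|-2|E|$.

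The key fact is that a path with $2d$ edges has a \emph{unique} vertex cover of size $d$, namely the odd vertices. To see this, note that $p_0$ must be covered through $p_1$ unless $p_0\in C'$. An induction from the left end then shows that any cover of size exactly $d$ contains the odd vertices only, since including any even vertex forces the count to exceed $d$.

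Now take $v\notin S$. Its gadget contributes exactly the odd vertices, so none of $v$'s attachment points lies in $C'$. Consequently every cross edge $\{u,v\}$ must be covered by $u$'s attachment point. But if $u\notin S$ as well, $u$'s attachment points are also all absent from $C'$, so the cross edge would be uncovered. Hence $u\in S$, so $S$ is a vertex cover of $G$ of size at most $|C'|-2|E|$. This gives the equivalence, and with it NP-hardness on graphs of maximum degree $3$.
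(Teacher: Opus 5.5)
Your reduction is correct. The paper gives no proof of this statement. It imports the result as a black box from Garey, Johnson, and Stockmeyer~\cite{Garey1976Some}, and uses it only as the starting point for the gadget reductions in Lemma~\ref{lem:bounded-mvc} and Fact~\ref{fact:mvc}.

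Your argument supplies a self-contained proof from Karp's unrestricted NP-hardness result~\cite{Karp2010Reducibility}. It follows the same pattern as those appendix reductions: a local gadget per vertex, a per-gadget lower bound on the cover size, and an exact additive offset (here $\sum_v d_v = 2|E|$). The difference is that your gadget spreads out high degree, whereas the paper's gadgets pad low degree.

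The crux, which you handle correctly, is that the path $p^v_0,\dots,p^v_{2d_v}$ has an even number of edges. Its minimum cover is therefore unique and contains no attachment point. This is why the unattached endpoint $p^v_{2d_v}$ is essential: on a path with $2d_v$ vertices, the all-even set is also a minimum cover, and the equivalence fails outright.

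Your uniqueness sketch can be tightened to one line. If an even vertex $p_{2i}$ is in the cover, the two sides of $p_{2i}$ contain disjoint matchings of sizes $i$ and $d-i$, forcing at least $d+1$ vertices. If no even vertex is in the cover, every odd vertex is forced.

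Citing~\cite{Garey1976Some}, as the paper does, buys brevity. Your route makes the chain leading to Fact~\ref{fact:mvc} depend only on~\cite{Karp2010Reducibility}.
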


Our proof of Fact~\ref{fact:mvc} is a two-step reduction from Theorem~\ref{thm:bounded-mvc}. We first show that MVC is hard on graphs where all vertices have degree $2$ or $3$. We then use this intermediate result to show that MVC is NP-hard on $3$-regular graphs. The basic idea for both steps is the same: define a gadget graph $H$ such that replacing each vertex of a graph $G$ with degree $1$ (or $2$) with a copy of $H$ results in a graph with no degree-1 (or degree-2) vertices such that a minimum vertex cover of $G$ can be computed from a minimum vertex cover of the resulting graph $G'$.

\begin{lemma}\label{lem:bounded-mvc}
	MVC on graphs $G = (V, E)$ with maximum degree $3$ reduces to MVC on graphs with maximum degree $3$ and minimum degree $2$.
\end{lemma}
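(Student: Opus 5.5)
The plan is a local gadget replacement. Given $G=(V,E)$ with maximum degree $3$, first delete every isolated vertex; this does not change the minimum vertex cover size, since no minimum cover contains an isolated vertex. Let $D_1 \subseteq V$ be the set of degree-$1$ vertices. For each $v \in D_1$, attach a triangle gadget: add three new vertices $a_v, b_v, c_v$ and the edges $\{v,a_v\}$, $\{a_v,b_v\}$, $\{b_v,c_v\}$, $\{c_v,a_v\}$. Call the resulting graph $G'$. The construction is clearly polynomial (linear) in the size of $G$.

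\textbf{Degree check.} Each $v\in D_1$ now has degree $2$, and each $a_v$ has degree $3$. Each $b_v$ and $c_v$ has degree $2$. All other vertices of $G$ keep their degree, which lies in $\{2,3\}$ after isolated vertices are removed. Hence $G'$ has maximum degree $3$ and minimum degree $2$.

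\textbf{Cover sizes.} The core claim is that $\tau(G') = \tau(G) + 2|D_1|$, where $\tau$ denotes the minimum vertex cover size, and that a minimum cover of $G$ is computable from a minimum cover of $G'$.
\begin{itemize}
\item[($\le$)] If $A$ is a vertex cover of $G$, then $A \cup \{a_v, b_v : v \in D_1\}$ is a vertex cover of $G'$. The pair $\{a_v,b_v\}$ covers all three triangle edges as well as the pendant edge $\{v,a_v\}$.
\item[($\ge$)] Let $A'$ be any vertex cover of $G'$. It must contain at least two vertices of each triangle $\{a_v,b_v,c_v\}$, since a triangle has vertex cover number $2$. Moreover, $A' \cap V$ covers every edge of $G$, because every edge of $G$ is still an edge of $G'$ with both endpoints in $V$. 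Therefore $|A'| \ge |A'\cap V| + 2|D_1| \ge \tau(G) + 2|D_1|$.
\end{itemize}
Taking $A'$ to be a minimum cover of $G'$ then gives $|A'\cap V| \le \tau(G') - 2|D_1| = \tau(G)$. Combined with the fact that $A'\cap V$ covers $G$, this shows that $A'\cap V$ is a minimum vertex cover of $G$, which completes the reduction.

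\textbf{Main obstacle.} The only delicate point is designing a gadget that simultaneously raises the degree of $v$ to exactly $2$ (never above $3$), creates no new degree-$1$ vertices, and contributes a fixed additive amount to $\tau$ regardless of whether $v$ is in the cover. The triangle achieves the last property because it always needs exactly two cover vertices, and one of those two can always be chosen to be $a_v$. This covers the pendant edge $\{v,a_v\}$ for free, so that edge imposes no constraint on $v$.
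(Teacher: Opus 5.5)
Your proof is correct and follows essentially the same route as the paper. Both constructions attach to each degree-$1$ vertex a triangle-containing gadget that always costs exactly two extra cover vertices, and both recover a minimum cover of $G$ as $A'\cap V$; the paper uses $K_4$ minus an edge with $v$ itself as one of the gadget's vertices, which raises $\deg v$ to $3$ rather than $2$. Your explicit deletion of isolated vertices also fixes a small oversight in the paper: its construction leaves degree-$0$ vertices untouched even though it claims the resulting graph has minimum degree $2$.
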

\begin{proof}
	Let $G = (V, E)$ be a graph with maximum degree $3$, and let $v \in V$. Define the gadget graph $H_v = (\set{v_n, v_s, v, v_w}, E_v)$ where the edges form a $4$-cycle $v_n, v_w, v_s, v$ together with a vertical edge $\set{v_n, v_s}$ depicted here:
	\begin{center}
		\begin{tikzpicture}[scale=1.2,every node/.style={circle, draw, fill=white, inner sep=2pt, minimum size=18pt}]
			\node (vn) at (0,1) {$v_n$};
			\node (vs) at (0,-1) {$v_s$};
			\node (ve) at (1,0) {$v$};
			\node (vw) at (-1,0) {$v_w$};
			\draw (vn) -- (ve) -- (vs) -- (vw) -- (vn) -- (vs);
		\end{tikzpicture}
	\end{center}
	Observe that $H_v$ has maximum degree $3$ and minimum degree $2$.

	Let $G' = (V', E')$ be the graph obtained from $G$ by replacing each vertex $v$ with $\deg(v) = 1$ with a copy of $H_v$. That is if $v \in V$ with $\deg(v) = 1$, and $\set{v, w} \in E$, then $G'$ will contain the vertices $\set{v_n, v_s, v, v_w}$ together with the edge $\set{v, w}$:
	\begin{center}
		\begin{tikzpicture}[scale=1.2,every node/.style={circle, draw, fill=white, inner sep=2pt, minimum size=18pt}]
			\node (v) at (-2,0) {$v$};
			\node (w) at (-1,0) {$w$};
			\draw (v) -- (w);
			\node[draw=none] at (0,0) {$\longrightarrow$};
			\node (vn) at (2,1) {$v_n$};
			\node (vs) at (2,-1) {$v_s$};
			\node (ve) at (3,0) {$v$};
			\node (vw) at (1,0) {$v_w$};
			\node (w2) at (4,0) {$w$};
			\draw (vn) -- (ve) -- (vs) -- (vw) -- (vn) -- (vs);
			\draw (ve) -- (w2);
		\end{tikzpicture}
	\end{center}
	We first observe that the resulting graph $G'$ has minimum degree $2$ and maximum degree $3$. Suppose $G$ contained $g$ degree 1 vertices, so that $G'$ contains $g$ corresponding gadgets $H_v$.

	We claim that $G$ has a vertex cover of size at most $k$ if and only if $G'$ has a vertex cover of size at most $k + 2g$. To see this, first suppose $G$ has a vertex cover $A$ of size $k$. Let $V_1 \subseteq V$ denote the vertices in $V$ with degree $1$ in $G$, so that each $v \in V_1$ corresponds to a gadget $H_v$ in $G'$. Form the set $A'$ as
	\begin{equation}
		A' = A \cup \set{v_n, v_s \sucht v \in V_1}.
	\end{equation}
	It is clear that $A'$ is a vertex cover of size $k + 2 g$. Now suppose $A'$ is a vertex cover of $G'$ of size $k + 2 g$, and define $A = A' \cap V$. It is clear that $A$ is a vertex cover of $G$. Further, $\abs{A} \leq k$, because any vertex cover of a gadget $H_v$ must contain at least two of $\set{v_n, v_s, v_w}$.

	By the claim above, a minimum vertex cover $A'$ of $G'$ gives a minimum vertex cover $V \cap A'$ of $G$, and the lemma follows.
\end{proof}

Using Lemma~\ref{lem:bounded-mvc}, we can prove Fact~\ref{fact:mvc}.

\begin{proof}[Proof of Fact~\ref{fact:mvc}]
	By Lemma~\ref{lem:bounded-mvc}, it suffices to show that MVC on graphs with maximum degree $3$ and minimum degree $2$ reduces to MVC on $3$-regular graphs. To this end, suppose $G = (V, E)$ is a graph with maximum degree $3$ and minimum degree $2$. Let $V_2 = \set{v \in V \sucht \deg(v) = 2}$ denote the set of vertices with degree $2$. For each $v \in V$, define the gadget $H_v = (\set{v_n, v_s, v_e, v_w}, E_v)$ to be the $4$-cycle with a vertical edge as before:
	\begin{center}
		\begin{tikzpicture}[scale=1.2,every node/.style={circle, draw, fill=white, inner sep=2pt, minimum size=18pt}]
			\node (vn) at (0,1) {$v_n$};
			\node (vs) at (0,-1) {$v_s$};
			\node (ve) at (1,0) {$v_e$};
			\node (vw) at (-1,0) {$v_w$};
			\draw (vn) -- (ve) -- (vs) -- (vw) -- (vn) -- (vs);
		\end{tikzpicture}
	\end{center}
	Form the graph $G'$ by replacing each $v \in V_2$ with $H_v$ as follows: if $v$ has neighbors $u$ and $w$, then replace $v$ with $V_v$, add the edges $E_v$ along with edges $\set{u, v_w}$ and $\set{v_e, w}$:
	\begin{center}
		\begin{tikzpicture}[scale=1.2,every node/.style={circle, draw, fill=white, inner sep=2pt, minimum size=18pt}]
			\node (u) at (-3,0) {$u$};
			\node (v) at (-2,0) {$v$};
			\node (w) at (-1,0) {$w$};
			\draw (u) -- (v);
			\draw (v) -- (w);
			\node[draw=none] at (0,0) {$\longrightarrow$};
			\node (u2) at (1,0) {$u$};
			\node (vn) at (3,1) {$v_n$};
			\node (vs) at (3,-1) {$v_s$};
			\node (ve) at (4,0) {$v_e$};
			\node (vw) at (2,0) {$v_w$};
			\node (w2) at (5,0) {$w$};
			\draw (vn) -- (ve) -- (vs) -- (vw) -- (vn) -- (vs);
			\draw (ve) -- (w2);
			\draw (u2) -- (vw);
		\end{tikzpicture}
	\end{center}
	It is clear that $G'$ is a $3$-regular graph.

	As before, wee claim that $G$ admits a vertex cover of size at most $k$ if and only if $G'$ admits a vertex cover of size at most $k + 2 g$, where $g = \abs{V_2}$ is the number of gadgets added to $G$ to form $G'$. First, suppose $A \subset V$ is a vertex cover of $G$ of size $k$, and let $A_2 = A \cap V_2$. Now form $A'$ as follows:
	\begin{equation*}
		A' = (A \setminus A_2) \cup \set{v_w, v_n, v_e \sucht v \in A_2} \cup \set{v_n, v_s \sucht v \in V_2 \setminus A_2}.
	\end{equation*}
	That is, for each vertex $v \in A_2$, we add the three vertices $v_w, v_n, v_e$ to $A'$, and for each vertex $v \in V_2 \setminus A_2$, we add the vertices $v_n, v_s$ to $A'$. Note that $\abs{A'} = \abs{A} + 2 g$. Further, $A'$ is a vertex cover: if $v \in A_2$, then adding $v_w$ and $v_e$ to $A'$ ensures that the edges $\set{u, v_w}$ and $\set{v_e, w}$ are covered, while if $v \in V_2 \setminus A_2$, then these edges are already covered in $A$ by $u$ and $w$, respectively.

	Conversely, suppose $G'$ admits a vertex cover $C'$ of size $k + 2g$. Form the set $A$ as follows:
	\begin{equation}\label{eqn:3-mvc}
		A = (A' \cap V) \cup \set{v \in V_2 \sucht v_w \in A' \text{ or } v_e \in A'}.
	\end{equation}
	That is, we take $A$ to be $A'$'s restriction to $V$, union any vertices $v$ such that $v_w$ or $v_e$ is contained in $A'$. To see that $A$ is a vertex cover of $G$, note that including $\set{v \in V_2 \sucht v_w \in A' \text{ or } v_e \in A'}$ in $A$ ensures that all edges incident to vertices $v \in V_2$ are covered, while all other edges were covered because $A'$ was a vertex cover of $G'$. We note that $\abs{A} \leq \abs{A'} - 2g$. This is because any vertex cover $A'$ must contain at least $3$ elements from $V_v$ unless $A' \cap V_v = \set{v_n, v_s}$, in which case $v$ is not added to $A$. Thus, each $H_v$ corresponds to a net loss of at least $2$ vertices from $A'$.

	Finally, by the claim, if $A'$ is a minimum vertex cover for $G'$, then the $A$ as in~(\ref{eqn:3-mvc}) is a minimum vertex cover for $G$.
\end{proof}